\newcommand{\inte}[2][0]{\ema{\left\llbracket #1,#2 \right\rrbracket}}
\newcommand{\kth}[1]{\ema{#1^{\mathrm{th}}}}
\newcommand{\kst}[1]{\ema{#1^{\mathrm{st}}}}
\newcommand{\pro}[1]{\ema{(\mathcal{P}_{#1}})}
\newtheorem{theorem}{Theorem}
\newtheorem{lemma}{Lemma}
\newtheorem{corollary}{Corollary}
\newtheorem{property}{Property}
\newtheorem{definition}{Definition}
\theoremstyle{definition}
\newtheorem{remark}{Remark}
\newcommand{\ct}{\ema{P}}
\newcommand{\ema}[1]{\ensuremath{#1}\xspace}
\newcommand{\trl}{\ema{\ct_{rl}}}
\newcommand{\scas}{\ema{\mathit{cc}}}
\newcommand{\fcas}{\ema{\mathit{cc}}}
\newcommand{\mem}{\ema{\mathit{rc}}}
\newcommand{\calrl}{\ema{\mathit{cw}}}
\newcommand{\rint}[4]{\ema{\int_{#1}^{#2} #3 \, \mathit{d#4}}}
\newcommand{\expansion}[1]{\ema{e\left(#1\right)}}
\newcommand{\expa}{\ema{e}}
\newcommand{\expansionp}[1]{\ema{e'\left(#1\right)}}
\newcommand{\trlo}{\ema{\trl^{(0)}}}
\newcommand{\exppl}{(\texttt{+})}
\newcommand{\expmi}{(\texttt{-})}
\newcommand{\cw}{\ema{\mathit{cw}}}
\newcommand{\pw}{\ema{\mathit{pw}}}
\newcommand{\rc}{\ema{\mathit{rc}}}
\newcommand{\cc}{\ema{\mathit{cc}}}
\newcommand{\rlw}{\ema{\mathit{rlw}}}
\newcommand{\rlwp}{\ema{\rlw^{\expmi}}}
\newcommand{\psiz}{\ema{\pw^{\exppl}}}
\newcommand{\rlsiz}{\ema{\rlw^{\exppl}}}
\newcommand{\thru}{\ema{T}}
\newcommand{\ctot}{\ema{P}}
\newcommand{\thr}[1]{\ema{\mathcal{T}_{#1}}}
\newcommand{\shiftf}{\ema{\mathit{d}}}
\newcommand{\shift}[1]{\ema{\mathit{d}\left(#1\right)}}
\newcommand{\rl}{retry loop\xspace}
\newcommand{\rls}{retry loops\xspace}
\newcommand{\RLs}{Retry Loops\xspace}
\newcommand{\re}{retry\xspace}
\newcommand{\res}{retries\xspace}
\newcommand{\REs}{Retries\xspace}
\newcommand{\ps}{parallel section\xspace}
\newcommand{\pss}{parallel sections\xspace}
\newcommand{\ds}{data structure\xspace}
\newcommand{\dss}{data structures\xspace}
\newcommand{\casexp}{{\it Compare-And-Swap}\xspace}
\newcommand{\cas}{{\it CAS}\xspace}
\newcommand{\rf}{{\it Read}\xspace}
\newcommand{\faa}{{\it Fetch-and-Increment}\xspace}
\newcommand{\delmin}{\FuncSty{DeleteMin}\xspace}
\newcommand{\enqop}{\FuncSty{Enqueue}\xspace}
\newcommand{\deqop}{\FuncSty{Dequeue}\xspace}
\newcommand{\popop}{\FuncSty{Pop}\xspace}
\newcommand{\pushop}{\FuncSty{Push}\xspace}
\newcommand{\incop}{\FuncSty{Increment}\xspace}
\newcommand{\decop}{\FuncSty{Decrement}\xspace}
\newcommand{\casop}[1]{\FuncSty{CAS\textsubscript{#1}}\xspace}
\newcommand{\caca}{wasted \re}
\newcommand{\cacas}{wasted \res}
\newcommand{\flo}{\ema{f^{\exppl}}}
\newcommand{\fup}{\ema{f^{\expmi}}}
\newcommand{\tlo}{\ema{\thru^{\exppl}}}
\newcommand{\tup}{\ema{\thru^{\expmi}}}
\newcommand{\ghz}[1]{\ema{#1\,\mathrm{GHz}}}
\newcommand{\megb}[1]{\ema{#1\,\mathrm{MB}}}
\newcommand{\ie}{\textit{i.e.}\xspace}
\newcommand{\etal}{\textit{et al.}\xspace}
\newcommand{\eg}{\textit{e.g.}\xspace}
\newcommand{\etc}{\textit{etc.}\xspace}
\newcommand\rr[1]{#1}
\newcommand\pp[1]{}
\newcommand\tra[1]{}
\newcommand\falseparagraph[1]{{\bf #1:}\xspace}
\newcommand\hardcon{hardware conflict\xspace}
\newcommand\hardcons{hardware conflicts\xspace}
\newcommand\Hardcons{Hardware conflicts\xspace}
\newcommand\logcons{logical conflicts\xspace}
\newcommand\Logcons{Logical conflicts\xspace}
\newcommand{\removelatexerror}{\let\@latex@error\@gobble}
\newcommand{\abstalgo}{
\removelatexerror
\begin{procedure}[H]
\SetKwData{pet}{execution\_time}
\SetKwData{pdo}{done}
\SetKwData{psucc}{success}
\SetKwData{pcur}{current}
\SetKwData{pnew}{new}
\SetKwData{pacp}{AP}
\SetKwData{pttot}{t}
\SetKwFunction{pinit}{Initialization}
\SetKwFunction{ppw}{Parallel\_Work}
\SetKwFunction{pcw}{Critical\_Work}
\SetKwFunction{pread}{Read}
\SetKwFunction{pcas}{CAS}

\SetAlgoLined
\pinit{}\;\nllabel{alg:li-in}
\While{! \pdo}{
\ppw{}\;\nllabel{alg:li-ps}
\While{! \psucc}{\nllabel{alg:li-bcs}
\pcur $\leftarrow$ \pread{\pacp}\;\nllabel{alg:li-bbcs}
\pnew $\leftarrow$ \pcw{\pcur}\;
\psucc $\leftarrow$ \pcas{\pacp, \pcur, \pnew}\;\nllabel{alg:li-ecs}
}
}
\caption{AbstractAlgorithm()\label{alg:gen-name}}
\end{procedure}
}
\newcommand{\expatikz}{
\def\minh{1.5}
\def\maxh{2.5}
\def\midh{2}
\draw (-1,\minh) rectangle (3.5,\maxh) node[pos=.5] {\rf \& \calrl};
\draw (6.5,\minh) rectangle (11,\maxh) node[midway, align=center] {Previously\\expanded CAS};
\draw [dashed, red](3.5,\minh) rectangle (6.5,\maxh) node[pos=.5] {Expansion};
\def\minh{3}
\def\maxh{3.5}
\def\midh{3.25}
\draw (3,\minh) rectangle (6.5,\maxh) node[pos=.5] {CAS};
\draw[->] (3,\minh) -- ++(0,-.5);
\draw[dotted] (6.5,\minh) -- ++(0,-.5);
\end{tikzpicture}
}
\newcommand{\tikzlemone}[1]{
\begin{center}
\begin{tikzpicture} [scale=#1, font=\small, thdec/.style={ thick, decoration={complete sines, segment length=.06cm,amplitude=\amp},decorate},par/.style={blue}, csu/.style={green}, cfa/.style={red}, ini/.style={black}, ]
\draw[draw=none, use as bounding box]  (0,\yt[3]-.25) rectangle (18.9,\yt[1]+.25);
\def\prevend{0}
\newarray\endxt
\expandarrayelementtrue
\readarray{endxt}{0&0&0&0}
\def\cle{1}
\node [left] at (0,\yt[1]) {\thr{0}};
\node [left] at (0,\yt[2]) {\thr{1}};
\node [left] at (0,\yt[3]) {\thr{2}};
\pgfmathparse{2.000000*\cle}
\edef\ple{\pgfmathresult}
\pgfmathparse{0.000000*\cle}
\extth{1}{\pgfmathresult}{ini}
\extth{1}{\cle}{csu}
\extth{1}{\ple}{par}
\extth{1}{\cle}{csu}
\extth{1}{\ple}{par}
\extth{1}{\cle}{cfa}
\extth{1}{\cle}{csu}
\extth{1}{\ple}{par}
\extth{1}{\cle}{cfa}
\extth{1}{\cle}{csu}
\extth{1}{\ple}{par}
\extth{1}{\cle}{cfa}
\extth{1}{\cle}{csu}
\extth{1}{\ple}{par}
\extth{1}{\cle}{cfa}
\pgfmathparse{1.500000*\cle}
\extth{2}{\pgfmathresult}{ini}
\extth{2}{\cle}{csu}
\extth{2}{\ple}{par}
\extth{2}{\cle}{csu}
\extth{2}{\ple}{par}
\extth{2}{\cle}{cfa}
\extth{2}{\cle}{csu}
\extth{2}{\ple}{par}
\extth{2}{\cle}{cfa}
\extth{2}{\cle}{csu}
\extth{2}{\ple}{par}
\extth{2}{\cle}{cfa}
\extth{2}{\cle}{csu}
\extth{2}{\ple}{par}
\pgfmathparse{5.750000*\cle}
\extth{3}{\pgfmathresult}{ini}
\extth{3}{\cle}{csu}
\extth{3}{\ple}{par}
\extth{3}{\cle}{cfa}
\extth{3}{\cle}{csu}
\extth{3}{\ple}{par}
\extth{3}{\cle}{cfa}
\extth{3}{\cle}{csu}
\extth{3}{\ple}{par}
\extth{3}{\cle}{cfa}
\extth{3}{\cle}{csu}
\extth{3}{\ple}{par}

\draw[draw=none,fill=white]  (18.9,\yt[3]-.25) rectangle (21,\yt[1]+.25);
\end{tikzpicture}
}
\newcommand{\tikzlemthree}[1]{
\begin{center}
\begin{tikzpicture} [scale=#1, font=\small, thdec/.style={ thick, decoration={complete sines, segment length=.06cm,amplitude=\amp},decorate},par/.style={blue}, csu/.style={green}, cfa/.style={red}, ini/.style={black}, ]
\draw[draw=none, use as bounding box]  (0,\yt[4]-.25) rectangle (26.6,\yt[1]+.25);
\def\prevend{0}
\newarray\endxt
\expandarrayelementtrue
\readarray{endxt}{0&0&0&0}
\def\cle{1}
\node [left] at (0,\yt[1]) {\thr{0}};
\node [left] at (0,\yt[2]) {\thr{1}};
\node [left] at (0,\yt[3]) {\thr{2}};
\node [left] at (0,\yt[4]) {\thr{3}};
\pgfmathparse{3.125000*\cle}
\edef\ple{\pgfmathresult}
\pgfmathparse{0.000000*\cle}
\extth{1}{\pgfmathresult}{ini}
\extth{1}{\cle}{csu}
\extth{1}{\ple}{par}
\extth{1}{\cle}{cfa}
\extth{1}{\cle}{csu}
\extth{1}{\ple}{par}
\extth{1}{\cle}{cfa}
\extth{1}{\cle}{cfa}
\extth{1}{\cle}{csu}
\extth{1}{\ple}{par}
\extth{1}{\cle}{cfa}
\extth{1}{\cle}{csu}
\extth{1}{\ple}{par}
\extth{1}{\cle}{cfa}
\extth{1}{\cle}{csu}
\extth{1}{\ple}{par}
\extth{1}{\cle}{cfa}
\pgfmathparse{0.500000*\cle}
\extth{2}{\pgfmathresult}{ini}
\extth{2}{\cle}{cfa}
\extth{2}{\cle}{csu}
\extth{2}{\ple}{par}
\extth{2}{\cle}{cfa}
\extth{2}{\cle}{csu}
\extth{2}{\ple}{par}
\extth{2}{\cle}{cfa}
\extth{2}{\cle}{cfa}
\extth{2}{\cle}{cfa}
\extth{2}{\cle}{csu}
\extth{2}{\ple}{par}
\extth{2}{\cle}{cfa}
\extth{2}{\cle}{csu}
\extth{2}{\ple}{par}
\extth{2}{\cle}{cfa}
\extth{2}{\cle}{csu}
\extth{2}{\ple}{par}
\pgfmathparse{2.250000*\cle}
\extth{3}{\pgfmathresult}{ini}
\extth{3}{\cle}{cfa}
\extth{3}{\cle}{csu}
\extth{3}{\ple}{par}
\extth{3}{\cle}{cfa}
\extth{3}{\cle}{csu}
\extth{3}{\ple}{par}
\extth{3}{\cle}{csu}
\extth{3}{\ple}{par}
\extth{3}{\cle}{cfa}
\extth{3}{\cle}{csu}
\extth{3}{\ple}{par}
\extth{3}{\cle}{cfa}
\extth{3}{\cle}{csu}
\extth{3}{\ple}{par}
\pgfmathparse{9.875000*\cle}
\extth{4}{\pgfmathresult}{ini}
\extth{4}{\cle}{csu}
\extth{4}{\ple}{par}
\extth{4}{\cle}{cfa}
\extth{4}{\cle}{csu}
\extth{4}{\ple}{par}
\extth{4}{\cle}{cfa}
\extth{4}{\cle}{csu}
\extth{4}{\ple}{par}
\extth{4}{\cle}{cfa}
\extth{4}{\cle}{csu}
\extth{4}{\ple}{par}

\draw[draw=none,fill=white]  (26.6,\yt[4]-.25) rectangle (29.5,\yt[1]+.25);
\end{tikzpicture}
}
\newcommand{\tikzlemtwo}[1]{
\begin{center}
\begin{tikzpicture} [scale=#1, font=\small, thdec/.style={ thick, decoration={complete sines, segment length=.06cm,amplitude=\amp},decorate},par/.style={blue}, csu/.style={green}, cfa/.style={red}, ini/.style={black}, ]
\draw[draw=none]  (0,\yt[4]-.25) rectangle (27.1,\yt[1]+.25);
\def\prevend{0}
\newarray\endxt
\expandarrayelementtrue
\readarray{endxt}{0&0&0&0}
\def\cle{1}
\node [left] at (0,\yt[1]) {\thr{0}};
\node [left] at (0,\yt[2]) {\thr{1}};
\node [left] at (0,\yt[3]) {\thr{2}};
\node [left] at (0,\yt[4]) {\thr{3}};
\pgfmathparse{2.500000*\cle}
\edef\ple{\pgfmathresult}
\pgfmathparse{0.000000*\cle}
\extth{1}{\pgfmathresult}{ini}
\extth{1}{\cle}{csu}
\extth{1}{\ple}{par}
\extth{1}{\cle}{cfa}
\extth{1}{\cle}{csu}
\extth{1}{\ple}{par}
\extth{1}{\cle}{cfa}
\extth{1}{\cle}{csu}
\extth{1}{\ple}{par}
\extth{1}{\cle}{csu}
\extth{1}{\ple}{par}
\extth{1}{\cle}{cfa}
\extth{1}{\cle}{csu}
\extth{1}{\ple}{par}
\extth{1}{\cle}{cfa}
\extth{1}{\cle}{csu}
\extth{1}{\ple}{par}
\extth{1}{\cle}{cfa}
\extth{1}{\cle}{csu}
\extth{1}{\ple}{par}
\pgfmathparse{0.125000*\cle}
\extth{2}{\pgfmathresult}{ini}
\extth{2}{\cle}{cfa}
\extth{2}{\cle}{csu}
\extth{2}{\ple}{par}
\extth{2}{\cle}{cfa}
\extth{2}{\cle}{csu}
\extth{2}{\ple}{par}
\extth{2}{\cle}{cfa}
\extth{2}{\cle}{csu}
\extth{2}{\ple}{par}
\extth{2}{\cle}{csu}
\extth{2}{\ple}{par}
\extth{2}{\cle}{cfa}
\extth{2}{\cle}{csu}
\extth{2}{\ple}{par}
\extth{2}{\cle}{cfa}
\extth{2}{\cle}{csu}
\extth{2}{\ple}{par}
\extth{2}{\cle}{cfa}
\pgfmathparse{1.875000*\cle}
\extth{3}{\pgfmathresult}{ini}
\extth{3}{\cle}{cfa}
\extth{3}{\cle}{csu}
\extth{3}{\ple}{par}
\extth{3}{\cle}{cfa}
\extth{3}{\cle}{csu}
\extth{3}{\ple}{par}
\extth{3}{\cle}{cfa}
\extth{3}{\cle}{cfa}
\extth{3}{\cle}{cfa}
\extth{3}{\cle}{cfa}
\extth{3}{\cle}{cfa}
\extth{3}{\cle}{csu}
\extth{3}{\ple}{par}
\extth{3}{\cle}{cfa}
\extth{3}{\cle}{csu}
\extth{3}{\ple}{par}
\extth{3}{\cle}{cfa}
\extth{3}{\cle}{csu}
\extth{3}{\ple}{par}
\pgfmathparse{11.250000*\cle}
\extth{4}{\pgfmathresult}{ini}
\extth{4}{\cle}{csu}
\extth{4}{\ple}{par}
\extth{4}{\cle}{csu}
\extth{4}{\ple}{par}
\extth{4}{\cle}{cfa}
\extth{4}{\cle}{csu}
\extth{4}{\ple}{par}
\extth{4}{\cle}{cfa}
\extth{4}{\cle}{csu}
\extth{4}{\ple}{par}

\draw[draw=none,fill=white]  (27.1,\yt[4]-.25) rectangle (29.6,\yt[1]+.25);
\end{tikzpicture}
}
\newcommand{\tikztoy}{
\def\cle{1.8}
\def\ple{3.75}
\def\prevend{0}
\newarray\endxt
\expandarrayelementtrue
\readarray{endxt}{0&0&0&0}
\newarray\wpl

\draw[<->] (9.1,\yt[0]) -- ++ (3*\cle+\ple,0)  node[midway,fill=white] {Cycle};
\node [left] at (0,\yt[1]) {\thr{0}};
\node [left] at (0,\yt[2]) {\thr{1}};
\node [left] at (0,\yt[3]) {\thr{2}};
\node [left] at (0,\yt[4]) {\thr{3}};

\extth{1}{\cle}{csu}

\extth{2}{1.05}{ini} \extth{2}{\cle}{cfa}
\extth{2}{\cle}{csu}

\extth{3}{1.2}{ini}
\extth{3}{\cle}{cfa}
\extth{3}{\cle}{cfa}
\extth{3}{\cle}{csu}

\extth{4}{6.75}{ini}
\extth{4}{\cle}{csu}

\extth{1}{\ple}{par}
\extth{1}{\cle}{cfa}
\extth{1}{\cle}{cfa}
\extth{1}{\cle}{csu}

\extth{2}{\ple}{par}
\extth{2}{\cle}{cfa}
\extth{2}{\cle}{cfa}
\extth{2}{\cle}{csu}

\extth{3}{\ple}{par}
\extth{3}{\cle}{cfa}
\extth{3}{\cle}{cfa}
\extth{3}{\cle}{csu}

\extth{4}{\ple}{par}
\extth{4}{\cle}{cfa}
\extth{4}{\cle}{cfa}
\extth{4}{\cle}{csu}

\extth{1}{\ple}{par}
\extth{1}{\cle}{cfa}
\extth{1}{\cle}{cfa}
\extth{1}{\cle}{csu}

\extth{2}{\ple}{par}
\extth{2}{\cle}{cfa}
\extth{2}{\cle}{cfa}
\extth{2}{\cle}{csu}

\extth{3}{\ple}{par}
\extth{3}{\cle}{cfa}
\extth{3}{\cle}{cfa}
\extth{3}{\cle}{csu}

\extth{4}{\ple}{par}
\extth{4}{\cle}{cfa}
\extth{4}{\cle}{cfa}
\extth{4}{\cle}{csu}

\extth{1}{\ple}{par}
\extth{1}{\cle}{cfa}

\extth{2}{\ple}{par}

\extth{3}{2}{par}

\draw[draw=none,fill=white]  (25.45,\yt[0]) rectangle (28,\yt[1]+.25);

\end{tikzpicture}
\caption{Execution with one \caca, and one inevitable failure\label{fig:ex-f2}}
}
\newcommand\posrem[2]{#2}
\title{Analyzing the Performance of Lock-Free \\Data Structures: A Conflict-based Model}
\author{Aras Atalar, Paul Renaud-Goud and Philippas Tsigas\\
 Chalmers University of Technology\\
\url{{aaras|goud|tsigas}@chalmers.se}\\}}
\institute{Chalmers University of Technology\\
S-412 96 G\"oteborg, Sweden\\
\email{\{aaras|goud|tsigas\}@chalmers.se}}}
\begin{document}

\rr{\setlength{\abovecaptionskip}{10pt}}

\SetFuncSty{textsf}

\maketitle

\begin{abstract}

\pp{This paper considers the modeling and the analysis of the performance
of lock-free concurrent data structures that can be represented as
linear combinations of fixed size \rls.

Our main contribution is a new way of modeling and analyzing a general
class of lock-free algorithms, achieving predictions of throughput
that are close to what we observe in practice.  We emphasize two kinds
of conflicts that shape the performance: 
(i) \hardcons, due to concurrent calls to atomic primitives;
(ii) \logcons, caused by simultaneous operations on the shared \ds.

We propose also a common framework that enables a fair comparison
between lock-free implementations by covering the whole contention
domain, and comes with a method for calculating a good back-off
strategy.

Our experimental results, based on a set of widely used concurrent
data structures and on abstract lock-free designs, show that our
analysis follows closely the actual code behavior.}

\rr{This paper considers the modeling and the analysis of the performance
of lock-free concurrent data structures.  Lock-free designs employ an
optimistic conflict control mechanism, allowing several processes to
access the shared data object at the same time.  They guarantee that
at least one concurrent operation finishes in a finite number of its
own steps regardless of the state of the operations.  Our analysis
considers such lock-free data structures that can be represented as
linear combinations of fixed size \rls.

Our main contribution is a new way of modeling and analyzing a general
class of lock-free algorithms, achieving predictions of throughput
that are close to what we observe in practice. We emphasize
two kinds of conflicts that shape
the performance: (i) \hardcons, due to concurrent calls to atomic primitives;
(ii) \logcons, caused by simultaneous operations on the shared \ds.

We show how to deal with these hardware and logical conflicts separately,
and how to combine them, so as to calculate the throughput of lock-free
algorithms.
We propose also a common framework that enables a fair
comparison between lock-free implementations by covering the whole contention
domain, together with a better understanding of the performance impacting
factors.
This part of our analysis comes with
a method for calculating a good back-off strategy to
finely tune the performance of a lock-free algorithm. Our experimental
results, based on a set of widely used concurrent data structures and
on abstract lock-free designs, show that our analysis follows closely
the actual code behavior.
}

\end{abstract}

\rr{\newpage\tableofcontents\newpage}

\section{Introduction}
\label{sec:intro}

Lock-free programming provides highly concurrent access to data and
has been increasing its footprint in industrial settings.  Providing a
modeling and an analysis framework capable of describing the practical
performance of lock-free algorithms is an essential, missing resource
necessary to the parallel programming and algorithmic research
communities in their effort to build on previous intellectual efforts.
The definition of lock-freedom mainly guarantees that at least one
concurrent operation on the \ds finishes in a finite number of its own
steps, regardless of the state of the operations. On the
individual operation level, lock-freedom cannot guarantee that an
operation will not starve.
\posrem{The analysis frameworks that currently
exist in the literature focus on such worst-case behavior and are far
from capturing the behavior observed in practice.}{}

The goal of this paper is to provide a way to model and analyze the
practically observed performance of lock-free \dss.
In the literature, the common performance measure of a lock-free \ds is
the throughput, \ie the number of successful operations per unit of time.
It is obtained while threads are accessing the \ds according to an
access pattern that interleaves local work between calls to consecutive 
operations on the \ds.
Although this access pattern to the data structure is significant,
there is no consensus in the literature on what access to be used when
comparing two data structures.
So, the amount of local work 
(that we will refer as parallel work for the rest of the paper)
could be
constant (\cite{lf-queue-michael,pq-skiplist-constant}), uniformly 
distributed (\cite{scalable-stack-uniform}, \cite{count-moir}), exponentially
distributed (\cite{Val94}, \cite{pq-survey-exponential}), null 
(\cite{future-ds-nonexist,upp-prio-que}), \etc, and more
questionably, the average amount is rarely scanned, which leads to a
partial covering of the contention domain.

We propose here a common framework enabling a fair comparison between
lock-free \dss, while exhibiting the main phenomena that drive
performance, and particularly the contention, which leads to different
kinds of conflicts. As this
is the first step in this
direction, we want to deeply analyze the core of the problem, without
impacting factors being diluted within a probabilistic
smoothing. Therefore, we choose a constant local work, hence constant access rate to the \dss.
In addition to the prediction of the \ds performance, our model
provides a good back-off strategy, that achieves the peak performance
of a lock-free algorithm.

\hyphenation{Abstract-Algorithm}

Two kinds of conflict appear during the execution of a lock-free
algorithm, both of them leading to additional work. \Hardcons occur
when concurrent operations call atomic primitives on the same data:
these calls collide and conduct to stall time, that we name here {\it
  expansion}. \Logcons take place if concurrent operations overlap:
because of the lock-free nature of the algorithm, several concurrent
operations can run simultaneously, but only one \re can logically
succeed. We show that the additional work produced by the failures is
not necessarily harmful for the system-wise performance.

We then show how throughput
can be computed by connecting these two key factors in an iterative
way. We start by estimating the expansion probabilistically, and
emulate the effect of stall time introduced by the \hardcons as extra
work added to each thread. Then
we estimate the number of failed operations,
that in turn lead to additional extra work,
by computing again the expansion on a system setting where those two
new amounts of work have been incorporated, and reiterate the
process; the convergence is ensured by a fixed-point search.

We consider the class of lock-free algorithms that can be modeled as a
linear composition of fixed size \rls.  This class covers numerous
extensively used lock-free designs such as stacks~\cite{lf-stack}
(\popop, \pushop), queues~\cite{lf-queue-michael} (\enqop, \deqop),
counters~\cite{count-moir} (\incop, \decop) and priority
queues~\cite{upp-prio-que} (\delmin).

To evaluate the accuracy of our model and analysis framework, we
performed experiments both on synthetic tests, that capture
a wide range of possible abstract algorithmic designs,
and on 
several reference implementations of
extensively studied lock-free \dss.
Our evaluation results reveal that our model is able to capture the
behavior of all the synthetic and real designs for all different
numbers of threads and sizes of parallel work (consequently also
contention). \posrem{Our model follows the performance behavior of the \dss
exactly in low contention, when our lower and upper bounds meet in one
line with the observed behavior; and follows closely also the
performance in high contention. }{}We also evaluate the use of our
analysis as a tool for tuning the performance of lock-free code by
selecting the appropriate back-off strategy that will maximize
throughput by comparing our method with against widely known back-off
policies, namely linear and exponential.

The rest of the paper is organized as follows.  We discuss related
work in Section~\ref{sec:rel}, then the problem is formally described
in Section~\ref{sec:ps}. We consider the \logcons in the absence of
\hardcons
in Section~\ref{sec:wt}, while in Section~\ref{sec:exp-glue},
we firstly show how to compute the expansion, then combine 
hardware and \logcons
to obtain the final throughput estimate.  We
describe the experimental results in Section~\ref{sec:xp}.

\section{Related Work}
\label{sec:rel}

Anderson \etal~\cite{anders-rt} evaluated the performance of lock-free objects
in a single processor real-time system by emphasizing the impact of \rl
interference. Tasks can be preempted during the \rl execution, which can lead to
interference, and consequently to an inflation in \rl execution due to \res. They obtained
upper bounds for the number of interferences under various scheduling schemes
for periodic real-time tasks.

Intel~\cite{intel-emp} conducted an empirical study to illustrate performance and
scalability of locks. They showed that the critical section size, the time interval
between releasing and re-acquiring the lock (that is similar to our parallel section size)
and number of threads contending the lock are vital parameters.

Failed \res do not only lead to useless effort but also degrade the performance of
successful ones by contending the shared resources. Alemany \etal~\cite{alemany-os} have
pointed out this fact, that is in accordance with our two key factors, and, without trying
to model it, have mitigated those effects by designing non-blocking algorithms with
operating system support.

Alistarh \etal~\cite{ali-same} have studied the same class of lock-free structures that
we consider in this paper. The analysis is done in terms of scheduler steps, in a 
system where only one thread can be scheduled (and can then 
run) at each step. If compared with execution time, this is 
particularly appropriate to a system with a single processor and 
several threads, or to a system where the instructions of the threads 
cannot be done in parallel (\eg multi-threaded program on a 
multi-core processor with only read and write on the same cache line 
of the shared memory). In our paper, the execution is evaluated 
in terms of processor cycles, strongly related to the execution 
time. In addition, the ``parallel work'' and the ``critical work'' can be 
done in parallel, and we only consider retry-loops with one Read and 
one CAS, which are serialized. 
In addition, they bound the asymptotic expected system latency (with a 
big O, when the number of threads tends to infinity), while in our 
paper we estimate the throughput (close to the inverse of system 
latency) for any number of threads.

\section{Problem Statement}
\label{sec:ps}

\subsection{Running Program and Targeted Platform}
\label{sec:prog-plat}

\rr{\begin{figure}[t!]
\abstalgo
\caption{Thread procedure}\label{alg:gen-nb}
\end{figure}
}

\pp{
\setlength{\textfloatsep}{0\baselineskip plus 0.2\baselineskip minus 0.2\baselineskip}
\begin{figure}
\begin{minipage}[b]{.5\textwidth}
\begin{scriptsize}
\abstalgo
\end{scriptsize}
\caption{Thread procedure}\label{alg:gen-nb}
\end{minipage}\hfill\begin{minipage}[b]{.45\textwidth}
\begin{tikzpicture} [scale=0.5, font=\scriptsize]
\expatikz
\caption{Expansion}\label{fig:expansion}
\end{minipage}
\end{figure}
}

In this paper, we aim at evaluating the throughput of a multi-threaded algorithm that is based on the utilization of a shared lock-free \ds. Such a
program can be abstracted by the Procedure~\ref{alg:gen-name} (see
Figure~\ref{alg:gen-nb}) that represents the skeleton of the function which is
called by each spawned thread. It is decomposed in two main phases: the {\it \ps},
represented on line~\ref{alg:li-ps}, and the {\it \rl}, from line~\ref{alg:li-bcs} to
line~\ref{alg:li-ecs}. A {\it \re} starts at line~\ref{alg:li-bbcs} and ends at
line~\ref{alg:li-ecs}.

As for line~\ref{alg:li-in}, the function \pinit
shall be seen as an abstraction of the delay between the spawns of the threads,
that is expected not to be null, even when a barrier is used. We then consider that the
threads begin at the exact same time, but have different initialization times.

The \ps is the part of the code where the thread does not access the shared
\ds; the work that is performed inside this \ps can possibly depend on the value that has been read from the \ds, \eg in the case of processing an element
that has been dequeued from a FIFO (First-In-First-Out) queue.

In each \re, a thread tries to modify the \ds, and does not exit the \rl until it
has successfully modified the \ds. It does that by firstly reading the access point \DataSty{AP} of the
\ds, then according to the value that has been read, and possibly to other previous
computations that occurred in the past, the thread prepares the new desired value as an
access point of the \ds. Finally, it atomically tries to perform the change through a call
to the \casexp (\cas) primitive. If it succeeds, \ie if the access point has not been changed by
another thread between the first \rf and the \cas, then it goes to the next \ps, otherwise
it repeats the process. The \rl is composed of at least one \re, and we number the \res
starting from $0$, since the first iteration of the \rl is actually not a \re, but
a try.

\pgfdeclaredecoration{complete sines}{initial}
{
    \state{initial}[
        width=+0pt,
        next state=sine,
        persistent precomputation={\pgfmathsetmacro\matchinglength{
            \pgfdecoratedinputsegmentlength / int(\pgfdecoratedinputsegmentlength/\pgfdecorationsegmentlength)}
            \setlength{\pgfdecorationsegmentlength}{\matchinglength pt}
        }] {}
    \state{sine}[width=\pgfdecorationsegmentlength]{
        \pgfpathsine{\pgfpoint{0.25\pgfdecorationsegmentlength}{0.5\pgfdecorationsegmentamplitude}}
        \pgfpathcosine{\pgfpoint{0.25\pgfdecorationsegmentlength}{-0.5\pgfdecorationsegmentamplitude}}
        \pgfpathsine{\pgfpoint{0.25\pgfdecorationsegmentlength}{-0.5\pgfdecorationsegmentamplitude}}
        \pgfpathcosine{\pgfpoint{0.25\pgfdecorationsegmentlength}{0.5\pgfdecorationsegmentamplitude}}
}
    \state{final}{}
}
\pgfdeclarelayer{background}
\pgfdeclarelayer{foreground}
\pgfsetlayers{background,main,foreground}

\def\amp{5}

\newcounter{wit}
\setcounter{wit}{1}

\newcommand{\extth}[3]{\checkendxt(#1)
\edef\prev{\cachedata}
\pgfmathparse{\prev+#2}
\endxt(#1)={\pgfmathresult}
\checkendxt(#1)
\edef\nene{\cachedata}
\draw[-,very thick,#3] (\prev,\yt[#1]) -- (\nene,\yt[#1]);
\begin{pgfonlayer}{foreground}
\draw[orange] (\prev,\yt[#1]-.25) -- ++(0,.5);
\end{pgfonlayer}
}

\rr{\def\yt{{-5.5,0,-1.5,-3,-4.5}}}
\pp{\def\yt{{-3,0,-.8,-1.6,-2.4}}}

\rr{\begin{figure}
\begin{tikzpicture} [scale=0.6, font=\small, thdec/.style={ thick, decoration={complete sines, segment length=.1cm,amplitude=\amp},decorate},
par/.style={blue}, csu/.style={green}, cfa/.style={red}, ini/.style={black}, ]
\tikztoy
\end{figure}}

\setcounter{wit}{1}

\rr{\begin{figure}\begin{center}
\hspace*{1cm}\begin{tikzpicture} [scale=0.6, font=\small, thdec/.style={thick, decoration={complete sines, segment length=.1cm,amplitude=\amp},decorate}, par/.style={blue}, csu/.style={green}, cfa/.style={red}, ini/.style={black}, ]
\def\cle{2.4}
\def\ple{5}
\def\prevend{0}
\newarray\endxt
\expandarrayelementtrue
\readarray{endxt}{0&0&0&0}
\newarray\wpl

\draw[<->] (.6+3*\cle,\yt[0]) -- ++ (2*\cle+\ple,0)  node[midway,fill=white] {Cycle};
\node [left] at (0,\yt[1]) {\thr{0}};
\node [left] at (0,\yt[2]) {\thr{1}};
\node [left] at (0,\yt[3]) {\thr{2}};
\node [left] at (0,\yt[4]) {\thr{3}};

\extth{1}{\cle}{csu}

\extth{2}{.2}{ini}
\extth{2}{\cle}{cfa}
\extth{2}{\cle}{csu}

\extth{3}{.4}{ini}
\extth{3}{\cle}{cfa}
\extth{3}{\cle}{cfa}
\extth{3}{\cle}{csu}

\extth{4}{.6}{ini}
\extth{4}{\cle}{cfa}
\extth{4}{\cle}{cfa}
\extth{4}{\cle}{cfa}
\extth{4}{\cle}{csu}

\extth{1}{\ple}{par}
\extth{1}{\cle}{cfa}
\extth{1}{\cle}{csu}

\extth{2}{\ple}{par}
\extth{2}{\cle}{cfa}
\extth{2}{\cle}{csu}

\extth{3}{\ple}{par}
\extth{3}{\cle}{cfa}
\extth{3}{\cle}{csu}

\extth{4}{\ple}{par}
\extth{4}{\cle}{cfa}
\extth{4}{\cle}{csu}

\extth{1}{\ple}{par}
\extth{1}{\cle}{cfa}
\extth{1}{\cle}{csu}

\extth{2}{\ple}{par}
\extth{2}{\cle}{cfa}

\extth{3}{\ple}{par}

\extth{4}{\ple}{par}

\draw[draw=none,fill=white]  (22,\yt[4]-.25) rectangle (25.5,\yt[1]+.25);

\end{tikzpicture}
\caption{Execution with minimum number of failures\label{fig:ex-f1}}
\end{center}\end{figure}}

\rr{
We analyze the behavior of \ref{alg:gen-name} from a
throughput
perspective, which 
is defined as the number of successful \ds operations per unit of
time. In the context of Procedure~\ref{alg:gen-name}, it is equivalent to the
number of successful \cas{}s.

\medskip}

The throughput of the lock-free algorithm\pp{, \ie the number of successful \ds operations per unit of
time}, that we denote by \thru, is impacted by several
parameters.
\begin{compactitem}
\item {\it Algorithm parameters}: the amount of work inside a call to
  \FuncSty{Parallel\_Work} (resp. \FuncSty{Critical\_Work}) denoted by \pw
  (resp. \cw).

\item {\it Platform parameters}: \rf and \cas latencies (\rc and \cc
  respectively), and the number \ct of processing units (cores). We assume
  homogeneity for the latencies, \ie every thread experiences the same
  latency when accessing an uncontended shared data, which is achieved
  in practice by pinning threads to the same
  socket.

\end{compactitem}

\medskip

\subsection{Examples and Issues}

We first present two straightforward upper bounds on the throughput, and describe the two
kinds of conflict
that keep the actual throughput away from those upper bounds.

\subsubsection{Immediate Upper Bounds}
\label{sec:imm-bou}

Trivially, the minimum amount of work \rlwp in a given \re is $\rlwp = \rc+\cw+\cc$, as we
should pay at least the memory accesses
and the critical work \cw in between.

\falseparagraph{Thread-wise} A given thread can at most perform one
successful \re every $\pw+\rlwp$ units of time. In the best case,
\ctot threads can then lead to a throughput of $\ctot/(\pw+\rlwp)$.

\falseparagraph{System-wise} By definition, two successful \res cannot
overlap, hence we have at most $1$ successful \re every \rlwp units of
time.

\medskip

\rr{
Altogether, the throughput \thru is bounded by
\[ \thru \leq \min \left( \frac{1}{\rc+\cw+\cc} , \frac{\ctot}{\pw+\rc+\cw+\cc}\right), 
\text{\ie} \]
\hspace*{-1cm}\begin{equation}
\label{eq:imm-bou}
 \thru \leq
\left\{ \begin{array}{ll}
\frac{1}{\rc+\cw+\cc} & \rr{\quad} \text{ if } \pw \leq (\ctot-1)(\rc+\cw+\cc)\\
\frac{\ctot}{\pw+\rc+\cw+\cc} & \rr{\quad} \text{ otherwise.}
\end{array} \right.
\end{equation}}
\pp{
Altogether, the throughput \thru is upper bounded by the minimum
of $1/(\rc+\cw+\cc)$ and $\ctot/(\pw+\rc+\cw+\cc)$, 
\ie
\begin{equation}
\label{eq:imm-bou}
 \thru \leq
\left\{ \begin{array}{ll}
\frac{1}{\rc+\cw+\cc} & \rr{\quad} \text{ if } \pw \leq (\ctot-1)(\rc+\cw+\cc)\\
\frac{\ctot}{\pw+\rc+\cw+\cc} & \rr{\quad} \text{ otherwise.}
\end{array} \right.
\end{equation}}

\subsubsection{Conflicts}

\paragraph{\Logcons}

Equation~\ref{eq:imm-bou} expresses the fact that when \pw is small enough, \ie when $\pw
\leq (\ctot-1) \rlwp$, we cannot expect that every thread performs a successful \re every
$\pw+\rlwp$ units of time, since it is more than what the \rl can afford.
As a result, some \logcons, hence unsuccessful \res, will be inevitable, while the others,
if any, are called {\it wasted}.

\pp{
\setlength{\textfloatsep}{0.6\baselineskip plus 0.2\baselineskip minus 0.2\baselineskip}
\begin{figure}[b!]
\begin{tikzpicture} [scale=0.45, font=\small, thdec/.style={ thick, decoration={complete sines, segment length=.1cm,amplitude=\amp},decorate},
par/.style={blue}, csu/.style={green}, cfa/.style={red}, ini/.style={black}, ]
\tikztoy
\end{figure}}

\rr{However, different executions can lead to different numbers of failures, which end up with
different throughput values. Figures~\ref{fig:ex-f2} and~\ref{fig:ex-f1} depict two
executions,}
\pp{Figure~\ref{fig:ex-f2} depicts an execution,} where the black parts are the calls to \pinit, the blue parts are the \pss,
and the \res can be either unsuccessful --- in red --- or successful --- in green.\rr{ We
experiment different initialization times, and observe different synchronizations, hence
different numbers of \cacas.} After the initial transient state, \rr{the execution depicted in
Figure~\ref{fig:ex-f1} comprises only the inevitable unsuccessful \res, while the
execution of Figure~\ref{fig:ex-f2} contains one \caca.}\pp{the execution
contains actually, for each thread, one inevitable unsuccessful \re, and one \caca, because there exists
a set of initialization times that lead to a cyclic execution with a single failure per thread and per period.}

We can see on \rr{those two examples}\pp{this example} that a cyclic execution
is reached after the transient behavior; actually, we show in
Section~\ref{sec:wt} that, in the absence of \hardcons, every execution will become periodic, if the
initialization times are spaced enough. In addition, we prove that the shortest
period is such that, during this period, every thread succeeds exactly
once. This finally leads us to define the additional failures as wasted, since
we can directly link the throughput with this number of \cacas: a higher number
of \cacas implying a lower throughput.

\paragraph{\Hardcons}

\rr{\begin{figure}[h!]\begin{center}\begin{tikzpicture}
\expatikz
\caption{Expansion\label{fig:expansion}}
\end{center}
\end{figure}
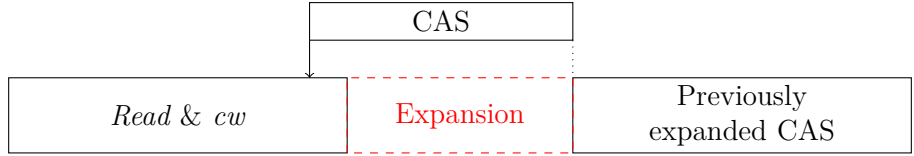
}

The requirement of atomicity compels the ownership of the data in an exclusive manner by the
executing core.\rr{ This fact prohibits concurrent execution of atomic instructions if they
are operating on the same data.} Therefore, overlapping parts of atomic instructions
are serialized by the hardware, leading to stalls in subsequently issued ones. For our target
lock-free algorithm, these stalls that we refer to as expansion become an important slowdown factor
in case threads interfere in the \rl. As illustrated in Figure~\ref{fig:expansion},
the latency for \cas can expand and cause remarkable decreases in throughput since the
\cas of a successful thread is then expanded by others; for this reason, the amount of work inside
a \re is not constant, but is, generally speaking, a function depending on the number of threads
that are inside the \rl.

\subsubsection{Process}

We deal with the two kinds of conflicts separately
and connect them together through the fixed-point iterative convergence.

In Section~\ref{sec:exp}, we compute the expansion in execution time of a \re, noted
\expa, by following a probabilistic approach. The estimation takes as input the expected
number of threads inside the \rl at any time, and returns the expected increase in the
execution time of a \re due to the serialization of atomic primitives.

In Section~\ref{sec:wt}, we are given program without \hardcon 
described by the size of the \ps \psiz and the
size of a \re \rlsiz. We compute upper and lower bounds on the throughput \thru, the
number of wasted \res $w$, and the average number of threads inside the \rl \trl. Without
loss of generality, we can normalize those execution times by the execution time of a \re,
and define the \ps size as $\psiz = q + r$, where $q$ is a non-negative integer and $r$ is
such that $0 \leq r < 1$. This pair (together with the number of threads \ct) constitutes
the actual input of the estimation.

Finally, we combine those two outcomes in Section~\ref{sec:thr}
by emulating expansion through work not prone to \hardcons
 and obtain the full
estimation of the throughput\posrem{ according to the model parameters that have been described in
Section~\ref{sec:prog-plat}}{}.

\section{Execution without \hardcon}
\label{sec:wt}

\newcommand{\suc}[1]{\ema{S_{#1}}}
\newcommand{\att}[2]{\ema{R_{#1}^{#2}}}

\newcommand{\fis}[1]{\suc{#1}}

\newcommand{\gap}[2]{\ema{G_{#1}^{(#2)}}}
\newcommand{\gapt}[2]{\ema{\widetilde{G}_{#1}^{(#2)}}}
\newcommand{\tinit}[1]{\ema{t_{#1}}}

\newcommand{\wfs}{well-formed seed\xspace}
\newcommand{\wfss}{well-formed seeds\xspace}

\newcommand{\wifs}{weakly-formed seed\xspace}
\newcommand{\cyex}[2]{$(#1,#2)$-cyclic execution\xspace}
\newcommand{\cyexs}[2]{$(#1,#2)$-cyclic executions\xspace}

\newcommand{\fa}[1]{\ema{f\left( #1\right)}}
\newcommand{\soe}{\ema{\mathcal{S}}}
\newcommand{\wt}{\ema{\mathit{WT}}}
\newcommand{\wati}{\textit{wasted time}\xspace}

\newcommand{\pl}{\ema{n_0}}
\newcommand{\fl}{\ema{f_0}}

We show in this section that, in the absence of \hardcons, the execution becomes periodic, which eases the calculation
of the throughput.  We start by defining some useful concepts: \cyexs{f}{\ct} are special kind of
periodic executions such that within the shortest period, each thread
performs exactly $f$ unsuccessful \res and $1$ successful \re. The {\it \wfs} is a set of events
that allows us to detect an {\it \cyex{f}{\ct}} early, and the {\it gaps} are a measure of the quality
of the synchronization between threads. The idea is to iteratively add threads into the
game and show that the periodicity is maintained.
Theorem~\ref{th.wfs.gap} establishes a fundamental relation between gaps and \wfss, while
Theorem~\ref{th.cyc.exec} proves the periodicity, relying on the disjoint cases
\pp{depicted on Figures~\ref{fig:ex-lem.1},~\ref{fig:ex-lem.3} and~\ref{fig:ex-lem.2}.
We recall that the complete version of the proofs can be found in~\cite{our-long}, together
with additional Lemmas.}
\rr{of Lemma~\ref{lem.cyc.exec.1},~\ref{lem.cyc.exec.3}, and~\ref{lem.cyc.exec.2}.}
Finally, we exhibit upper and lower bounds on throughput and number of failures,
along with the average number of threads inside the \rl.

\subsection{Setting}
\label{sec:wt-set}

\rr{

\subsubsection{Initial Restrictions}
\begin{remark}
Concerning correctness, we assume that the reference point of the \rf and the \cas occurs
when the thread enters and exits any \re, respectively.
\end{remark}

\begin{remark}
We do not consider simultaneous events, so all inequalities that refer to time
comparison are strict, and can be viewed as follows: time instants are real
numbers, and can be equal, but every event is associated with a thread; also, in
order to obtain a strict order relation, we break ties according to the thread
numbers (for instance with the relation $<$).
\end{remark}
}

\pp{ In preamble, note that the events are strictly ordered (according to their instant of occurrence, with the
  thread id as a tie-breaker). As for correctness, \ie to decide for the success or
  the failure of a \re, we need instants of occurrence for \rf and \cas; we consider that
  the entrance (resp. exit) time of a \re is the instant of occurrence of the \rf
  (resp. \cas).}

\subsubsection{Notations and Definitions}

We recall that \ct threads are executing the pseudo-code described in
Procedure~\ref{alg:gen-name}, one \re is of unit-size, and the \ps is of size $\psiz = q
+r$, where $q$ is a non-negative integer and $r$ is such that  $0 \leq r < 1$.
Considering a thread \thr{n} which succeeds at time \suc{n}; this thread completes a
whole \re in $1$ unit of time, then executes the \ps of size \psiz, and attempts to perform
again the operation every
unit of time, until one of the attempt is successful.

\begin{definition}
An execution with \ct threads is called \cyex{C}{\ct} if
and only if (i) the execution is periodic, \ie at every time, every thread is in
the same state as one period before, (ii) the shortest period contains exactly one
successful attempt per thread, (iii) the shortest period is $1+q+r+C$.
\end{definition}

\begin{definition}
Let $\soe = \left( \thr{i}, \fis{i} \right)_{i \in \inte{\ct-1}}$,
where \thr{i} are threads and \fis{i} ordered times, \ie such that
\rr{$\fis{0} < \fis{1} < \dots < \fis{\ct-1}$.}
\pp{$\fis{0} <  \dots < \fis{\ct-1}$.} \soe is a {\it seed} if and
only if for all $i \in \inte{\ct -1}$, \thr{i} does not succeed
between \fis{0} and \fis{i}, and starts a \re at \fis{i}.

We define \fa{\soe} as the smallest non-negative integer such that
$\fis{0} + 1 + q + r + \fa{\soe} > \fis{\ct-1} + 1$, \ie $ \fa{\soe} =
\max\left( 0 , \left\lceil \fis{\ct-1} - \fis{0} - q
-r\right\rceil\right)$. When \soe is clear from the context, we denote
\fa{\soe} by $f$.
\end{definition}

\begin{definition}

\soe is a {\it \wfs} if and only if for each $i \in \inte{\ct-1}$, the execution of thread
\thr{i} contains the following sequence: firstly a success beginning at \fis{i}, the
parallel section,  $f$ unsuccessful \res, and finally a successful \re.
\end{definition}

Those definitions are coupled through the two natural following properties:
\pp{
}\begin{property}
\label{pr.wfs.soe1}
Given a \cyex{C}{\ct}, any seed \soe including \ct consecutive successes is a \wfs,
with $\fa{\soe} = C$.
\end{property}
\rr{
\begin{proof}
Choosing any set of \ct consecutive successes, we are ensured, by the
definition of a \cyex{f}{\ct}, that for each thread, after the first success,
the next success will be obtained after $f$ failures. The order will be
preserved, and this shows that a seed including our set of successes is 
actually a \wfs.
\end{proof}
}\pp{
}\begin{property}
\label{pr.wfs.soe2}
If there exists a \wfs in an execution, then after each thread succeeded once, the
execution coincides with an \cyex{f}{\ct}.
\end{property}

\rr{
\begin{proof}
By the definition of a \wfs, we know that the threads will first
succeed in order, fails $f$ times, and succeed again in the same
order. Considering the second set of successes in a new \wfs, we
observe that the threads will succeed a third time in the same order,
after failing $f$ times. By induction, the execution coincides with an
\cyex{f}{\ct}.
\end{proof}
}

\newcommand{\lag}{lagging time\xspace}
\newcommand{\lagt}{\ema{\ell}}

Together with the seed concept, we define the notion of \textit{gap}\rr{ that we will use
extensively in the next subsection}. The general idea of those gaps is that within an
\cyex{f}{\ct}, the period is higher than $\ct \times 1$, which is the total execution time
of all the successful \res within the period. The difference between the period (that
lasts $1+q+r+f$) and \ct, reduced by $r$ (so that we obtain an integer), is referred as \textit{\lag} in the following. If
the threads are numbered according to their order of success (modulo \ct), as the time elapsed between
the successes of two given consecutive threads is constant (during the next period, this time
will remain the same), this \lag can be seen in a circular manner\rr{ (see
Figure~\ref{fig:gaps})}: the threads are
represented on a circle whose length is the \lag increased by $r$,
and the length between two consecutive threads is the time between
the end of the successful \re of the first thread and the begin of the successful \re of the
second one. More formally, for all $(n,k) \in \inte{\ct-1}^2$, we define the gap
\gap{n}{k} between \thr{n} and its \kth{k} predecessor based on the gap with the first
predecessor:
\rr{\begin{figure}
\begin{center}
\begin{tikzpicture}[arrows = {<->}]\def\cim{2}
\def\cii{1.4}
\def\cio{2.5}
\draw[thick] (0,0) circle (\cim);
\draw[red]
 (90:\cii) arc (91:449:\cii);
\node[red] at (-60:\cii-.8) {$\displaystyle \sum_{n=0}^{\ct-1} \gap{n}{1}$};
\draw[fill=white,thick] (90:\cim) circle (.25) node {\thr{0}};
\draw[fill=white,thick] (45:\cim) circle (.25) node {\thr{1}};
\draw[fill=white,thick] (-45:\cim) circle (.25) node {\thr{2}};
\draw[fill=white,thick] (180:\cim) ellipse (.4 and .3) node {\thr{\ctot-1}};
\draw[blue] (89.5:\cio) arc (89.5:45.5:\cio);
\node[blue] at (66:1.2*\cio) {\gap{1}{1}};
\draw[blue] (44.5:\cio) arc (44.5:-44.5:\cio);
\node[blue] at (0:1.2*\cio) {\gap{2}{1}};
\draw[blue] (90.5:\cio) arc (90.5:314.5:\cio);
\node[blue] at (200:1.2*\cio) {\gap{0}{2}};
\end{tikzpicture}
\end{center}
\caption{Gaps\label{fig:gaps}}
\end{figure}}\pp{
}\[
\left\{ \begin{array}{l}
\forall n \in \inte[1]{\ct-1} \quad ; \quad \gap{n}{1} = \suc{n} - \suc{n-1} - 1\\
\gap{0}{1} = \suc{0} + q + r + f - \suc{\ct-1}
\end{array} \right.,
\]
which leads to the definition of higher order gaps:
\rr{\[ \forall n \in \inte{\ct-1} \rr{\quad} ; \rr{\quad} \forall k > 0 \rr{\quad} ; \rr{\quad} \gap{n}{k} = \sum_{j=n-k+1}^n \gap{j \bmod \ct}{1}.\]}
\pp{$\forall n \in \inte{\ct-1} \rr{\quad} ; \rr{\quad} \forall k > 0 \rr{\quad} ; \rr{\quad} \gap{n}{k} = \sum_{j=n-k+1}^n \gap{j \bmod \ct}{1}$.}

For consistency, for all $n \in \inte{\ct-1}$, $\gap{n}{0} = 0$.

Equally, the gaps can be obtained\rr{ directly} from the successes: for all $k \in \inte[1]{\ct-1}$,
\begin{equation}
\gap{n}{k} = \left\{ \begin{array}{ll}
\suc{n} - \suc{n-k} - k & \mathrm{~if~} n>k\\
\suc{n} - \suc{\ct + n-k} +1+q+r+f-k & \mathrm{ otherwise}\\
\end{array} \right.
\label{eq.gats}
\end{equation}

Note that, in an \cyex{f}{\ct}, the \lag is the sum of all first order gaps, reduced by $r$.

\rr{

Now we extend the concept of \wfs to \wifs.

\begin{definition}
Let $\soe = \left( \thr{i}, \fis{i} \right)_{i \in \inte{\ct-1}}$ be a seed.

\soe is a \wifs for \ct threads if and only if: $\left( \thr{i}, \fis{i} \right)_{i \in \inte{\ct-2}}$
is a \wfs for $\ct-1$ threads, and the first thread succeeding after
\thr{\ct-2} is \thr{\ct-1}.
\end{definition}

\begin{property}
Let $\soe = \left( \thr{i}, \fis{i} \right)_{i \in \inte{\ct-1}}$ be a \wifs.

Denoting $f = \fa{\left( \thr{i}, \fis{i} \right)_{i \in \inte{\ct-2}}}$, for each $n \in
\inte{\ct-1}$, $\gap{n}{f} < 1$.
\end{property}
\rr{\begin{proof}
We have $\suc{\ct-2} + 1 < \suc{\ct-1} < \att{0}{f}$,
and if we note indeed \gapt{n}{k} the gaps within $\left( \thr{i}, \fis{i} \right)_{i \in \inte{\ct-2}}$,
the previous \wfs with $\ct-1$ threads, we know that for
all $n \in \inte[1]{\ct-2}$, $\gapt{n}{1} = \gap{n}{1}$, and $\gap{\ct-1}{1} +
\gap{0}{1} = \gapt{0}{1}$, which leads to $\gap{n}{k} \leq \gapt{n}{k}$, for all
$n \in \inte{\ct-1}$ and $k$; hence the weaker property.
\end{proof}}

\rr{\begin{figure}
\tikzlemone{.6}
\caption{Lemma~\ref{lem.cyc.exec.1} configuration \label{fig:ex-lem.1}}
\end{center}
\end{figure}}

\begin{lemma}
\label{lem.cyc.exec.1}
Let \soe be a \wifs, and $f=\fa{\left( \thr{i}, \fis{i} \right)_{i \in \inte{\ct-2}}}$. If, for all
$n \in \inte{\ct-1}$, $\gap{n}{f+1} < 1$, then there exists later in the execution a
\wfs $\soe'$ for \ct threads such that $\fa{\soe'} = f +1$.
\end{lemma}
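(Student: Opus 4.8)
The plan is to exhibit the promised \wfs $\soe'$ explicitly, as the successes of $\soe$ advanced by one full period of the target cyclic execution, and then to verify that the deterministic success/failure rule actually produces this schedule. Throughout I use the reference-point convention: a \re whose \rf is at time $a$ has its \cas at $a+1$, and it fails if and only if some other thread performs a successful \cas (a \emph{commit}) in the open interval $(a,a+1)$, where thread $\thr{n}$'s winning commit occurs at $\suc{n}+1$. Thus after succeeding at $\suc{n}$, thread $\thr{n}$ runs its \ps and issues attempts at $\att{n}{k}=\suc{n}+1+q+r+k$ for $k=0,1,2,\dots$, succeeding at the first $k$ whose window $(\att{n}{k},\att{n}{k}+1)$ is free of foreign commits.

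Concretely, I would take as candidate next successes the instants $\att{n}{f+1}=\suc{n}+1+q+r+(f+1)$, i.e.\ the successes of $\soe$ shifted by exactly one new period $1+q+r+(f+1)$. Since $\att{n}{f+1}$ is literally thread $\thr{n}$'s attempt number $f+1$, proving that $\thr{n}$ fails attempts $0,\dots,f$ and succeeds on attempt $f+1$ is \emph{equivalent} to showing its next success lies one new period later; the order $\att{0}{f+1}<\dots<\att{\ct-1}{f+1}$ is inherited from $\suc{0}<\dots<\suc{\ct-1}$. Once this holds for every thread, $\bigl(\thr{n},\att{n}{f+1}\bigr)_{n\in\inte{\ct-1}}$ is a seed in which each thread's trace reads success, \ps, $f+1$ unsuccessful \res, success; that is precisely a \wfs $\soe'$ with $\fa{\soe'}=f+1$, occurring later in the execution, from which Property~\ref{pr.wfs.soe2} promotes the behaviour to a full \cyex{f+1}{\ct}.

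Two facts remain to be checked. \textbf{(Failures.)} For the $f$ ``inherited'' failures I would reuse that $(\thr{i},\suc{i})_{i\in\inte{\ct-2}}$ is a \wfs for $\ct-1$ threads: without $\thr{\ct-1}$ thread $\thr{n}$ already fails $f$ times, and those blocking commits are untouched by the insertion. The one extra failure is supplied by $\thr{\ct-1}$: the \wifs inequalities $\suc{\ct-2}+1<\suc{\ct-1}<\att{0}{f}$ give $\suc{\ct-1}+1\in(\att{0}{f},\att{0}{f}+1)$, so $\thr{\ct-1}$'s commit blocks exactly the attempt that used to be a success, and this extra blocking propagates forward. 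The quantitative control is the hypothesis $\gap{n}{f+1}<1$: with the non-negativity of the genuine consecutive-success gaps it forces the $f+1$ predecessor commits to be spaced in $[1,2)$, so they occupy the $f+1$ distinct windows $0,\dots,f$ without skipping (a skipped window would produce an early success). \textbf{(Success.)} The last predecessor commit lands in window $f$, hence strictly before $\att{n}{f+1}$, while the next commit in cyclic order is $\thr{n}$'s own at $\att{n}{f+1}+1$; therefore $(\att{n}{f+1},\att{n}{f+1}+1)$ is clean and attempt $f+1$ succeeds.

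The main obstacle is the circular bookkeeping inside the Failures step. The $f+1$ predecessors of $\thr{n}$ do not all commit in the same round: as the index wraps past $0$ one must use first-round commits in place of next-round ones, which is exactly why $\gap{n}{f+1}$ is written with the two branches of Equation~\eqref{eq.gats}, and why $\gap{0}{1}$ may be negative whereas $\gap{n}{1}\ge 0$ for $n\ge 1$ (the apparent negativity being a packed period offset rather than a genuine sub-unit spacing). Translating the single inequality $\gap{n}{f+1}<1$ into ``each of windows $0,\dots,f$ is hit once and window $f+1$ is missed,'' uniformly in $n$ and including the boundary cases where a commit coincides with an attempt's \rf or \cas (resolved by the thread-id tie-break), is the delicate computation; organising it cleanly is presumably what forces the periodicity argument to split into the three complementary cases treated by Lemmas~\ref{lem.cyc.exec.1},~\ref{lem.cyc.exec.3} and~\ref{lem.cyc.exec.2}.
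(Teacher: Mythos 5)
There is a genuine gap here: the central claim --- that each thread fails exactly on attempts $0,\dots,f$ and then succeeds on attempt $f+1$ --- is never actually established, and the one mechanism you offer for the ``inherited'' failures is false as stated. The blocking commits are \emph{not} untouched by the insertion: the success of \thr{\ct-1} delays the second success of \thr{0} from its \kth{f} to its \kth{(f+1)} attempt, so \thr{0}'s commit moves one unit later and now blocks a \emph{different} attempt of each of its successors than it did in the $(\ct-1)$-thread execution, and this one-unit shift cascades around the circle; the whole failure pattern has to be re-derived after the shift, not inherited. Your quantitative control is also insufficient: commits spaced in $[1,2)$ can perfectly well skip a window (advance by two), and what forbids skipping is that every partial sum $\gap{n}{k}$ with $k \leq f+1$ lies in $[0,1)$ --- this does follow from the hypothesis together with the non-negativity of the first-order gaps, but you never make that reduction. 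Finally, your ``Success'' step assumes that each predecessor has already committed in its assigned window, i.e.\ it presupposes the global ordering of the second-round successes that is being proved; the argument is circular unless organized as a simultaneous induction over the threads in success order. That induction (Equations~\ref{eq.att.suc} and~\ref{eq.att.suc2}, the properties \pro{j}, and the comparison $\att{n}{f-n} - \att{0}{f} = \gap{n}{n} > 0$ showing \thr{0}'s attempt is the earliest candidate) is precisely the content of the $(\Leftarrow)$ direction of Theorem~\ref{th.wfs.gap}, and you explicitly defer it as ``the delicate computation''. Incidentally, the split into Lemmas~\ref{lem.cyc.exec.1},~\ref{lem.cyc.exec.3} and~\ref{lem.cyc.exec.2} is not caused by this window bookkeeping: it is driven by whether $\gap{n}{f+1}<1$ holds for all $n$ or a reordering of the threads occurs.

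You also missed the short route the paper takes: no shifted seed $\soe'$ needs to be constructed, because \soe \emph{itself} is the required \wfs. By the \wifs structure, $\suc{\ct-2}+1 < \suc{\ct-1} < \att{0}{f} < \suc{\ct-1}+1$, so the commit of \thr{\ct-1} lands exactly in the window of \thr{0}'s \kth{f} attempt, the first success of \thr{0} after that of \thr{\ct-1} is its \kth{(f+1)} \re, and the $f$-value of the full seed is $\fa{\soe}=f+1$. The first-order gaps of \soe are non-negative by construction, so the hypothesis places all gaps of order at most $f+1$ in $[0,1)$, and the criterion of Theorem~\ref{th.wfs.gap} applied to \soe immediately yields a \wfs with $\fa{\soe}=f+1$. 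Your window-by-window dynamics, once made rigorous, would amount to re-proving that theorem in this special case rather than invoking it.
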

\begin{proof}
The proof is straightforward; \soe is actually a \wfs such that $\fa{\soe} = f +
1$. \rr{Since $\att{0}{f} - \suc{\ct-1} < \gap{0}{1} <1 $, the first success of
\thr{0} after the success of \thr{\ct-1} is its \kth{f+1} \re.}
\end{proof}

}

\subsection{Cyclic Executions}
\label{sec:wt-cyc}

\pp{We only give the two main theorems used to show
the existence of cyclic executions. The details can be found in the
companion research report~\cite{our-long}.}

\begin{theorem}
\label{th.wfs.gap}
Given a seed $\soe = \left( \thr{i}, \fis{i} \right)_{i \in \inte{\ct-1}}$,
\soe is a \wfs if and only if for all $n \in \inte{\ct-1}$,
$0 \leq \gap{n}{f}<1$.
\end{theorem}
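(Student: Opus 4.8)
The plan is to reduce the combinatorial \wfs property to a precise arithmetic condition on the success times by exploiting the \rl semantics, and then to recognise that condition as $0\le\gap{n}{f}<1$. First I would isolate the single mechanical fact that drives everything: since a \re has unit size with its \rf at entry and its \cas at exit, a \re whose \rf occurs at time $t$ fails if and only if some other thread performs a successful \cas inside $(t,t+1)$, equivalently iff some other thread has a successful \rf inside $(t-1,t)$; and, because two successful \res cannot overlap (the system-wise bound), each unit-length interval contains at most one success. I would also record two non-negativity facts valid for any seed: $\gap{n}{1}\ge0$ for $n\ge1$ is exactly ``consecutive successes are $\ge1$ apart'', while $\gap{0}{1}\ge0$ follows directly from $f=\fa{\soe}=\max(0,\lceil\fis{\ct-1}-\fis{0}-q-r\rceil)$, which forces $\fis{\ct-1}-\fis{0}\le q+r+f$. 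Consequently every higher-order gap is a sum of non-negative first-order gaps, so $0\le\gap{n}{j}\le\gap{n}{f}$ for all $j\le f$.

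\textbf{Forward direction.} Assuming \soe is a \wfs, each thread \thr{n} succeeds at $\fis{n}=\suc{n}$ and, after the parallel section, issues \res with \rf at $\suc{n}+1+q+r+k$, failing for $k=0,\dots,f-1$ and succeeding at $k=f$; by Property~\ref{pr.wfs.soe2} the successes recur with period $1+q+r+f$. Applying the mechanical fact to the $k$-th \re, I would locate the unique earlier success responsible for its failure, namely the next-period success of \thr{n-(f-k)}, and verify that ``the $k$-th \re fails'' is equivalent to placing that success inside the window of that \re. Substituting $\suc{n-(f-k)}+(1+q+r+f)$ and simplifying, the $q+r$ and period terms cancel and the condition collapses to $\gap{n}{f-k}\in(0,1)$ through Equation~\ref{eq.gats}. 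The case $k=0$ yields $\gap{n}{f}\in(0,1)\subseteq[0,1)$, which is the claimed bound (the cases $k<f$ give $\gap{n}{j}\in(0,1)$ for $j<f$, and the success at $k=f$ is automatic since no success can lie in that last window).

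\textbf{Reverse direction.} Assuming $0\le\gap{n}{f}<1$ for all $n$, the non-negativity facts give $0\le\gap{n}{j}<1$, i.e.\ $\suc{n}-\suc{n-j}\in[j,j+1)$, for every $j\le f$. This states that the $f$ successes preceding the re-entry of \thr{n} are packed into a span shorter than $f+1$ while remaining $\ge1$ apart, so exactly one of their successful \cas{}s lands in each of the first $f$ \re windows of \thr{n} and none lands in the $(f+1)$-th; hence \thr{n} fails exactly $f$ times and then succeeds. Since this holds for every $n$ and the first successes occur in the seed order $\fis{0}<\dots<\fis{\ct-1}$, the execution exhibits the \wfs pattern, so \soe is a \wfs.

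\textbf{Main obstacle.} The hardest part will be the bookkeeping around the period wrap-around --- the split between $n>f$ and $n\le f$ in Equation~\ref{eq.gats}, where the $1+q+r$ term reappears --- and checking that after substitution the $q+r$ and period contributions cancel so that the window-membership conditions genuinely collapse to $\gap{\cdot}{j}\in(0,1)$. A secondary delicate point is reconciling the half-open bound $0\le\gap{n}{f}<1$ of the statement (closed at $0$) with the strict window conditions above: the equality $\gap{}{}=0$ corresponds to a success \cas falling exactly on a window boundary and must be resolved by the thread-id tie-breaking convention, so that the \re still counts as a failure and the \wfs structure is preserved.
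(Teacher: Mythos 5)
Your forward direction is essentially sound and is, in fact, a legitimate variant of the paper's argument: the paper proves that implication by contraposition (a negative gap, or a gap exceeding $1$ chosen minimal in thread index and order, is shown to break the \wfs pattern), whereas you extract the window conditions $\gap{n}{j}\in(0,1)$ directly from the prescribed failure/success pattern; both rest on the same bookkeeping identities around Equation~\ref{eq.gats}, and your unit-window characterization of failure (a \re starting at $t$ fails iff another thread's successful \cas lands in $(t,t+1)$) is exactly the mechanical fact the paper encodes in its attempt/success relations.

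The reverse direction, however, has a genuine gap, and it sits exactly where you placed your ``non-negativity facts valid for any seed''. For a general seed the times \fis{i} are only ordered \emph{\re starts}, not successes, so nothing forces consecutive ones to be at least $1$ apart; your justification ``consecutive successes are $\ge 1$ apart'' is circular, because whether the \fis{i} become successes is precisely what must be proved. Nor does the hypothesis $0\le\gap{n}{f}<1$ rescue the claim: a sum of first-order gaps can lie in $[0,1)$ for every $n$ while an individual first-order gap is negative. Concretely, take $\ct=3$, $q=0$, $r=0.6$, $\fis{0}=0$, $\fis{1}=0.8$, $\fis{2}=2.2$; then $f=2$, the first-order gaps are $0.4$, $-0.2$, $0.4$, all three second-order gaps ($0.8$, $0.2$, $0.2$) lie in $[0,1)$, yet \thr{1}'s \re starting at $0.8$ is killed by \thr{0}'s successful \cas at time $1$, so the seed is not a \wfs. (The paper's own proof quietly leans on the same assumption --- ``since all gaps are positive'' --- which is harmless in its applications, where every seed fed to the theorem comes from actual consecutive successes, hence is spaced by at least $1$; but stated as a fact about arbitrary seeds it is false, and your reverse direction depends on it.) Second, even granting gap bounds at all orders, your counting step --- ``exactly one of their successful \cas{}s lands in each of the first $f$ windows, hence \thr{n} fails exactly $f$ times and then succeeds'' --- presupposes that the $f$ preceding successes occur at their nominal times, while some of them are second-round successes whose timing itself depends on those threads having failed exactly $f$ times. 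Resolving this mutual dependence is where the paper spends most of its proof: an induction over success events (its property \pro{j}) showing that each success of a late thread kills precisely the \re of an early thread that would otherwise interfere with the next success, then a comparison of all candidate attempts showing the earliest next success is \thr{0}'s \kth{f} \re, then a renumbering argument to propagate the pattern. You name the wrap-around bookkeeping as the main obstacle, but the truly missing idea is this event-order induction; without it the window-packing argument assumes its conclusion. A final small discrepancy: the boundary case is not handled by tie-breaking as you suggest --- the paper observes that $\gap{n}{f}=0$ forces $f=0$, making the seed trivially a \wfs, rather than counting the boundary \re as a failure.
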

\rr{\begin{proof}

Let $\soe = \left( \thr{i}, \fis{i} \right)_{i \in \inte{\ct-1}}$ be a seed.

\noindent$(\Leftarrow)$
We assume that for all $n \in \inte{\ct-1}$, $0<\gap{n}{f}<1$, and we first show
that the first successes occur in the following order: \thr{0} at \suc{0},
\thr{1} at \suc{1}, $\dots$, \thr{\ct-1} at \suc{\ct-1}, \thr{0} again at
$\att{0}{f}$. The first threads that are successful executes their \ps after their success, then
enters their second \rl: from this moment, they can make the first attempt of
the threads, that has not been successful yet, fail. Therefore, we will look at which
\re of which already successful threads could have an impact on which other threads.

We can notice that for all $n \in \inte{\ct -1}$, if the first success of
\thr{n} occurs at \suc{n}, then its next attempts will potentially occur at
$\att{n}{k} = \suc{n} + 1 + q + r + k$, where $k\geq 0$. More specifically, thanks
to Equation~\ref{eq.gats}, for all $n \leq f$, $\att{n}{k} = \suc{\ct+n-f} +
\gap{n}{f} + k$. Also, for all $k \leq f -n$,
\rr{\begin{align}
\att{n}{k}  - \suc{\ct+n-f+k} &= -\left(\suc{\ct+n-f+k} - \suc{\ct+n-f} -k \right) +\gap{n}{f}\nonumber\\
&= \gap{n}{f} - \gap{\ct+n-f+k}{k}\nonumber\\
\att{n}{k}  - \suc{\ct+n-f+k} &= \gap{n}{f-k}, \label{eq.att.suc}
\end{align}}
\pp{\begin{equation}
\att{n}{k}  - \suc{\ct+n-f+k} = -\left(\suc{\ct+n-f+k} - \suc{\ct+n-f} -k \right) +\gap{n}{f}
= \gap{n}{f} - \gap{\ct+n-f+k}{k} = \gap{n}{f-k}, \label{eq.att.suc}
\end{equation}}
and this implies that if $k>0$,
\begin{equation}
\suc{\ct+n-f+k} - \att{n}{k-1} = 1- \gap{n}{f-k}. \label{eq.att.suc2}
\end{equation}

We know, by hypothesis, that $0<\gap{n}{f-k}<1$, equivalently $0<1-\gap{n}{f-k}<1$. Therefore
Equation~\ref{eq.att.suc} states that if a thread \thr{n'} starts a successful attempt at
\suc{\ct+n-f+k}, then this thread will make the \kth{k} \re of \thr{n} fail, since
\thr{n} enters a \re while \thr{n'} is in a successful \re. And
Equation~\ref{eq.att.suc2} shows that, given a thread \thr{n'} starting a new \re at
\suc{\ct+n-f+k}, the only \re of \thr{n} that can make \thr{n'} fail on its attempt is
the \kth{(k-1)} one. There is indeed only one \re of \thr{n} that can enter a \re before
the entrance of \thr{n'}, and exit the \re after it.

\thr{0} is the first thread to succeed at \suc{0}, because no other thread is in
the \rl at this time. Its next attempt will occur at \att{0}{0}, and all
thread attempts that start before \suc{\ct-f} (included) cannot fail because of
\thr{0}, since it runs then the \ps. Also, since all gaps are positive, the threads \thr{1} to \thr{\ct-f}
will succeed in this order, respectively starting at times \suc{1} to
\suc{\ct-f}.

Then, using induction, we can show that \thr{\ct-f+1}, $\dots$, \thr{\ct-1} succeed in
this order, respectively starting at times \suc{\ct-f+1}, $\dots$, \suc{\ct-1}. For $j \in
\inte{f-1}$, let \pro{j}
be the following property: for
all $n \in \inte{\ct-f+j}$, \thr{n} starts a successful \re at \suc{n}.  We assume that
for a given $j$, \pro{j} is true, and we show that it implies that \thr{\ct-f+j+1} will
succeed at \suc{\ct-f+j+1}. The successful attempt of \thr{\ct-f+j} at
\suc{\ct-f+j} leads, for all $j' \in \inte{j}$, to the failure of the \kth{j'} \re of
\thr{j-j'} (explanation of Equation~\ref{eq.att.suc}). But for each \thr{j'}, this
attempt was precisely the one that could have made
\thr{\ct-f+j+1} fail on its attempt at \suc{\ct-f+j+1} (explanation of Equation~\ref{eq.att.suc}).
Given that all threads \thr{n},
where $n>\ct-f+j+1$, do not start any \rl before \suc{\ct-f+j+1}, \thr{\ct-f+j+1} will
succeed at \suc{\ct-f+j+1}.  By induction, \pro{j} is true for all $j \in \inte{f-1}$.

Finally, when \thr{\ct-1} succeeds, it makes the
\kth{(f-1-n)} \re of \thr{n} fail, for all $n \in \inte{f-1}$;
also the next potentially successful
attempt for \thr{n} is at \att{n}{f-n}. (Naturally, for all $n \in
\inte[f]{\ct-1}$, the next potentially successful attempt for \thr{n} is at
\att{n}{0}.)

We can observe that for all $n < \ct$,  $j \in \inte{\ct-1-n}$, and all $k \geq j$,
\rr{\begin{align}
\att{n+j}{k-j} - \att{n}{k} &= \suc{n+j} + k - j - \left( \suc{n}  +k\right)\nonumber\\
\att{n+j}{k-j} - \att{n}{k} &= \gap{n+j}{j}, \label{eq.att.att}
\end{align}}\pp{$ \att{n+j}{k-j} - \att{n}{k} = \suc{n+j} + k - j - \left( \suc{n}  +k\right)
=  \gap{n+j}{j}$,}
hence for all $n \in \inte[1]{f}$, $\att{n}{f-n} - \att{0}{f} = \gap{n}{n} > 0$.
\rr{\[ \att{n}{f-n} - \att{0}{f} = \gap{n}{n} > 0. \]}

As we have as well, for all $n \in\inte[f+1]{\ct-1}$,
$\att{n}{0} > \att{f}{0}$, we obtain that among all the threads, the
earliest possibly successful attempt is \att{0}{f}. Following
\thr{\ct-1}, \thr{0} is consequently the next successful thread in its
\kth{f} \re.

To conclude this part, we can renumber the threads (\thr{n+1}
becoming now \thr{n} if $n>0$, and \thr{0} becoming \thr{\ct-1}), and follow the
same line of reasoning. The only difference is the fact that \thr{\ct-1}
(according to the new numbering) enters the \rl $f$ units of time before
\suc{\ct-1}, but it does not interfere with the other threads, since we know
that those attempts will fail.

There remains the case where there exists $n \in \inte{\ct-1}$ such that $\gap{n}{f}=0$.
This implies that $f=0$, thus we have a \wfs.

\rr{\bigskip}
\pp{\smallskip}

\noindent$(\Rightarrow)$ \pp{In a nutshell, we prove the implication by
contraposition, \ie if a gap does not belong to $[0,1[$, then one of the properties
of the \wfs is violated.\qedhere}
\rr{We prove now the implication by contraposition; we assume that there exists $n
\in \inte{\ct-1}$ such that $\gap{n}{f}>1$ or $\gap{n}{f}<0$, and show that \soe
is not a \wfs.

We assume first that an \kth{f} order gap is negative. As it is a sum of \kst{1} order
gaps, then there exists $n'$ such that \gap{n'}{1} is negative; let $n''$ be the highest
one.

If $n''>0$, then either the threads $\thr{0}, \dots, \thr{n''-1}$ succeeded in order at
their \kth{0} \re, and then \thr{n''-1} makes \thr{n''} fail at its \kth{0} \re (we have
a seed, hence by definition, $\suc{n''-1}<\suc{n''}$, and $\gap{n''}{1}<0$, thus $\suc{n''-1}<\suc{n''}<\suc{n''-1}+1$ ),
or they did not succeed in order
at their first try. In both cases, \soe is not a \wfs.

If $n''=0$, let us assume that \soe is a \wfs. Let also a new seed be $\soe' =
\left( \thr{i}, \suc{i}' \right)_{i \in \inte{\ct-1}}$, where for all $n \in \inte{\ct-2}$,
$\suc{n+1}'=\suc{n}$, and $\suc{0}' = \suc{\ct-1} - (q+1+f+r) $. Like \soe, $\soe'$ is a
\wfs; however, \gap{1}{1} is negative, and we fall back into the previous case, which
shows that $\soe'$ is not a \wfs. This is absurd, hence \soe is not a \wfs.

\medskip

We assume now that every gap is positive and choose \pl defined by: $\pl = \min \{ n \; ;
\; \exists k \in \inte{\ct-1} / \gap{n+k}{k} >1 \}$, and $\fl = \min \{ k \; ; \; \gap{\pl
  + k}{k} >1 \}$: among the gaps that exceed $1$, we pick those that concern the earliest
thread, and among them the one with the lowest order.

Let us assume that threads \thr{0}, $\dots$, \thr{\ct-1} succeed at their \kth{0} \re in
this order, then \thr{0}, $\dots$, \thr{\pl} complete their second successful \rl at their \kth{f}
\re, in this order. If this is not the case, then \soe is not a \wfs, and the proof is
completed. According to Equation~\ref{eq.att.att}, we have, on the one hand,
$\att{\pl+1}{\fl-1} - \att{\pl}{\fl} = \gap{\pl+1}{1}$, which implies $\att{\pl+1}{\fl} -1
- \att{\pl}{\fl} = \gap{\pl+1}{1}$, thus $\att{\pl+1}{f} - (\att{\pl}{f}+1) =
\gap{\pl+1}{1}$; and on the other hand, $\att{\pl+\fl}{0} - \att{\pl}{\fl} =
\gap{\pl+\fl}{\fl}$ implying $\att{\pl+\fl}{f-\fl} - \left(\att{\pl}{f}+1\right) =
\gap{\pl+\fl}{\fl} -1$. As we know that $\gap{\pl+\fl}{\fl} - \gap{\pl+1}{1} =
\gap{\pl+\fl}{\fl-1} < 1$ by definition of \fl (and \pl), we can derive that
$\att{\pl+1}{f} - (\att{\pl}{f}+1) > \att{\pl+\fl}{f-\fl} - (\att{\pl}{f}+1)$. We have
assumed that \thr{\pl} succeeds at its \kth{f} \re, which will end at
$\att{\pl}{f}+1$. The previous inequality states then that \thr{\pl+1} cannot be
successful at its \kth{f} \re, since either a thread succeeds before \thr{\pl+\fl} and
makes both \thr{\pl+\fl} and \thr{\pl+1} fail, or \thr{\pl+\fl} succeeds and makes
\thr{\pl+1} fail. We have shown that \soe is not a \wfs.}
\end{proof}
}

\rr{

\begin{lemma}
\label{lem.event.succ}
Assuming $r \neq 0$, if a new thread is added to an \cyex{f}{\ct}, it will
eventually succeed.
\end{lemma}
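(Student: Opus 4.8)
The plan is to argue by contradiction: suppose the new thread, call it \thr{\ct}, never succeeds. The first observation is that, as long as \thr{\ct} has not yet succeeded, each of its \cas calls fails and therefore leaves the access point unchanged. Since we are in the regime without \hardcon, the failing \res of \thr{\ct} neither modify the shared data nor delay the atomic primitives of the other threads. Consequently the $\ct$ original threads keep running \emph{exactly} the \cyex{f}{\ct} they were performing before \thr{\ct} was inserted: their successful \cas calls occur at the periodic set of instants $\suc{n}+1+m(1+q+r+f)$, for $n\in\inte{\ct-1}$ and $m\ge 0$. This is the conceptual heart of the reduction, and it turns the question into one about a fixed, known pattern of successes.

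Next I would pin down precisely when an attempt of \thr{\ct} can fail. After its initial parallel section, \thr{\ct} issues its first \rf at some time $t_0$; while it keeps failing it re-enters a \re every unit of time, so its \rf{}s happen at the equally spaced instants $t_0,t_0+1,t_0+2,\dots$. An attempt entered at time $t$ fails if and only if some successful \cas of an original thread occurs in the open interval $(t,t+1)$, this being the only way the access point can change between the \rf and the \cas of \thr{\ct}. Each such \cas lies in exactly one unit interval $(t_0+j,t_0+j+1)$, and since two successful \res cannot overlap we have $\suc{n}-\suc{n-1}\ge 1$ (equivalently $\gap{n}{1}\ge 0$), so at most one successful \cas falls into each unit interval. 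Hence the number of attempts of \thr{\ct} that fail among the first $N$ is bounded by the number of successful \cas calls of the original threads lying in $(t_0,t_0+N)$.

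The core is then a density count. The successful \cas calls form a periodic set of density $\ct/(1+q+r+f)$, so the window $(t_0,t_0+N)$ contains at most $\ct N/(1+q+r+f)+\ct$ of them, whereas it contains exactly $N$ attempt intervals of \thr{\ct}. The slack comes from the fact that the period strictly exceeds $\ct$: indeed $\sum_{n}\gap{n}{1}=\lagt+r$ with $\lagt=1+q+f-\ct$, and since all first-order gaps are non-negative we get $\lagt\ge 0$, whence $1+q+r+f-\ct=\lagt+r\ge r>0$ precisely because $r\neq 0$. Therefore, for any $N>\ct(1+q+r+f)/(\lagt+r)$, there are strictly fewer blocking \cas calls than attempt intervals, so at least one interval $(t_0+j,t_0+j+1)$ contains no successful \cas. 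The corresponding attempt of \thr{\ct}, entered at $t_0+j$, succeeds, contradicting the assumption and showing that \thr{\ct} in fact succeeds within finitely many \res.

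The step I expect to be the most delicate is the justification that the original $\ct$ threads are genuinely undisturbed up to the first success of \thr{\ct}: this uses both the absence of \hardcon and the fact that a failed \cas does not alter the access point, and it is what licenses treating the successful-\cas pattern as a fixed periodic set. The second point requiring care is the role of $r\neq 0$: when $r=0$ and $\lagt=0$ the period equals $\ct$, the successful \res pack the timeline with no free room, and the equally spaced attempts of \thr{\ct} could in principle be perpetually blocked; the strict inequality $\lagt+r>0$, guaranteed by $r\neq 0$, is exactly what forces a free slot to appear over a long enough horizon. The remaining details, namely tie-breaking for measure-zero coincidences and the off-by-one in the count of \cas calls within a window, are routine under the standing convention that events are strictly ordered.
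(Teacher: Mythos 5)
Your proof is correct, but it takes a genuinely different route from the paper's. The paper argues constructively: it tracks exactly which existing success blocks each attempt of the new thread, splitting on whether $\gap{0}{\ct} > \att{\ct}{0} - \suc{0}$ or not. In the first case it shows that \thr{0}, \dots, \thr{\pl} block consecutive attempts while the relative phase shrinks by $\gap{k}{k}$, until the new thread enters a \re strictly before \suc{\pl+1} and steals that success; in the second case it shows the whole cycle of blockers recurs with a per-round drift of exactly $\gap{0}{\ct} = r$, computes the number $k_0 = \left\lfloor (\att{\ct}{0}-\suc{0})/r \right\rfloor$ of rounds needed, and then reduces to the first case. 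You replace all of this with a pigeonhole/density count: under the contradiction hypothesis the original threads' successes remain the fixed periodic set of the \cyex{f}{\ct} (your observation that failed \cas{}s leave the access point unchanged, and that \hardcons are excluded here, is exactly what licenses this, and it is the same implicit reduction the paper performs), each unit-interval attempt of the new thread is blocked by at most one of these successes under the strict-ordering convention, and since the period $1+q+r+f = \ct + \lagt + r$ strictly exceeds \ct when $r \neq 0$, a long enough horizon must contain an empty slot. Both arguments isolate the same role for $r \neq 0$ --- in the paper it is the per-cycle drift, in yours it is the slack $\lagt + r > 0$ in the density --- and both yield quantitative bounds of the same order (your $N > \ct(1+q+r+f)/(\lagt+r)$ attempts versus the paper's $k_0 \approx 1/r$ rounds). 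What the paper's longer computation buys is structural information reused downstream: the identification $\gap{0}{\ct}=r$ and the drift mechanism reappear in Lemma~\ref{lem.cyc.exec.2} to locate the \wfs after the new thread's success, which your purely existential count does not provide. Conversely, your argument is shorter, avoids the delicate bookkeeping of which attempt each success blocks (where the paper's write-up has some fragile index manipulations), and is more robust to perturbations of the setup; the two points you flag as delicate --- persistence of the original pattern and tie-breaking at interval endpoints --- are indeed the only places where care is needed, and both are adequately discharged by the paper's standing conventions.
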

\begin{proof}

Let \att{\ct}{0} be the time of the \kth{0} \re of the new thread, that we
number \thr{\ct}. If this \re is successful, we are done; let us assume now
that this \re is a failure, and let us shift the thread numbers (for the
threads \thr{0}, $\dots$, \thr{\ct-1}) so that \thr{0} makes \thr{\ct} fail on
its first attempt. We distinguish two cases, depending on whether $\gap{0}{\ct}
> \att{\ct}{0} - \suc{0}$ or not.

We assume that $\gap{0}{\ct} > \att{\ct}{0} - \suc{0}$. We know that $n \mapsto
\gap{n}{n}$ is increasing on \inte{\ct-1} and that $\gap{0}{0} = 0$, hence let
$\pl = \min \{ n \in \inte{\ct-1} \; ; \; \gap{n}{n} < \att{\ct}{0} - \suc{0} \}$.
For all $k \in \inte{\pl}$, we have
$\att{\ct}{k} - \suc{k} = k + \att{\ct}{0} - (\gap{k}{k} + \suc{0} + k) =
\att{\ct}{0} - \suc{0} - \gap{k}{k}$ hence $\att{\ct}{k} - \suc{k}>0$ and
$\att{\ct}{k} - \suc{k} < \att{\ct}{0} - \suc{0} < 1$.  This shows that \thr{0},
$\dots$, \thr{\pl}, because of their successes at \suc{0}, $\dots$, \suc{\pl},
successively make \kth{0}, $\dots$, \kth{\pl} \res (respectively) of \thr{\ct} fail.
The next attempt for \thr{\ct} is at \att{\ct}{\pl+1}, which fulfills the
following inequality: $\att{\ct}{\pl+1} - (\suc{\pl}+1) < \suc{\pl+1} - (\suc{\pl}+1)$
since
\begin{align*}
\att{\ct}{\pl+1} - \suc{\pl+1} &=  (\pl+1 + \att{\ct}{0}) -(\gap{\pl+1}{\pl+1} + \suc{0} + \pl+1)\\
\att{\ct}{\pl+1} - \suc{\pl+1} &> 0.
\end{align*}
\thr{\pl+1} should have been the successful thread, but \thr{\ct} starts a
\re before \suc{\pl+1}, and is therefore succeeding.

We consider now the reverse case by assuming that $\gap{0}{\ct} < \att{\ct}{0} -
\suc{0}$. With the previous line of reasoning, we can show that \thr{0}, $\dots$,
\thr{\ct-1}, because of their successes at \suc{0}, $\dots$, \suc{\ct-1}, successively
make \kth{0}, $\dots$, \kth{(\ct-1)} \res (respectively) of \thr{\ct} fail.  Then we are
back in the same situation when \thr{0} made \thr{\ct} fail for the first time (\thr{0}
makes \thr{\ct} fail), except that the success of \thr{0} starts at $\suc{0}' = \suc{0} +
\gap{0}{\ct}$. As $\gap{0}{\ct} = q+r+f-\ct >0$ and $q$, f and \ct are integers, we have
that $\gap{0}{\ct} \geq r$. By the way, if we had $\gap{0}{\ct} > r$, we would have
$\gap{0}{\ct} \geq 1+r > \att{\ct}{0} - \suc{0}$, which is absurd. \suc{0} makes indeed
\att{\ct}{0} fail, therefore \gap{0}{\ct} should be less than $1$. Consequently, we are
ensured that $\gap{0}{\ct} = r$. We define
\[ k_0 = \left\lfloor \frac{\att{\ct}{0} - \suc{0}}{r} \right\rfloor;  \]
also, for every $k \in \inte[1]{k_0}$, $r < \att{\ct}{0} - (\suc{0} +
k\times r)$ and $r > \att{\ct}{0} - (\suc{0} + (k_0+1)\times r)$: the cycle of
successes of \thr{0}, $\dots$, \thr{\ct-1} is executed $k_0$ times. Then the
situation is similar to the first case, and \thr{\ct} will succeed.

\end{proof}
}

\newcommand{\perfun}{\ema{\sigma}}
\newcommand{\per}[1]{\ema{\perfun \left( #1 \right)}}
\newcommand{\iperfun}{\ema{\perfun^{-1}}}
\newcommand{\iper}[1]{\ema{\iperfun \left( #1 \right)}}
\newcommand{\enfun}{\ema{m_2}}
\newcommand{\en}[1]{\ema{\enfun \left( #1 \right)}}

\newcommand{\enofun}{\ema{m_1}}
\newcommand{\eno}[1]{\ema{\enofun \left( #1 \right)}}

\newcommand{\evafun}{\ema{\mathit{rank}}}
\newcommand{\eva}[1]{\ema{ \evafun \left( #1 \right)}}
\newcommand{\card}[1]{\ema{\# #1}}
\newcommand\restr[2]{{    \left.\kern-\nulldelimiterspace   #1   \vphantom{\big|}   \right|_{#2} }}

\rr{

\rr{\begin{figure}
\tikzlemthree{.6}
\caption{Lemma~\ref{lem.cyc.exec.3} configuration \label{fig:ex-lem.3}}
\end{center}
\end{figure}}

\begin{lemma}
\label{lem.cyc.exec.3}
Let \soe be a \wifs, and $f=\fa{\left( \thr{i}, \fis{i} \right)_{i \in \inte{\ct-2}}}$. If
$\gap{f}{f+1} > 1$,  and if the second success of
\thr{\ct-1} does not occur before the second success of \thr{f-1}, then we can find in
the execution a \wfs $\soe'$ for \ct threads such that $\fa{\soe'} = f$.
\end{lemma}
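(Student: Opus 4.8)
The plan is to reduce the claim to a gap condition through Theorem~\ref{th.wfs.gap} and then to follow the execution forward, exactly in the spirit of that theorem's proof. To produce a \wfs $\soe'$ for \ct threads with $\fa{\soe'}=f$, it suffices, by Theorem~\ref{th.wfs.gap}, to exhibit later in the execution a seed $\soe'$ of \ct threads all of whose order-$f$ gaps lie in $[0,1)$ (with respect to the period $1+q+r+f$). The natural candidate for $\soe'$ is the collection of the \emph{second} successes of the \ct threads: since $\soe$ is a \wifs, the subseed $\left(\thr{i},\fis{i}\right)_{i\in\inte{\ct-2}}$ is a \wfs for $\ct-1$ threads, so each of $\thr{0},\dots,\thr{\ct-2}$ indeed performs, after its first success at \suc{i}, its parallel section, exactly $f$ unsuccessful \res, and a second success; moreover $\thr{\ct-1}$ is, by definition of a \wifs, the first thread to succeed after $\thr{\ct-2}$. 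I would take these \ct first successes as the starting configuration and use Equations~\ref{eq.gats}, \ref{eq.att.suc}, \ref{eq.att.suc2} and~\ref{eq.att.att} to track the second round.

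The heart of the argument is to show that the hypothesis $\gap{f}{f+1}>1$ forces the cycle to re-form with only $f$ failures per thread, rather than the $f+1$ that the \wifs seems to announce. Unlike the situation of Lemma~\ref{lem.cyc.exec.1}, where every order-$(f+1)$ gap is below $1$ and $\thr{0}$ is consequently pushed to an extra failure, here the slack is concentrated in a single oversized gap at position $f$: rewriting $\gap{f}{f+1}$ through Equation~\ref{eq.gats} as a relation between \suc{f} and \suc{\ct-1} shows that the first retry $\att{f}{0}$ of $\thr{f}$ only occurs after $\thr{\ct-1}$ has completed its successful \re. This extra room is exactly what lets $\thr{0},\dots,\thr{\ct-1}$ each succeed again after precisely $f$ failures, so that no additional failure propagates around the circle. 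I would establish this by an induction on the thread index analogous to the properties used in the proof of Theorem~\ref{th.wfs.gap}, checking retry by retry (via Equations~\ref{eq.att.suc} and~\ref{eq.att.suc2}) that each successful \re invalidates exactly the attempts of the not-yet-successful threads that it must, and no more.

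The ordering hypothesis --- that the second success of $\thr{\ct-1}$ does not precede that of $\thr{f-1}$ --- is what pins down the reindexing. It guarantees that in the second round the threads $\thr{0},\dots,\thr{f-1}$ recover their successes before the newly inserted thread $\thr{\ct-1}$, so that the \ct second successes, read in their order of occurrence, form a genuine seed $\soe'$ whose span gives $\fa{\soe'}=f$. For this $\soe'$ the oversized order-$(f+1)$ gap has been split across the insertion point, which together with the bound $\gap{n}{f}<1$ already known for the \wifs yields $\gap{n}{f}<1$ for every $n$; the lower bound $\gap{n}{f}\geq0$ follows once all first-order gaps are non-negative, i.e. once the cycle has settled. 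Theorem~\ref{th.wfs.gap} then certifies that $\soe'$ is a \wfs with $\fa{\soe'}=f$.

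The main obstacle is the bookkeeping in the second step: rigorously proving, attempt by attempt, that $\gap{f}{f+1}>1$ together with the ordering hypothesis prevents any thread from incurring an $(f+1)$-th failure, and that the resulting \ct consecutive successes have order-$f$ gaps that are simultaneously bounded above by $1$ and below by $0$. Concretely, this amounts to re-running the delicate attempt-versus-success comparisons of Theorem~\ref{th.wfs.gap} in the presence of one oversized gap, with particular care around the boundary threads $\thr{f-1}$, $\thr{f}$ and $\thr{\ct-1}$, which are precisely where the reindexing and the case distinction between Lemmas~\ref{lem.cyc.exec.1} and~\ref{lem.cyc.exec.3} take place.
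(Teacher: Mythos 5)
Your endgame --- certifying a later seed through the gap criterion of Theorem~\ref{th.wfs.gap} and reading off $\fa{\soe'}=f$ --- is where the paper's proof also lands, but the dynamical mechanism you rely on to get there is wrong. You assume that the oversized gap lets $\thr{0},\dots,\thr{\ct-1}$ ``each succeed again after precisely $f$ failures'', in their original order, and you plan an induction on the thread index accordingly. In this regime that is precisely what does \emph{not} happen: after the first round, two ordered groups compete for every success slot --- the threads $\thr{0},\dots,\thr{f-1}$, already waiting in the \rl with accumulated failures, and the fresh threads $\thr{f},\thr{f+1},\dots$ arriving from their \pss. The hypothesis $\gap{f}{f+1}>1$ is exactly what gives $\thr{f}$ room to succeed \emph{early}, possibly with no failure at all in this round, overtaking some threads of the first group, which then incur strictly more than $f$ failures during the transition (the paper's illustration of this very lemma is captioned ``Reordering and immediate new seed'', and in it the transitional failure counts differ from thread to thread). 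The paper's proof therefore introduces a rank, equal to $\gap{n}{n+1}$ for a waiting thread and $\gap{n}{n+1}-1$ for a fresh one, defines the permutation \perfun that merges the two groups by increasing rank, and proves by induction on the \emph{success index} $K$ --- not the thread index --- that the \kth{K} success is taken by $\thr{\iper{K}}$ at attempt \att{\iper{K}}{f+K-\iper{K}}; note that the retry count $f+K-\iper{K}$ equals $f$ only for threads the permutation leaves in place.

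Because of this, the two steps you defer as ``bookkeeping'' cannot be carried out as you describe. Your induction on thread index breaks at the first overtaking event, \ie the first $K$ with $\iper{K}\neq K$; and the ordering hypothesis does not preclude such events --- it only says that \thr{\ct-1} does not beat \thr{f-1}, while $\thr{f},\dots$ may interleave freely among the second successes of $\thr{0},\dots,\thr{f-1}$. Consequently your candidate seed, the second successes ``in order of occurrence'', is not indexed by the original thread order, the gap computations via Equation~\ref{eq.gats} with the original indices are invalid, and the claimed inheritance of $\gap{n}{f}<1$ from the \wifs does not follow until the merged order is known. The paper's closing argument --- each gap in the new order is at most the corresponding gap in the old order, and $\fa{\soe'}=f$ because the thread closing the round succeeds after exactly $f$ failures --- is exactly the inference you want, but it only becomes available once the permutation and the induction on $K$ have established who succeeds when. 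The missing idea, in short, is the reordering.
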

\begin{proof}

\newcommand{\seth}[1]{\ema{\mathcal{S}_{#1}}}

Let us first remark that, by the definition of a \wifs, all threads will succeed once, in
order.  Then two ordered groups of threads will compete for each of the next successes,
until \thr{f-1} succeeds for the second time.

Let $e$ be the smallest integer of \inte[f]{\ct-1} such that the second success of \thr{e}
occurs after the second success of \thr{f-1}. Let then \seth{1} and \seth{2} be
the two groups of threads that are in competition, defined by
\begin{align*}
\seth{1} = \{ \thr{n} \; ; \; n \in \inte{f-1} \}\\
\seth{2} = \{ \thr{n} \; ; \; n \in \inte[f]{e-1} \}\\
\end{align*}

For all $n \in \inte{e-1}$, we note
\[ \eva{n}  = \left\{ \begin{array}{ll}
\gap{n}{n+1} & \text{ if } \thr{n} \in \seth{1}\\
\gap{n}{n+1} -1 \quad & \text{ if } \thr{n} \in \seth{2}
\end{array} \right. . \]
We define \perfun, a permutation of \inte{e-1} that describes the reordering of
the threads during the round of the second successes, such that, for all $(i,j)
\in \inte{e-1}^2$, $\per{i} < \per{j}$ if and only if $\eva{i} < \eva{j}$.

We also define a function that will help in expressing the \iper{k}'s:
\[
\begin{array}{cccc}
\enfun:& \inte{e-1} & \longrightarrow & \inte[f]{e-1}\\
 & k & \longmapsto & \max {\{ \ell \in \inte[f]{e-1} \; ; \; \thr{\ell} \in \seth{2} \; ; \; \per{\ell} \leq k  \} }
\end{array}.
\]

We note that $\restr{\evafun}{\inte{f-1}}$ is increasing, as well as
$\restr{\evafun}{\inte[f]{e-1}}$. This shows that
$\card \{ \thr{\ell} \in \seth{2} \; ; \; \per{\ell} \leq k  \} = \en{k} -(f -1)$.
Consequently, if $\thr{\iper{k}} \in \seth{2}$, then
\begin{align*}
\en{k} &= \card \{ \thr{\ell} \in \seth{2} \; ; \; \per{\ell} \leq k  \} +f -1\\
&= \card \{ \thr{\ell} \in \seth{2} \; ; \; \ell \leq \iper{k}  \} +f -1\\
&= \iper{k}-f+1+f-1\\
\en{k} &= \iper{k}.
\end{align*}

Conversely, if $\thr{\iper{k}} \in \seth{1}$, among $\{ \thr{\per{n}} \; ; \; n
\in \inte{k} \}$, there are exactly $\en{k}-f+1$ threads in \seth{2}, hence
\[ \iper{k} = k+1-(\en{k}-f+1)-1 = f + k -\en{k} -1.\]

In both cases, among $\{ \thr{\per{n}} \; ; \; n \in \inte{k} \}$, there are
exactly $\en{k}-f+1$ threads in \seth{2}, and $\eno{k} = k - (\en{k}-f)$ threads
in \seth{1}.

We prove by induction that after this first round, the next successes will be,
respectively, achieved by \thr{\iper{0}}, \thr{\iper{1}}, $\dots$,
\thr{\iper{e-1}}. In the following, by ``\kth{k} success'', we mean \kth{k}
success after the first success of \thr{\ct-1}, starting from $0$, and the
\att{i}{j}'s denote the attempts of the second round.

Let \pro{K} be the following property: for all $k \leq K$, the \kth{k} success
is achieved by \thr{\iper{k}} at \att{\iper{k}}{f+k-\iper{k}}. We assume \pro{K}
true, and we show that the \kth{(K+1)} success is achieved by \thr{\iper{K+1}} at
\att{\iper{K+1}}{f+K+1-\iper{K+1}}.

We first show that if $\thr{\iper{K}} \in \seth{1}$, then
\begin{equation}
 \att{\en{K}+1}{\eno{K}-1} > \att{\iper{K}}{f+K-\iper{K}} > \att{\en{K}}{\eno{K}}.
\label{eq.inv.fm1}
\end{equation}
On the one hand,
\begin{align*}
\att{\iper{K}}{f+K-\iper{K}} &= K - \iper{K} + \att{\iper{K}}{f}\\
&= K - \iper{K} + \att{0}{f} + \iper{K} + \gap{\iper{K}}{\iper{K}}\\
&= K + \suc{\ct-1} +1 + \gap{0}{1} + \gap{\iper{K}}{\iper{K}}\\
\att{\iper{K}}{f+K-\iper{K}}
&= K + \suc{\ct-1} +1 + \gap{\iper{K}}{\iper{K}+1}.\\
\end{align*}
On the other hand,
\begin{align*}
\att{\en{K}}{f+K-\en{K}} &= (\en{K}-f) + \att{f}{K-(\en{K}-f)}  + \gap{\en{K}}{\en{K}-f}\\
&= (\en{K}-f)  + K - (\en{K}-f) + \att{f}{0} + \gap{\en{K}}{\en{K}-f}\\
&= (\en{K}-f)  + K - (\en{K}-f) + \suc{\ct-1} + 1 + (\gap{f}{f+1}-1) + \gap{\en{K}}{\en{K}-f}\\
\att{\en{K}}{f+K-\en{K}}
&= K + \suc{\ct-1} + 1 + \gap{\en{K}}{\en{K}+1} -1.\\
\end{align*}
Therefore,
\begin{align*}
\att{\iper{K}}{f+K-\iper{K}} - \att{\en{K}}{\eno{K}}
&= \att{\iper{K}}{f+K-\iper{K}} - \att{\en{K}}{f+K-\en{K}}\\
&= \gap{\iper{K}}{\iper{K}+1} - \left( \gap{\en{K}}{\en{K}+1} -1  \right)\\
\att{\iper{K}}{f+K-\iper{K}} - \att{\en{K}}{\eno{K}}&= \eva{\iper{K}} - \eva{\en{K}}.
\end{align*}

In a similar way, we can obtain that if $\thr{\iper{K}} \in \seth{2}$, then
\begin{equation}
\att{\eno{K}}{\en{K}} > \att{\iper{K}}{f+K-\iper{K}} > \att{\eno{K}-1}{\en{K}+1}.
\label{eq.inv.em1}
\end{equation}

In addition, we recall that if $\thr{\iper{K}} \in \seth{2}$, $\iper{K} = \en{K}$,
thus the second inequality of Equation~\ref{eq.inv.fm1} becomes an equality, and
if $\thr{\iper{K}} \in \seth{1}$, $\iper{K} = f + K -\en{K} -1$, hence the second
inequality of Equation~\ref{eq.inv.em1} becomes an equality.

Now let us look at which attempt of other threads \thr{\iper{K}} made fail. From
now on, and until explicitly said otherwise, we assume that $\thr{\iper{K}} \in
\seth{1}$. According to Equation~\ref{eq.inv.fm1}, we have
\begin{align*}
\att{\en{K}+1}{\eno{K}-1} &>& \att{\iper{K}}{f+K-\iper{K}} &>& \att{\en{K}}{\eno{K}}\\
\att{\en{K}+j}{\eno{K}-j} - \att{\en{K}+1}{\eno{K}-1} &<& \att{\en{K}+j}{\eno{K}-j} - \att{\iper{K}}{f+K-\iper{K}}
&<& \att{\en{K}+j}{\eno{K}-j} - \att{\en{K}}{\eno{K}}\\
\gap{\en{K}+j}{j-1} &<& \att{\en{K}+j}{\eno{K}-j} - \att{\iper{K}}{f+K-\iper{K}} &<& \gap{\en{K}+j}{j}
\end{align*}
This holds for every $j \in \inte[1]{\eno{K}}$, implying $j \leq f$, since there
could not be more than $f$ threads in \seth{1}.
Therefore, as by assumptions gaps of at most \kth{f} order are between $0$ and $1$,
\[ 0 < \att{\en{K}+j}{\eno{K}-j} - \att{\iper{K}}{f+K-\iper{K}} < 1; \]
showing that the success of \thr{\iper{K}} makes thread \thr{\en{K}+j} fail on
its attempt at \att{\en{K}+j}{\eno{K}-j}, for all $j \in \inte[1]{\eno{K}}$.

Since $\thr{\iper{K}} \in \seth{1}$, $\iper{K} = \eno{K}-1$. Also, for all $j
\in \inte{f-1-\eno{K}}$,
\begin{align*}
\att{\eno{K} + j}{\en{K} - j} - \att{\iper{K}}{f+K-\iper{K}}
&= \att{\eno{K} + j}{\en{K} - j} - \att{\eno{K} - 1}{\en{K} + 1} \\
&= \left( \att{\eno{K} -1}{\en{K} - j} + (j+1) + \gap{\eno{K}+j}{j+1} \right)
 - \left( \att{\eno{K} - 1}{\en{K} -j} + (j+1)  \right) \\
\att{\eno{K} + j}{\en{K} - j} - \att{\iper{K}}{f+K-\iper{K}}
&= \gap{\eno{K}+j}{j+1}
\end{align*}
As a result, \thr{\iper{K}} makes \thr{\eno{K} + j} fail on its attempt at
\att{\eno{K}+j}{\en{K} - j}, for all $j\in \inte{f-1-\eno{K}}$, and the next
attempt will occur at \att{\eno{K}+j}{\en{K} - j+1}.

Altogether, the next attempt after the end of the success of \thr{\iper{K}} for
\thr{\eno{K}+j} is \att{\eno{K}+j}{\en{K}-j+1}, for $j\in \inte{f-1-\eno{K}}$,
and for \thr{\en{K}+j} is \att{\en{K}+j}{\eno{K}-j+1}, for all $j \in
\inte[1]{\eno{K}}$.

Additionally, a thread will begin a new \rl, the \kth{0} \re being
at $\att{\en{K}+\eno{K}+1}{0} = \att{f+K+1}{0}$. We note that $f+K+1$ could be
higher than $\ct-1$, referring to a thread whose number is more than $\ct-1$.
Actually, if $n>\ct-1$, \att{n}{j} refers to the \kth{j} \re of
\thr{\eva{n-\ct+1}}, after its first two successes.

The two heads, \ie the two smallest indices, of $\seth{1} \cap \iper{\inte[K+1]{e-1}}$ and
$\seth{2} \cap \iper{\inte[K+1]{e-1}}$ will then compete for being successful. Indeed,
within \seth{1}, for $j\in \inte{f-1-\eno{K}}$,
\[ \att{\eno{K}+j}{\en{K}-j+1} - \att{\eno{K}}{\en{K}+1} = \gap{\eno{K}+j}{j} > 0, \]
thus if someone succeeds in \seth{1}, it will be \thr{\eno{K}}.
In the same way, for all $j \in \inte[1]{\eno{K}+1}$,
\[ \att{\en{K}+j}{\eno{K}-j+1}  - \att{\en{K}+1}{\eno{K}} = \gap{\en{K}+j}{j-1} > 0, \]
meaning that if someone succeeds in \seth{2}, it will be \thr{\en{K}+1}.

Let us compare now those two candidates:
\begin{align*}
\att{\eno{K}}{\en{K}+1} - \att{\en{K}+1}{\eno{K}}
&= \en{K}+1-f+\suc{\ct-1} + \eno{K} + \gap{\eno{K}}{\eno{K}+1}\\
& \quad \quad - \left( \eno{K} + \att{f}{0} + \en{K}+1 -f  + \gap{\en{K}+1}{\en{K}+1-f} \right)\\
&= \suc{\ct-1} -1 + \gap{\eno{K}}{\eno{K}+1}\\
& \quad \quad - \left(  \suc{\ct-1} + \gap{f}{f+1} -1  + \gap{\en{K}+1}{\en{K}+1-f} \right)\\
&= \gap{\eno{K}}{\eno{K}+1} - \left( \gap{\en{K}+1}{\en{K}+2} -1 \right)\\
\att{\eno{K}}{\en{K}+1} - \att{\en{K}+1}{\eno{K}}&= \eva{\eno{K}} - \eva{\en{K}+1}.\\
\end{align*}

By definition, \iper{K+1} is either \eno{K} or $\en{K}+1$ and corresponds to the
next successful thread.
We can follow the same line of reasoning in the case where $\thr{\iper{K}} \in \seth{2}$
and prove in this way that \pro{K+1} is true.

\pro{0} is true, and the property spreads until \pro{e-1}, where all threads of
\seth{1} and \seth{2} have been successful, in the order ruled by \iperfun, \ie
\thr{\iper{0}}, $\dots$, \thr{\iper{e-1}}. And before those successes the threads
\thr{e-1}=\thr{\iper{e-1}}, $\dots$, \thr{\ct-1} have been successful as
well. The seed composed of those successes is a \wfs. Given a thread, the gap
between this thread and the next one in the new order could indeed not be higher
than the gap in the previous order with its next thread. Also the \kth{f} order
gaps remain smaller than $1$. And as \thr{e-1} succeeds the second time after
$f$ failures, it means that the new seed $\soe''$ is such that $\fa{\soe''} =
f$.\qedhere

\end{proof}

\begin{figure}
\tikzlemtwo{.6}
\caption{Lemma~\ref{lem.cyc.exec.2} configuration \label{fig:ex-lem.2}}
\end{center}
\end{figure}

\begin{lemma}
\label{lem.cyc.exec.2}
Let \soe be a \wifs, and $f=\fa{\left( \thr{i}, \fis{i} \right)_{i \in \inte{\ct-2}}}$. If
$\gap{f}{f+1} > 1$ and if the second success of \thr{\ct-1} occurs before the
second success of \thr{f-1}, then we can find in the execution a \wfs $\soe'$ for \ct
threads such that $\fa{\soe'} = f$.
\end{lemma}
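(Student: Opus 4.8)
The plan is to mirror the proof of Lemma~\ref{lem.cyc.exec.3}, specialised to the degenerate case in which the crossover index introduced there equals \ct. By the definition of a \wifs, all \ct threads succeed once, in the order $\thr{0}, \dots, \thr{\ct-1}$, after which the two ordered groups $\{\thr{0}, \dots, \thr{f-1}\}$ and $\{\thr{f}, \dots, \thr{\ct-1}\}$ compete for the second round of successes. Since the restriction of \evafun to $\inte[f]{\ct-1}$ is increasing, \thr{\ct-1} realises the largest rank value of the second group; the hypothesis that the second success of \thr{\ct-1} precedes that of \thr{f-1} thus forces \emph{every} thread of the second group to succeed before \thr{f-1}. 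In the terms of Lemma~\ref{lem.cyc.exec.3}, this means that no crossover index $e \in \inte[f]{\ct-1}$ exists, so I would simply take $e = \ct$ and let the reordering round involve all \ct threads.

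First I would reuse the ranking \evafun and the permutation \perfun verbatim, now over the whole set $\inte{\ct-1}$, with $\eva{n} = \gap{n}{n+1}$ for $n \in \inte{f-1}$ and $\eva{n} = \gap{n}{n+1} - 1$ for $n \in \inte[f]{\ct-1}$. Both restrictions of \evafun being increasing, and $\eva{\ct-1} < \eva{f-1}$ by hypothesis, the global maximum rank value is $\eva{f-1}$, so the last thread to succeed in the reordered round is $\thr{\iper{\ct-1}} = \thr{f-1}$. Then I would run the same induction as in Lemma~\ref{lem.cyc.exec.3}, with the property $\pro{K}$ stating that the \kth{K} second success is achieved by \thr{\iper{K}} at the predicted attempt. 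The inductive step depends only on the two chains of inequalities comparing the attempt times of the heads of the two groups, and these carry over unchanged, since they use nothing beyond the facts that gaps of order at most $f$ lie in $[0,1)$ --- guaranteed by Theorem~\ref{th.wfs.gap} applied to the sub-seed --- and that $\gap{f}{f+1} > 1$.

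Once the round is complete, the \ct consecutive second successes, taken in the order $\thr{\iper{0}}, \dots, \thr{\iper{\ct-1}}$, form a seed $\soe'$. Because \thr{f-1} closes the round after exactly $f$ failures and every freshly formed \kth{f} order gap stays below $1$, Theorem~\ref{th.wfs.gap} certifies that $\soe'$ is a \wfs with $\fa{\soe'} = f$, which is the claim. The main obstacle I anticipate is the boundary behaviour at $e = \ct$: in Lemma~\ref{lem.cyc.exec.3} the threads of index at least $e$ form a buffer that the leapfrog never touches and that pins down the end of the round, whereas here that buffer is absent and the reordered successes feed directly into the next period, so the fresh attempts \att{f+K+1}{0} produced at the tail of the induction may refer to threads of the following period (indices beyond $\ct-1$). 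The delicate point is to confirm that the head-to-head comparison $\att{\eno{K}}{\en{K}+1} - \att{\en{K}+1}{\eno{K}} = \eva{\eno{K}} - \eva{\en{K}+1}$ still picks the correct next thread across this wrap-around, so that \thr{f-1} indeed ends the round last with $f$ failures; this is exactly where Lemma~\ref{lem.cyc.exec.2} genuinely departs from Lemma~\ref{lem.cyc.exec.3}.
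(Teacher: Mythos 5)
You have correctly isolated the wrap-around as the point where this lemma departs from Lemma~\ref{lem.cyc.exec.3}, but your proposal stops exactly there: the ``delicate point'' you defer is the entire content of the lemma, and the clean-round picture you build on top of it is false in general. The inductive step of Lemma~\ref{lem.cyc.exec.3} compares the heads of the two competing groups through gaps of the \emph{original} seed; after the wrap, the competing head is a thread re-entering its \emph{third} \rl, and its distance to the still-waiting threads is governed by the new, post-reordering gaps, which have shrunk --- so your claim that the comparison $\att{\eno{K}}{\en{K}+1} - \att{\en{K}+1}{\eno{K}} = \eva{\eno{K}} - \eva{\en{K}+1}$ ``carries over unchanged'' is exactly what breaks. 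The paper's actual proof shows what happens instead. Writing $P'$ for the number of successes between the two successes of \thr{\ct-1} and $h$ for its number of failures there, each intermediate success occupies its own unit slot shifted by less than \gap{f-1}{f}, so $P' < 1+q+r+h < P' + \gap{f-1}{f}$; since $q$, $h$, $P'$ are non-negative integers and $\gap{f-1}{f}<1$, this forces $r < \gap{f-1}{f}$ and $h = P'-1-q$, \ie the circular distance between the two successes of \thr{\ct-1} is exactly $r$. Hence the $P'$ threads that succeeded in between form a \wfs on only $P' \leq \ct-1$ threads, in which every gap of every order is below $r$ --- and the leftover threads, \thr{f-1} among them, are \emph{not} in it. They are absorbed only eventually, one at a time, via Lemma~\ref{lem.event.succ}, where a straggler may wait $k_0$ whole cycles, \ie suffer far more than $f$ failures, before its second success. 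Your concluding step (``\thr{f-1} closes the round after exactly $f$ failures and every freshly formed \kth{f} order gap stays below $1$'') therefore asserts precisely what fails under the hypothesis of this lemma: there is a genuine transient state, not one complete reordered round with $e=\ct$.

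A second, consequent gap is that your argument has no mechanism to produce $\fa{\soe'} = f$. In the paper, the \wfs obtained after all absorptions satisfies $\fa{\soe'} = \ct-1-q$ (an optimal cyclic execution whose \lag is $r$), and the identification $\ct-1-q = f$ comes from bookkeeping on the \lag rather than from counting failures of \thr{f-1}: from $\gap{\ct-1}{\ct-1-f} < r$, $\gap{f-1}{f} < 1$ and $\gap{f}{1} < 1$ the initial \lag is below $2+r$, and the hypothesis $\gap{f}{f+1} > 1$ then pins it at $1+r$ before the new thread entered, so that ending with \lag equal to $r$ and one more success per period yields $\fa{\soe'} = f$. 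Your sketch contains no substitute for this computation. To be fair, the paper itself remarks that this case ``could have been handled in the previous lemma'' at the price of tricky notation for attempts \att{n}{j} with $n > \ct-1$, so your instinct to set $e = \ct$ is not baseless; but making it rigorous requires exactly the analysis of the post-wrap gaps sketched above, and that is what your proposal leaves out.
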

\begin{proof}
\pp{In~\cite{our-long}, we show that in this case, the gap between two successes of
\thr{\ct-1} is $r$, hence all gaps at any order less than \ct will eventually become less
than $r$, which builds the \wfs.
}
\rr{Until the second success of \thr{\ct-1}, the execution follows the same pattern as in
  Lemma~\ref{lem.cyc.exec.3}. Actually, the case invoked in the current lemma could have
  been handled in the previous lemma, but it would have implied tricky notations, when we
  referred to \thr{\eva{n-\ct+1}}. Let us deal with this case independently then, and come
  back to the instant where \thr{\ct-1} succeeds for the second time.

We had $0 < \att{f-1}{0} - \suc{\ct-1} = \gap{f-1}{f} < 1$. For the thread \thr{\per{j}}
to succeed at its \kth{k} \re after the first success of \thr{\ct-1} and before \thr{f-1},
it should necessary fill the following condition: $j+1 < \att{\per{j}}{k} - \suc{\ct-1} <
j+1 + \gap{f-1}{f}$. This holds also for the second success of \thr{\ct-1}, which implies
that $P' < \suc{\ct-1} + 1 + q + r + h - \suc{\ct-1} < P' + \gap{f-1}{f}$, where $h$ is
the number of failures of \thr{\ct-1} before its second success and $P'$ is the number of
successes between the two successes of \thr{\ct-1}. As $\gap{f-1}{f}<1$, and $q$, $P'$ and
$h$ are non-negative integers, we have $r< \gap{f-1}{f}$ and $h = P' - 1 -q$.

To conclude, as any gap at any order is less than the gap between the two
successes of \thr{\ct-1}, which is $r < 1$, we found a \wfs for $P'$ threads.

Finally any other thread will eventually succeed (see
Lemma~\ref{lem.event.succ}). We can renumber the threads such that
\thr{P'} is the first thread that is not in the \wfs to succeed, and
the threads of the \wfs succeeded previously as \thr{0}, $\dots$,
\thr{P'-1}. As explained before, for all $(k,n) \in \inte{P'-1}^2$,
$\gap{n}{k} < \gap{n}{n} = r$. With the new thread, the first order
gaps are changed by decomposing \gap{0}{1} into \gap{P'}{1} and the
new \gap{0}{1}. All gaps can only be decreased, hence we have a new
\wfs for $P'+1$ threads. We repeat the process until all threads have
been encountered, and obtain in the end $\soe'$, a \wfs with \ct
threads such that $\fa{\soe'} = \ct -1 -q$, which is an optimal cyclic
execution.

Still, as \thr{f} succeeds between two successes of \thr{\ct-1} that are separated by $r$,
we had, in the initial configuration: $\gap{\ct-1}{\ct-1-f} < r$. As, in addition, we have
both $\gap{f-1}{f} <1 $ and $\gap{f}{1} <1 $, we conclude that the lagging time was
initially less than $2+r$. By hypothesis, we know that $\gap{f}{f+1} >1 $, which implies
that, before the entry of the new thread, the lagging time was $1+r$.
In the final execution with one more thread, the lagging time is $r$ and we have one more
success in the cycle, thus $\fa{\soe'} = f$.}
\end{proof}

}

\begin{theorem}
\label{th.cyc.exec}
Assuming $r \neq 0$, if a new thread is added to an \cyex{f}{\ct-1}, then all the
threads will eventually form either an \cyex{f}{\ct}, or an \cyex{f+1}{\ct}.
\end{theorem}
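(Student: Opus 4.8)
The plan is to reduce the statement to the three seed--manipulation lemmas: once the new thread has succeeded I exhibit a \wifs for \ct threads, and then split according to whether the period must grow by one (producing an \cyex{f+1}{\ct}) or not (producing an \cyex{f}{\ct}).

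First I would let the new thread run against the still--intact \cyex{f}{\ct-1}. A failing \re leaves the access point untouched, so while the new thread keeps failing it cannot perturb the successes of the \ct-1 original threads, and the original cycle persists verbatim. By Lemma~\ref{lem.event.succ}---which is exactly where $r\neq 0$ is needed---the new thread eventually succeeds. I then take the \ct-1 successes of the original threads immediately preceding this first new success: by Property~\ref{pr.wfs.soe1} they form a \wfs for \ct-1 threads with associated failure number $f$, and since the new thread (which I number \thr{\ct-1}) is the first to succeed after them, the whole family is by definition a \wifs \soe for \ct threads. The preliminary property established just after the definition of \wifs then yields $\gap{n}{f}<1$ for every $n\in\inte{\ct-1}$.

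The core is a case split on the $(f+1)$-order gaps. If $\gap{n}{f+1}<1$ for all $n$, then Lemma~\ref{lem.cyc.exec.1} exhibits, later in the execution, a \wfs $\soe'$ for \ct threads with $\fa{\soe'}=f+1$, and Property~\ref{pr.wfs.soe2} promotes it to a genuine \cyex{f+1}{\ct}. Otherwise some $(f+1)$-order gap reaches $1$; I would use the freedom in the choice of the seed origin (a cyclic renumbering of the kind already used at the end of the proof of Theorem~\ref{th.wfs.gap}), together with $\gap{n}{f}<1$, to arrange that the offending gap is $\gap{f}{f+1}>1$. Then I branch on the relative order of the second successes of \thr{\ct-1} and \thr{f-1}: if the former does not precede the latter, Lemma~\ref{lem.cyc.exec.3} gives a \wfs $\soe'$ with $\fa{\soe'}=f$; if it does, Lemma~\ref{lem.cyc.exec.2} gives the same conclusion. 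In both sub--cases Property~\ref{pr.wfs.soe2} turns $\soe'$ into an \cyex{f}{\ct}, which closes the argument.

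I expect the main obstacle to be the reduction opening the second case, namely that the dichotomy ``$\gap{n}{f+1}<1$ for all $n$'' versus ``$\gap{f}{f+1}>1$'' is genuinely exhaustive. Concretely, one must show that, for a numbering compatible with the \wifs (the new thread staying \thr{\ct-1} and the first $f$ successes forming the group Lemma~\ref{lem.cyc.exec.3} calls $\mathcal{S}_1$), the single $(f+1)$-order gap that can exceed $1$ is the one anchored at \thr{f}; the ingredients here are the bound $\gap{n}{f}<1$ for all $n$ and the fact that the first--order gaps sum to the \lag increased by $r$. A minor loose end is the boundary value $\gap{n}{f+1}=1$, which is excluded by the convention that forbids simultaneous events and makes all the relevant time comparisons strict.
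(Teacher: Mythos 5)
Your proposal reproduces the paper's proof essentially step for step: Lemma~\ref{lem.event.succ} for the eventual success of the new thread (indeed the only place $r\neq0$ is used), the construction of a \wifs from the $\ct-1$ successes immediately preceding that first success, the dichotomy on the $(f+1)$-order gaps with Lemma~\ref{lem.cyc.exec.1} yielding the \cyex{f+1}{\ct} branch, and Lemmas~\ref{lem.cyc.exec.3} and~\ref{lem.cyc.exec.2} yielding the \cyex{f}{\ct} branch --- where you even attach the two lemmas to the correct sub-cases, whereas the paper's text cites Lemma~\ref{lem.cyc.exec.3} for both. One caution on your closing paragraph: the reduction does not need (and should not claim) that the offending $(f+1)$-order gap is unique --- several gaps $\gap{n}{f+1}$ can exceed $1$ simultaneously --- the paper merely shifts the numbering so that \emph{some} offending index $n_0$ sits at position $f$, since the hypotheses of Lemmas~\ref{lem.cyc.exec.3} and~\ref{lem.cyc.exec.2} only require $\gap{f}{f+1}>1$ for the chosen numbering.
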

\rr{ \begin{proof}
According to Lemma~\ref{lem.event.succ}, the new thread will eventually
succeed.  In addition, we recall that Properties~\ref{pr.wfs.soe1} and~\ref{pr.wfs.soe2} ensure that
before the first success of the new thread, any set of $\ct -1$ consecutive
successes is a \wfs with $\ct-1$ threads. We then consider a seed (we number the
threads accordingly, and number the new thread as \thr{\ct-1}) such that the
success of the new thread occurs between the success of \thr{\ct-2} and \thr{0};
we obtain in this way a \wifs $\soe = \left( \thr{n}, \fis{n} \right)_{n \in \inte{\ct-1} \& }$. We differentiate between two cases.

Firstly, if for all $n \in \inte{\ct-1}$, $\gap{n}{f+1} < 1$, according to
Lemma~\ref{lem.cyc.exec.1}, we can find later in the execution a \wfs $\soe'$
for \ct threads such that $\fa{\soe'} = f +1$, hence we reach eventually an
\cyex{f+1}{\ct}.

Let us assume now that this condition is not fulfilled. There exists $n_0 \in
\inte{\ct-1}$ such that $\gap{n_0}{f+1} > 1$. We shift the thread numbers, such
that $n_0$ is now $f$, and we have then $\gap{f}{f+1} > 1$. Then two cases are
feasible. If the second success of \thr{\ct-1} occurs before the second success
of \thr{f-1}, then Lemma~\ref{lem.cyc.exec.3} shows that we will reach an
\cyex{f}{\ct}. Otherwise, from Lemma~\ref{lem.cyc.exec.3}, we conclude that an
\cyex{f}{\ct} will still occur.
} \pp{
\begin{proof}
We decompose the Theorem into three Lemmas which we describe here
  graphically:
\begin{compactitem}
\item If all gaps of \kth{(f+1)} order are less than 1, then every existing thread will
  fail once more, and the new steady-state is reached immediately. See
  Figure~\ref{fig:ex-lem.1}.
\item Otherwise
\begin{compactitem}
\item Either: everyone succeeds once, whereupon a new \cyex{f}{\ct} is formed. See
  Figure~\ref{fig:ex-lem.3}.
\item Or: before everyone succeeds again, a new \cyex{f}{\ct'}, where $\ct' \leq \ct$,
  is formed, which finally leads to an \cyex{f}{\ct}. See Figure~\ref{fig:ex-lem.2}.\qedhere
\end{compactitem}
\end{compactitem}
}
\end{proof}

\tra{

We know that adding a thread in an \cyex{f}{\ct-1} leads to an execution that will
eventually coincide with an \cyex{f'}{\ct} (Theorem~\ref{th.cyc.exec}), where $f'=f+1$ or
$f'=f$. However, before superimposing this \cyex{f'}{\ct}, the execution goes through a
transient state, during which the adding of a new thread is {\it a priori} problematic.
Theorem~\ref{th.final} solves this issue.

\begin{theorem}
\label{th.final}
Assuming $r \neq 0$, if the program is fully-parallel, then,
after a transient behavior, the execution becomes periodic, and during the shortest
period, each thread succeeds exactly once, and fails the same number of times.
\end{theorem}

\newcommand{\figa}{\ema{i_0}}

\begin{proof}
As expected, we prove the theorem by induction after numbering the
threads according to their order of first success.

Let us assume that, for a given $i \in \inte{\ct-1}$, \thr{i} succeeds
for the first time while the execution of the threads $\thr{0}, \dots,
\thr{i-1}$ is following an \cyex{f}{i}. We then distinguish the cases
similarly to Lemmas~\ref{lem.cyc.exec.1},~\ref{lem.cyc.exec.3}
and~\ref{lem.cyc.exec.2}.

\smallskip

If all the \kth{(f+1)} order gaps are less than one, then there is no
transient phase, and as soon as the new thread \thr{i} is successful,
the execution with all the $(i+1)$ threads coincide with an
\cyex{f+1}{i+1}. Also, the first success of the next new thread
\thr{i+1} occurs while the first $i+1$ threads follows an
\cyex{f+1}{i+1}.

\smallskip

We assume now that some of the \kth{(f+1)} order gaps are not less
than one. Before \thr{i} succeeds, we then have, by definition, that
every thread $\thr{0}, \dots, \thr{i-1}$ succeeds exactly once within
the shortest period. We renumber the threads such that before the
success of \thr{i} (which is renumbered as \thr{f-1 \bmod i}), the
successes are chronologically performed by $\thr{f}, \dots, \thr{f+i-1
  \bmod i}$. In addition, we define \figa as the smallest index in
\inte{i} such that $\gap{\figa}{f+1}>1$.

We split the reasoning into two cases, according to the shape of the
execution of those $i$ threads, when no other thread is added
afterwards.

\smallskip

After the first success of the new thread, if \thr{\figa} succeeds
twice before \thr{\figa +f+1 \bmod i+1} succeeds, then we are in the
case of Lemma~\ref{lem.cyc.exec.2}, and a \wfs is reached, such that
(i) the first success of the seed is the first success of \thr{\figa},
(ii) it involves $i'<i$ threads, and (iii) its lagging time is $r$. In
the same way as in the previous case (when the second success of
\thr{\figa} is after the first success of \thr{\figa +f+1 \bmod i+1}),
we can show that the \cyex{f}{i'} associated with the \wfs is
equivalent to the actual execution, restricted to those $i'$ threads.

Also, we can shunt the tail of the induction by remarking that, in this case, adding any
new thread (which could be either one of the $i-i'$ threads that have not been included in
the \wfs, or one of the threads in the next steps of the induction) will not change the
lagging time, because it is minimum, and will just lead, without transient behavior, to a
new cyclic execution that comprises one more thread and one more failure.

If some of the \kth{(f+1)} order gaps are not less than one, then let \figa the smallest
index in \inte{i} such that $\gap{\figa}{f+1}>1$.

\smallskip

\begin{itemize}
\item We know that adding a thread in an \cyex{f}{\ct-1} leads to an execution that will
  eventually coincide with an \cyex{f'}{\ct} (Theorem~\ref{th.cyc.exec}), where $f'=f+1$
  or $f'=f$. However, before superimposing this \cyex{f'}{\ct}, the execution resides in a
  transient state, during which the adding of a new thread is {\it a priori}
  problematic, for different reasons.
\item We can still distinguish the cases in the same way as Lemmas~\ref{lem.cyc.exec.1},~\ref{lem.cyc.exec.3}
and~\ref{lem.cyc.exec.2}.

\item No problem with Lemmas~\ref{lem.cyc.exec.1} and~\ref{lem.cyc.exec.3}
\begin{itemize}
\item Lemma~\ref{lem.cyc.exec.1}: no transient phase
\item Lemma~\ref{lem.cyc.exec.3}: biggest gap will always be $r$
\end{itemize}
\item bit more complicated in case of Lemma~\ref{lem.cyc.exec.2}: need to show that during the transient behavior
\begin{itemize}
\item we can make the threads with less than $f$ failures appear earlier: ok, will still
  fail (gaps less than one, hence cannot go in front)
\item we can make the threads with less than $f$ failures appear later: no early failure
  can be useful (\ie can be transformed into a success if we add a new thread in a good
  manner), because all waiting on the circle, will have to wait anyway.
\end{itemize}
\end{itemize}
\end{proof}
} 

\pp{
\setlength{\intextsep}{0.2\baselineskip plus 0.2\baselineskip minus 2\baselineskip}
\setlength{\belowcaptionskip}{15pt}
\begin{figure}
\begin{subfigure}[b]{\textwidth}
\captionsetup{skip=-10pt}
\caption{New thread does not lead to a reordering}\label{fig:ex-lem.1}
\tikzlemone{.44}
\end{center}
\end{subfigure}
\begin{subfigure}[b]{\textwidth}
\captionsetup{skip=-10pt}
\caption{Reordering and immediate new seed}\label{fig:ex-lem.3}
\tikzlemthree{.44}
\end{center}
\end{subfigure}
\begin{subfigure}[b]{\textwidth}
\captionsetup{skip=-10pt}
\caption{Reordering and transient state}\label{fig:ex-lem.2}
\tikzlemtwo{.42}
\end{center}
\end{subfigure}
\setlength{\abovecaptionskip}{0pt}
\setlength{\belowcaptionskip}{0pt}
\caption{Illustration of Theorem~\ref{th.cyc.exec}}
\end{figure}
}

\subsection{Throughput Bounds}
\label{sec:wt-thr}

\pp{
The periodicity offers an easy way to compute the expected number of threads inside the \rl,
and to bound the number of failures and the throughput.

\begin{lemma}
\label{lem:av-thr}
In an \cyex{f}{\ct}, the throughput is $\thru = \frac{\ct}{q+r+1+f}$, and the average number of threads in the \rl
$\trl = \ct \times \frac{f+1}{q+r+f+1}$.
\end{lemma}

\begin{lemma}
\label{lem:wt-bounds}
The number of failures is tighly bounded by $\fup\leq f\leq\flo$, and throughput
by $\tup\leq\thru\leq\tlo$, where
\begin{align*}
\fup = \left\{ \begin{array}{ll}
\ct-q-1 \quad& \text{if } q \leq \ct-1\\
0 \quad& \text{otherwise}
\end{array}\right.,  \qquad
 \tup =
\left\{ \begin{array}{ll}
\frac{\ct}{\ct+r} \quad& \text{if } q \leq \ct-1\\
\frac{\ct}{q+r+1} \quad& \text{otherwise.}
\end{array}\right.\\
\flo = \left\lfloor \frac{1}{2} \left( (\ct-1-q-r) +
\sqrt{(\ct-1-q-r)^2 + 4\ct} \right) \right\rfloor,  \;
\tlo = \frac{\ct}{q+r+1+\flo}.
\end{align*}
\end{lemma}

\newcommand{\wupu}{\ema{\tilde{w}}}
\newcommand{\wupf}[1]{\ema{\wupu(#1)}}
\newcommand{\wup}{\ema{\wupf{\ct}}}
\newcommand{\wupp}{\ema{\wupu'(\ct)}}

\newcommand{\hfu}{\ema{a}}
\newcommand{\hf}{\ema{\hfu(\ct)}}
\newcommand{\hfp}{\ema{\hfu'(\ct)}}

\newcommand{\hhfu}{\ema{h}}
\newcommand{\hhff}[1]{\ema{\hhfu(#1)}}
\newcommand{\hhf}{\ema{\hhff{\ct}}}
\newcommand{\hhfp}{\ema{\hhfu'(\ct)}}

\newcommand{\flofun}[1]{\ema{\left\lfloor #1 \right\rfloor}}
\newcommand{\ceifun}[1]{\ema{\left\lceil #1 \right\rceil}}

\smallskip
}

\rr{
Firstly we calculate the expression of throughput and the expected number of threads inside
the \rl (that is needed when we gather expansion and \cacas). Then we
exhibit upper and lower bounds on both throughput and the number of failures, and show that
those bounds are reached. Finally, we give the worst case on the number of \cacas.

\begin{lemma}
\mbox{In an \cyex{f}{\ct}, the throughput is}
\begin{equation}
\thru = \frac{\ct}{q+r+1+f}.\label{eq:cye-thr}
\end{equation}
\end{lemma}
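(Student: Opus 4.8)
The plan is to read the throughput straight off the three defining properties of an \cyex{f}{\ct}, since the claimed formula is really just a counting argument carried out over a single period. Recall that \thru is defined as the number of successful \res completed per unit of time; formally, writing $N(t)$ for the number of successful attempts that have occurred up to time $t$, we have $\thru = \lim_{t \to \infty} N(t)/t$, and the whole task reduces to evaluating this limit in the periodic regime.

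First I would invoke the periodicity: by property (i) of the definition the execution repeats with shortest period of length $L = 1+q+r+f$, this length being exactly property (iii). Because the global state of every thread is identical one period apart, the number of successful attempts falling in any half-open time window of length $L$ is the same constant; by property (ii) that constant is precisely \ct, since the shortest period contains exactly one success per thread and there are \ct threads. Consequently, over any $m$ consecutive periods the number of successes is exactly $m\ct$ while the elapsed time is $mL$.

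Finally, evaluating at $t = mL$ and letting $m \to \infty$, the ratio $N(t)/t$ is the constant $m\ct/(mL) = \ct/(1+q+r+f)$; for arbitrary $t$ one sandwiches $N(t)$ between the counts at the two nearest period boundaries, which forces the limit to take this same value. This gives $\thru = \ct/(q+r+1+f)$, as claimed. There is no genuine obstacle here: the statement is essentially immediate from the definition, and the only point deserving a line of care is the well-definedness of the limit for $t$ not an integer multiple of $L$, handled by the boundary-count sandwich just described.
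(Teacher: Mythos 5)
Your proposal is correct and follows the same approach as the paper: the paper's proof is simply the observation that the period lasts $q+r+1+f$ units of time and contains exactly \ct successes, which is the counting argument you carry out. The extra care you take with the limit definition of throughput and the sandwich at non-multiple times is a legitimate (if routine) formalization that the paper leaves implicit.
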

\begin{proof}
By definition, the execution is periodic, and the period lasts $q+r+1+f$ units of time. As
\ct successes occur during this period, we end up with the claimed expression.
\end{proof}

\begin{lemma}
\label{lem:av-thr}
In an \cyex{f}{\ct}, the average number of threads \trl in the \rl
is given by
\[ \trl = \ct \times \frac{f+1}{q+r+f+1}.  \]
\end{lemma}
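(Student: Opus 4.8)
The plan is to obtain \trl as the time-average, over one period, of the instantaneous number of threads residing inside the \rl, leveraging the periodicity that the definition of an \cyex{f}{\ct} grants us for free. Writing $P = q+r+1+f$ for the (shortest) period and $N(t)$ for the number of threads inside the \rl at time $t$, I would start from
\[
\trl = \frac{1}{P}\int_{0}^{P} N(t)\,dt
\qquad\text{with}\qquad
N(t)=\#\{\,n\in\inte{\ct-1} : \thr{n}\ \text{is in the \rl at time } t\,\}.
\]
Since $N(t)$ is a finite sum of indicator functions, one per thread, I would swap the summation and the integral (by finite additivity of the integral), which turns the right-hand side into $\frac{1}{P}$ times the sum over the \ct threads of the total time each thread spends inside the \rl during one period.

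The crux is then the per-thread bookkeeping. By clause (ii) of the definition of an \cyex{f}{\ct}, within one period each thread performs exactly one successful \re and exactly $f$ unsuccessful \res; since a \re is of unit size in the normalized model, every thread occupies the \rl for precisely $f+1$ units of time per period (the remaining $q+r$ units being spent in the \ps, consistently with $P=q+r+1+f$). Summing this identical contribution over the \ct threads yields $\int_{0}^{P} N(t)\,dt = \ct\,(f+1)$, and dividing by $P$ gives
\[
\trl = \ct\times\frac{f+1}{q+r+f+1},
\]
as claimed.

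The only points that require a word of care are routine rather than deep: the finitely many instants at which a thread enters or leaves the \rl form a measure-zero set and do not affect the integral (and ties there are already broken by the strict-ordering convention on events), and the conversion from \emph{number of \res} to \emph{time in the \rl} relies solely on the unit-size normalization of a \re. No synchronization argument or fixed-point reasoning is needed here, precisely because the periodic regime is assumed; this lemma is the easy structural companion to the throughput identity $\thru=\ct/(q+r+1+f)$, both being read off the same periodic cycle.
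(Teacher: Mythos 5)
Your proof is correct and takes essentially the same approach as the paper: the paper's one-line argument observes that within a period each thread spends $f+1$ of the $q+r+f+1$ time units inside the \rl, and your time-average integral over the period is simply a more explicit rendering of that same per-thread accounting. Nothing further is needed.
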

\begin{proof}
Within a period, each thread spends $f+1$ units of time in the \rl, among the $q+r+f+1$
units of time of the period, hence the Lemma.
\end{proof}

\bigskip

\begin{lemma}
\label{lem:wt-upp}
\mbox{The number of failures is not less than \fup, where}
\begin{equation}
\fup = \left\{ \begin{array}{ll}
\ct-q-1 \quad& \text{if } q \leq \ct-1\\
0 \quad& \text{otherwise}
\end{array}\right., \text{ and accordingly, }\qquad
 \thru \leq
\left\{ \begin{array}{ll}
\frac{\ct}{\ct+r} \quad& \text{if } q \leq \ct-1\\
\frac{\ct}{q+r+1} \quad& \text{otherwise.}
\end{array}\right. \label{eq:wt-upp}
 \end{equation}
\end{lemma}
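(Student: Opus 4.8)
The plan is to derive both inequalities from a single lower bound on the period of the \cyex{f}{\ct}. By the throughput formula $\thru = \ct/(q+r+1+f)$ established in the previous lemma, the claimed upper bound on \thru is equivalent to the claimed lower bound on the number of failures $f$: since $\thru$ is a decreasing function of $f$, once $f \geq \fup$ is shown, one immediately gets $\thru \leq \ct/(q+r+1+\fup)$, which evaluates to $\ct/(\ct+r)$ when $q \leq \ct-1$ and to $\ct/(q+r+1)$ otherwise. So it suffices to prove the failure bound. When $q > \ct-1$, this bound reads $f \geq 0$, which holds trivially because the number of unsuccessful \res per period is a non-negative integer; this disposes of the second case.

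For the main case $q \leq \ct-1$, I would invoke the system-wise observation of Section~\ref{sec:imm-bou}: two successful \res can never overlap, and in normalized units each successful \re occupies exactly one unit of time. In an \cyex{f}{\ct}, the execution is periodic of period $q+r+1+f$, and by definition each period contains exactly one success per thread, hence exactly \ct\ successful \res. Because these successes are pairwise non-overlapping unit-length intervals, the fraction of time occupied by successful \res equals $\ct/(q+r+1+f)$ and cannot exceed $1$; equivalently, the period must be at least \ct. This yields $q+r+1+f \geq \ct$, that is, $f \geq \ct-1-q-r$.

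It then remains to sharpen this into an integer bound. Since $f$ is a non-negative integer, $\ct-1-q$ is an integer, and $0 \leq r < 1$, any integer $f$ with $f \geq \ct-1-q-r$ must in fact satisfy $f \geq \ct-1-q$: were $f \leq \ct-2-q$, then $f \leq \ct-2-q < \ct-1-q-r$ (using $r<1$), contradicting the previous inequality. Hence $f \geq \ct-1-q = \fup$ in this case. Substituting back into the period gives $q+r+1+f \geq \ct+r$ and therefore $\thru \leq \ct/(\ct+r)$, completing the argument.

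The step I expect to require the most care is the passage from ``successes do not overlap'' to ``the period is at least \ct.'' The clean way to see it is the time-density argument above: each thread contributes exactly one unit of success-time per period, and the total success-time per period cannot exceed the period length because successes are disjoint. Trying instead to pack the \ct\ unit intervals into a single fixed window of length $q+r+1+f$ is fiddly, since a success may straddle a period boundary, so the density formulation is preferable. The only other delicate point is the integer-rounding step, where the strict inequality $r<1$ is exactly what pushes the bound from $\ct-1-q-r$ up to the integer $\ct-1-q$; note that when $r=0$ no rounding is needed and the bound $f \geq \ct-1-q$ is already attained by the boundary execution in which the whole period is filled by back-to-back successes, which is consistent with the bounds being tight.
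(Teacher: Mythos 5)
Your proof is correct and follows essentially the same route as the paper's: the paper likewise combines the trivial bound $f \geq 0$ with the observation that \ct unit-length, non-overlapping successes must fit within a period (stated there directly in integer form as $q+1+f \geq \ct$), and then minimizes $f$ in the throughput formula $\thru = \ct/(q+r+1+f)$. Your time-density argument and the explicit integer-rounding step via $r<1$ simply make rigorous what the paper asserts in one line, so there is nothing to flag.
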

\begin{proof}
According to Equation~\ref{eq:cye-thr}, the throughput is maximized when the number of
failures is minimized. In addition, we have two lower bounds on the number of failures:
(i) $f \geq 0$, and (ii) \ct successes should fit within a period, hence $q+1+f \geq
\ct$. Therefore, if $\ct-1-q < 0$, $\thru \leq \ct/(q+r+1+0)$, otherwise,
\[ \thru \leq \frac{\ct}{q+r+1+\ct-1-q} = \frac{\ct}{\ct+r}.\]
\end{proof}

\begin{remark}
We notice that if $q > \ct-1$, the upper bound in
Equation~\ref{eq:wt-upp} is actually the same as the immediate upper
bound described in Section~\ref{sec:imm-bou}. However, if $q \leq
\ct-1$, Equation~\ref{eq:wt-upp} refines the immediate upper bound.
\end{remark}

\begin{lemma}
\label{lem:wt-low}
The number of failures is bounded by
\[ f \leq \flo = \left\lfloor \frac{1}{2} \left( (\ct-1-q-r) +
\sqrt{(\ct-1-q-r)^2 + 4\ct} \right) \right\rfloor, \text{ and accordingly,}\]
the throughput is bounded by
\[\thru \geq \frac{\ct}{q+r+1+\flo}. \]
\end{lemma}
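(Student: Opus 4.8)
The plan is to turn the desired lower bound on throughput into an upper bound on the failure count $f$, and then to extract that bound from the gap constraint that characterizes cyclic executions. Since Equation~\ref{eq:cye-thr} gives $\thru = \ct/(q+r+1+f)$, which is strictly decreasing in $f$, establishing $f \le \flo$ immediately yields $\thru \ge \ct/(q+r+1+\flo)$. So the whole task reduces to bounding $f$ from above.

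To bound $f$, I would work inside the \cyex{f}{\ct}: by Property~\ref{pr.wfs.soe1} it contains a \wfs, and Theorem~\ref{th.wfs.gap} then forces $0 \le \gap{n}{f} < 1$ for every $n \in \inte{\ct-1}$. Summing the inequalities $\gap{n}{f}<1$ over the $\ct$ threads gives $\sum_{n=0}^{\ct-1}\gap{n}{f} < \ct$. The crux is to evaluate this sum. Writing each \kth{f} order gap as $\gap{n}{f} = \sum_{i=0}^{f-1}\gap{(n-i)\bmod \ct}{1}$ and swapping the two summations, each fixed shift $i$ runs $(n-i)\bmod \ct$ over all of $\inte{\ct-1}$ exactly once, so $\sum_n \gap{n}{f} = f\sum_{m=0}^{\ct-1}\gap{m}{1}$. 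The inner sum is precisely the cycle length, \ie the lagging time increased by $r$, which telescopes to $q+1+r+f-\ct = f-(\ct-1-q-r)$. Hence the single scalar inequality $f\bigl(f-(\ct-1-q-r)\bigr) < \ct$.

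It then remains to solve the quadratic inequality $f^2 - (\ct-1-q-r)\,f - \ct < 0$. Its discriminant $(\ct-1-q-r)^2+4\ct$ is positive, and since $\ct>0$ the smaller root is negative, hence below the admissible range $f\ge 0$; the binding root is therefore the larger one, giving $f < \tfrac12\bigl((\ct-1-q-r)+\sqrt{(\ct-1-q-r)^2+4\ct}\bigr)$. As $f$ is an integer this is exactly $f \le \flo$, and substitution into Equation~\ref{eq:cye-thr} closes the proof.

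The step I expect to be delicate is the summation identity $\sum_n \gap{n}{f} = f\sum_m\gap{m}{1}$: one must check that it remains valid even when $f > \ct$, which it does because for each fixed shift the indices wrap cyclically and still cover every thread exactly once. I would also be careful that summing $\ct$ strict inequalities $\gap{n}{f}<1$ yields the strict bound $\sum_n\gap{n}{f}<\ct$, since it is this strictness that lets the floor be taken cleanly and keeps the bound tight.
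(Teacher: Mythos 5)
Your proposal is correct and matches the paper's own proof essentially step for step: both extract a \wfs from the \cyex{f}{\ct} via Property~\ref{pr.wfs.soe1}, invoke Theorem~\ref{th.wfs.gap} to get $\gap{n}{f}<1$ for all $n$, sum over the threads using the identity $\sum_{n}\gap{n}{f}=f\times\sum_{m}\gap{m}{1}=f\times(q+1+f-\ct+r)$, and deduce $f\,(f-(\ct-1-q-r))<\ct$, from which \flo follows. The only difference is cosmetic: you spell out the quadratic-root and integrality steps that the paper leaves implicit in the definition of \flo, and your remark about the cyclic wrap-around when $f>\ct$ is a valid justification of the summation swap that the paper also uses without comment.
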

\begin{proof}
We show that a necessary condition so that an \cyex{f}{\ct}, whose \lag is \lagt, exists, is
$f \times (\lagt+r) < \ct$. \rr{According to Property~\ref{pr.wfs.soe1}, any set of \ct
consecutive successes is a \wfs with $\ct$ threads. Let \soe be any of them. As we have
$f$ failures before success, Theorem~\ref{th.wfs.gap} ensures that for all $n \in
\inte{\ct-1}$, $\gap{n}{f} <1$. We recall that for all $n \in \inte{\ct-1}$, we also have
$\gap{n}{\ct} = \lagt+r$.

On the one hand, we have
\begin{align*}
\sum_{n=0}^{\ct-1} \gap{n}{f} &= \sum_{n=0}^{\ct-1} \sum_{j=n-f+1}^{n} \gap{j\bmod \ct}{1}   \\
&= f \times \sum_{n=0}^{\ct-1}  \gap{n}{1} \\
\sum_{n=0}^{\ct-1} \gap{n}{f} &= f \times (\lagt+r).
\end{align*}
On the other hand, $\sum_{n=0}^{\ct-1} \gap{n}{f} < \sum_{n=0}^{\ct-1} 1 = \ct$.

Altogether, the necessary condition states that $f \times (\lagt+r) < \ct$, which can be
rewritten as $f \times (q+1+f-\ct+r) < \ct$. The proof is complete since minimizing the
throughput is equivalent to maximizing the number of failures.
\end{proof}

\begin{lemma}
\label{lem:wt-reach}
For each of the bounds defined in Lemmas~\ref{lem:wt-upp}
and~\ref{lem:wt-low}, there exists an \cyex{f}{\ct} that reaches the
bound.
\end{lemma}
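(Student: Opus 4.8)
For each of the bounds in Lemmas~\ref{lem:wt-upp} (the upper bound on throughput / lower bound $\fup$ on failures) and~\ref{lem:wt-low} (the lower bound on throughput / upper bound $\flo$ on failures), there exists an $(f,\ct)$-cyclic execution attaining it. Let me think about how to construct these.

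The plan is to exhibit, for each target value of $f$, an explicit well-formed seed whose associated cyclic execution realizes exactly $f$ failures per thread. By Theorem~\ref{th.wfs.gap}, a seed $\soe$ is a \wfs exactly when all $\kth{f}$-order gaps lie in $[0,1)$, and by Property~\ref{pr.wfs.soe2} any \wfs induces an $(f,\ct)$-cyclic execution. So the whole problem reduces to: given the target $f$, choose success times $\suc{0} < \dots < \suc{\ct-1}$ (equivalently, choose the first-order gaps $\gap{n}{1} \geq 0$ summing to the prescribed \lag $\lagt + r = q+1+f-\ct+r$) so that every $\kth{f}$-order gap stays strictly below $1$. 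Since $\gap{n}{f}$ is a sum of $f$ consecutive first-order gaps, the cleanest construction is to spread the total \lag as evenly as possible across the $\ct$ first-order gaps, making each $\gap{n}{1}$ equal to $(\lagt+r)/\ct$; then each $\gap{n}{f}$ equals $f(\lagt+r)/\ct$, and I need only check this is in $[0,1)$ for the two extremal $f$ values.

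First I would handle the $\fup$ bound. When $q \leq \ct-1$, the target is $f = \ct - q - 1$, giving \lag $\lagt + r = r$, i.e. the total first-order gap budget is $r < 1$. With the even split each $\gap{n}{f} = f r / \ct < r < 1$, so the \wfs condition holds trivially and the cyclic execution exists; the case $q > \ct-1$ with $f=0$ is immediate since all $\kth{0}$-order gaps are $0$. Next the $\flo$ bound: here $f = \flo$ is the largest integer with $f(\lagt+r) < \ct$, i.e. $f(q+1+f-\ct+r) < \ct$, which is exactly the necessary condition derived in the proof of Lemma~\ref{lem:wt-low}. I would show this condition is also \emph{sufficient} by taking the even split, for which $\sum_n \gap{n}{f} = f(\lagt+r) < \ct$ forces the average $\kth{f}$-order gap below $1$; because all $\kth{f}$-order gaps are equal under the even split, each is below $1$ individually, so Theorem~\ref{th.wfs.gap} certifies a \wfs and hence the extremal cyclic execution.

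The step I expect to require the most care is verifying that the real-valued even split actually corresponds to a realizable execution — that is, that the constructed success times are consistent with the dynamics of Procedure~\ref{alg:gen-name} rather than merely satisfying the gap inequalities formally. I would lean on the $(\Leftarrow)$ direction of Theorem~\ref{th.wfs.gap}, which already proves that \emph{any} seed whose $\kth{f}$-order gaps lie in $[0,1)$ genuinely arises as a \wfs of an actual run (it reconstructs the order of successes and failures from the gap condition). Thus the existence of the execution is handed to me by that theorem, and the remaining work is purely the arithmetic check above. A minor subtlety is that $r \neq 0$ is needed (as throughout this section) so the even-split gaps are well defined and the periodicity arguments apply; I would note this explicitly and confirm that in each case the constructed \lag is compatible with $r$, in particular that the $\fup$ construction genuinely uses a positive budget $r$ and the $\flo$ construction respects the strict inequality that defines $\flo$.
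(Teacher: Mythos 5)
Your proposal coincides with the paper's proof essentially step for step: the paper builds exactly your even split, taking $\fis{n} = n \left( \frac{q+1+f-\ct+r}{\ct} + 1 \right)$ so that every first-order gap equals $(\lagt+r)/\ct$, computes $\gap{n}{f} = f\,\frac{\lagt+r}{\ct} \in [0,1)$ from $\fup \leq f \leq \flo$, and invokes Theorem~\ref{th.wfs.gap} to certify a \wfs (in fact the paper does this for \emph{every} $f$ with $\fup \leq f \leq \flo$, not only the two extremes, but the lemma needs only what you prove). The one step you gloss over and the paper carries out explicitly is checking that the constructed seed satisfies $\fa{\soe} = f$ --- since $\gap{0}{1}$ is defined through $\fa{\soe}$, your ``equivalently, choose the first-order gaps summing to the prescribed lag'' presupposes this identity; the verification is mechanical, resting on the same inequalities $0 \leq (q+1+f-\ct+r)/\ct < 1$ you already use, with $f=0$ handled separately, so no real gap results.
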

\begin{proof}
According to Lemmas~\ref{lem:wt-upp} and~\ref{lem:wt-low}, if an \cyex{f}{\ct} exists, then
the number of failures is such that $\fup \leq f \leq \flo$.
\\We show now that this double
necessary condition is also sufficient. We consider $f$ such that $\fup \leq f \leq
\flo$, and build a \wfs $\soe = \left( \thr{i}, \fis{i} \right)_{i \in \inte{\ct-1}}$.

For all $n \in \inte{\ct-1}$, we define \fis{i} as}
\[ \fis{n} = n \times \left( \frac{q+1+f-\ct+r}{\ct} + 1 \right). \]

We first show that $\fa{\soe} = f$.
By definition, $\fa{\soe} = \max \left(0, \left\lceil \fis{\ct-1}-\fis{0}-q-r \right\rceil \right)$;
we have then
\begin{align*}
\fa{\soe} &=
\max \left(0, \left\lceil (\ct -1) \times \left( \frac{q+1+f-\ct+r}{\ct} + 1 \right) -q-r \right\rceil \right)\\
&= \max \left(0, \left\lceil (\ct -1 - q -r) + (q+1+f-\ct+r) - \frac{q+1+f-\ct+r}{\ct} \right\rceil \right)\\
\fa{\soe}&= \max \left(0, \left\lceil f - \frac{q+1+f-\ct+r}{\ct} \right\rceil \right).\\
\end{align*}
Firstly, we know that $q+1+f-\ct \geq 0$, thus if $f=0$, then the second term of the
maximum is not positive, and $\fa{\soe}=0=f$. Secondly, if $f>0$, then according to
Lemma~\ref{lem:wt-upp}, $(q+1+f-\ct+r)/\ct<1/f\leq 1$. As we also have
$(q+1+f-\ct+r)/\ct\geq 0$, we conclude that $\fa{\soe} = \left\lceil f -
\frac{q+1+f-\ct+r}{\ct} \right\rceil = f$.

Additionally, for all $n \in \inte{\ct-1}$,
\begin{align*}
\gap{n}{f}  &=
\left\{ \begin{array}{ll}
\fis{n} - \fis{n-f} - f \quad& \text{if } n>f \\
\fis{n} - \fis{\ct+n-f} +1+q+r \quad& \text{otherwise} \\
\end{array}\right.\\
&=
\left\{ \begin{array}{l}
n \times \left( \frac{q+1+f-\ct+r}{\ct} + 1 \right) - (n-f) \times \left( \frac{q+1+f-\ct+r}{\ct} + 1 \right) - f\\
n \times \left( \frac{q+1+f-\ct+r}{\ct} + 1 \right) - (\ct+n-f)\times \left( \frac{q+1+f-\ct+r}{\ct} + 1 \right) +1+q+r \\
\end{array}\right.\\
&=
\left\{ \begin{array}{l}
f \times \frac{q+1+f-\ct+r}{\ct}\\
- (\ct-f) - (q+1+f-\ct+r) + f \times \frac{q+1+f-\ct+r}{\ct} +1+q+r\\
\end{array}\right.\\
\gap{n}{f} &=
f \times \frac{w+r}{\ct}\\
\end{align*}
As $w \leq 0$ and $f\leq 0$, $\gap{n}{f} > 0$. Since $f \leq \flo$, $\gap{n}{f} < 1$.
Theorem~\ref{th.wfs.gap} implies that \soe is a \wfs that leads to an \cyex{f}{\ct}.

We have shown that for all $f$ such that $\fup \leq f \leq \flo$ there exists an
\cyex{f}{\ct}; in particular there exist an \cyex{\flo}{\ct} and an \cyex{\fup}{\ct}.
\end{proof}

\newcommand{\wupu}{\ema{\tilde{w}}}
\newcommand{\wupf}[1]{\ema{\wupu(#1)}}
\newcommand{\wup}{\ema{\wupf{\ct}}}
\newcommand{\wupp}{\ema{\wupu'(\ct)}}

\newcommand{\hfu}{\ema{a}}
\newcommand{\hf}{\ema{\hfu(\ct)}}
\newcommand{\hfp}{\ema{\hfu'(\ct)}}

\newcommand{\hhfu}{\ema{h}}
\newcommand{\hhff}[1]{\ema{\hhfu(#1)}}
\newcommand{\hhf}{\ema{\hhff{\ct}}}
\newcommand{\hhfp}{\ema{\hhfu'(\ct)}}

\newcommand{\flofun}[1]{\ema{\left\lfloor #1 \right\rfloor}}
\newcommand{\ceifun}[1]{\ema{\left\lceil #1 \right\rceil}}

\smallskip

\begin{corollary}
\label{th:compar}
The highest possible number of wasted repetitions is $\left\lceil\sqrt{\ct}-1\right\rceil$
and is achieved when $\ct=q+1$.
\end{corollary}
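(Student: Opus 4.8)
The plan is to express the number of wasted retries in terms of quantities already controlled, and then optimize over the parameters. Recall that in an \cyex{f}{\ct} each thread fails $f$ times per period, of which $\fup$ are inevitable (Lemma~\ref{lem:wt-upp}); the number of wasted retries per thread is therefore $f-\fup$. This grows with $f$, and $f$ can be as large as \flo (Lemma~\ref{lem:wt-low}), a value that is itself realized by an actual cyclic execution (Lemma~\ref{lem:wt-reach}). Hence, for fixed $\ct$, $q$, $r$, the largest possible number of wasted retries is $\flo-\fup$, so it suffices to bound this quantity above by $\lceil\sqrt{\ct}-1\rceil$ and to exhibit one configuration attaining it.

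For the upper bound I would reuse the inequality established inside the proof of Lemma~\ref{lem:wt-low}: any \cyex{f}{\ct} satisfies $f(\lagt+r)<\ct$, where the lagging time is $\lagt=q+1+f-\ct$. Writing $w=f-\fup$ and splitting on the sign of $\ct-1-q$, the plan is to show $w^2<\ct$ in both cases. When $q\le\ct-1$ we have $\fup=\ct-1-q$, so that $w=q+1+f-\ct=\lagt$ — the wasted retries are \emph{exactly} the lagging time — and $f=w+\fup$; the inequality becomes $(w+\fup)(w+r)<\ct$, which forces $w^2<\ct$ since $\fup\ge 0$ and $r>0$. When $q\ge\ct$ we have $\fup=0$, hence $w=f$ and $\lagt=(q+1-\ct)+w>w$, so $w^2\le w(\lagt+r)<\ct$ again. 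In all cases $w<\sqrt{\ct}$, and because $w$ is a non-negative integer this gives $w\le\lceil\sqrt{\ct}-1\rceil$, the largest integer strictly below $\sqrt{\ct}$.

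For achievability I would take $q=\ct-1$, so that $\fup=0$ and the constraint reduces to $f(f+r)<\ct$. Setting $m=\lceil\sqrt{\ct}-1\rceil$ we have $m^2<\ct\le(m+1)^2$, so $(m+1)^2+r(m+1)>\ct$ rules out $f=m+1$ for every $r$, while $m^2+rm<\ct$ holds as soon as $r<(\ct-m^2)/m$ (any $r\in(0,1)$ when $m=0$); such an $r\in(0,1)$ exists because $\ct-m^2\ge 1$. For that $r$ we get $\flo=m$, and Lemma~\ref{lem:wt-reach} yields an actual \cyex{m}{\ct} whose $m=\lceil\sqrt{\ct}-1\rceil$ failures per thread are all wasted. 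This matches the upper bound and is attained precisely at $\ct=q+1$, which proves the corollary.

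The main obstacle I anticipate is the integer bookkeeping around the ceiling: one must argue carefully that the largest integer strictly less than $\sqrt{\ct}$ is exactly $\lceil\sqrt{\ct}-1\rceil$ (checking both the perfect-square and the non-square cases), and that it is the \emph{strict} inequality $f(f+r)<\ct$ — not a non-strict one — that lets $f=m$ be feasible for small positive $r$ while $f=m+1$ never is. The identity $\lagt=w$ in the case $q\le\ct-1$ is the conceptual heart that makes the whole optimization transparent, so I would state it explicitly rather than hide it inside the algebra.
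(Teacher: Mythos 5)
Your proof is correct, and it reaches the corollary by a genuinely different route than the paper's. The paper fixes $q$ and $r$, views the gap $\flo-\fup$ through its closed form, and argues by calculus on \ct treated as a continuous variable: the function is shown increasing on $]0,q+1]$ and decreasing on $[q+2,+\infty[$ by differentiation, the candidate $\ct=q+2$ is eliminated by direct comparison, and the value at $\ct=q+1$, namely $\frac{1}{2}\left(-r+\sqrt{r^2+4\ct}\right)$, is sandwiched between $\sqrt{\ct}-1$ and $\sqrt{\ct}$. You bypass the closed form entirely and work from the feasibility inequality $f\times(\lagt+r)<\ct$ established inside the proof of Lemma~\ref{lem:wt-low}, together with the identity $w=\lagt$ (wasted retries equal the lagging time) in the contended regime $q\leq\ct-1$; this yields $w^2<\ct$ uniformly in $q$ and $r$ with no differentiation, and attainment is an explicit construction at $q=\ct-1$ with $r$ small, certified by Lemma~\ref{lem:wt-reach}. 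Each approach buys something: yours is more elementary and is actually more careful on a point the paper glosses over --- the floor of $\frac{1}{2}\left(-r+\sqrt{r^2+4\ct}\right)$ equals $\lceil\sqrt{\ct}-1\rceil$ only for favorable $r$ (e.g.\ $\ct=5$, $r=0.9$ gives floor $1$, not $2$), so the maximum over configurations implicitly requires optimizing over $r$, which your explicit condition $r<(\ct-m^2)/m$ supplies, whereas the paper leaves it tacit; the paper's calculus, in exchange, establishes unimodality of the waste as a function of \ct, i.e.\ structural information about how waste varies with the thread count, which your pointwise bound does not provide. Two minor points to tighten in writing it up: in the case $q\geq\ct$ the chain $w^2\leq w(\lagt+r)<\ct$ needs the trivial separate check $w=0$ (for $w\geq1$ you in fact have the strict $\lagt>w$); and your appeal to Lemma~\ref{lem:wt-reach} uses the fact, explicit in its proof, that an \cyex{f}{\ct} exists for every $f$ with $\fup\leq f\leq\flo$, not merely at the two endpoints --- though for your construction the endpoint $f=\flo=m$ is all you need.
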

\begin{proof}

The highest possible number of wasted repetitions \wup with \ct threads is given by
\[ \wup = \flo-\fup = \flofun{\frac{1}{2} \left( - \hf + \sqrt{\hf^2 + 4\ct} \right) -\fup} .\]

Let \hfu and \hhfu be the functions respectively defined as $\hf = q+1-\ct+r$, which
implies $\hfp=-1$, and $\hhf = (- \hf + \sqrt{\hf^2 + 4\ct})/2 - \fup$, so that
$\wup = \flofun{\hhf}$.

Let us first assume that $\hf>0$. In this case, $q\leq \ct-1$, hence $\fup = 0$.
We have
\begin{align*}
2\hhfp &= 1 + \frac{-2\hf + 4}{2 \sqrt{\hf^2 + 4\ct}}\\
2\hhfp&= 2 \times \frac{2 - \hf + \sqrt{\hf^2 + 4\ct}}{2 \sqrt{\hf^2 + 4\ct}}
\end{align*}
Therefore, \hhfp is negative if and only if $\sqrt{\hf^2 + 4\ct} < \hf -2$. It cannot be
true if $\hf<2$. If $\hf\geq 2$, then the previous inequality is equivalent to $\hf^2 +
4\ct < \hf^2 - 4\hf +4$, which can be rewritten in $q+1+r<1$, which is absurd. We have
shown that \hhfu is increasing in $]0,q+1]$.

Let us now assume that $\hf\leq0$. In this case, $q > \ct-1$, hence $\fup = \ct -q -1$,
and $\hhfp = \left( \hf + \sqrt{\hf^2 + 4\ct} \right) /2 - r$. Assuming \hhfp to be
positive leads to the same absurd inequality $q+1+r<1$, which proves that \hhfu is
decreasing on $[q+2,+\infty[$.

Also, the maximum number of wasted repetitions is achieved as $\ct=q+1$ or $\ct=q+2$.
Since
\[ \hhff{q+1} = \frac{1}{2} \left( -r + \sqrt{r^2 + 4\ct} \right) >
\frac{1}{2} \left(-(r+1) + \sqrt{r^2 + 4\ct}\right) = \hhff{q+2}, \]
the maximum number of wasted repetitions is \wupf{q+1}.
In addition,
\begin{equation*}
\bgroup
\def\arraystretch{1.7}\begin{array}{rcccl}
\dfrac{1}{2} \left( -r + \sqrt{4\ct} \right) &<& \hhff{q+1} &<&
\dfrac{1}{2} \left( -r + \sqrt{r^2} + \sqrt{4\ct} \right)\\
\sqrt{\ct}-\dfrac{r}{2} &<& \hhff{q+1} &<&
 \sqrt{\ct}\\
\sqrt{\ct}-1 &\leq& \hhff{q+1} &<&
 \sqrt{\ct}\\
\end{array}
\egroup
\end{equation*}
We conclude that the maximum number of wasted repetitions is \ceifun{\sqrt{\ct}-1}.
\end{proof}
}

\section{Expansion and Complete Throughput Estimation}
\label{sec:exp-glue}

\subsection{Expansion}
\label{sec:exp}

Interference of threads does not only lead to \logcons
but also to \hardcons
which impact the performance significantly.
We model the behavior of the cache coherency
protocols which determine the interaction of overlapping \rf{}s and \cas{}s.
By taking MESIF~\cite{mesif} as basis, we come up
with the following assumptions.
When executing an atomic \cas, the core gets the cache line in exclusive state
and does not forward it to any other requesting core until the instruction
is retired. Therefore, requests stall for the release of the cache line
which implies serialization. On the other hand, ongoing \rf{}s can overlap with
other operations. As a result,
a \cas introduces expansion only
to overlapping \rf and \cas operations that start after it, as illustrated in Figure~\ref{fig:expansion}.
\rr{As a remark, we ignore memory bandwidth issues which are negligible for our
study.}

Furthermore, we assume that \rf{}s that are executed just after a \cas do not experience
expansion (as the thread already owns of the data), which takes effect at the beginning of a \re following a failing attempt.
Thus, read expansions need only to be considered before the \kth{0} \re. In this sense, read
expansion can be moved to parallel section and calculated in the
same way as \cas expansion is calculated.

To estimate expansion, we consider the delay that a thread can introduce,
provided that there is already a given number of threads in the \rl.
The starting point of each \cas is a random
variable which is distributed uniformly within an expanded \re. The cost function \shiftf provides
the amount of delay that the additional thread introduces, depending on
the point where the starting point of its \cas hits. By using this cost function
we can formulate the expansion increase that each new thread introduces and derive the differential equation
below to calculate the expansion of a \cas.

\begin{lemma}
\label{lem.1}
The expansion of a \cas operation is the solution of the following system of equations:
\[ \left\{
\begin{array}{lcl}
\expansionp{\trl} &=& \fcas \times \dfrac{\frac{\fcas}{2} + \expansion{\trl}}{ \mem +  \calrl + \scas + \expansion{\trl}}\\
\expansion{\trlo} &=& 0
\end{array} \right., 
\qquad\begin{array}{l}
\text{where \trlo is the point where}\\
\text{expansion begins.}
\end{array}
\]
\end{lemma}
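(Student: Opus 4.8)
The plan is to obtain the differential equation through a mean-field, marginal-cost argument: treating the average occupancy $\trl$ of the \rl as a continuous parameter, I compute the expected additional stall that \emph{one} extra thread imposes on a fixed reference \cas, and then identify this marginal cost with the derivative $\expansionp{\trl}$. So $\expansionp{\trl}$ is read as ``the expected delay contributed by the $(\trl+1)$-th thread''.

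First I would fix the reference \cas whose expansion we track. With $\trl$ threads already in the \rl, this \cas has accumulated expansion $\expansion{\trl}$, so in its own time frame it stalls for a length $\expansion{\trl}$ and then executes for a length $\fcas$, while the whole expanded \re has length $\mem+\calrl+\fcas+\expansion{\trl}$. By the modelling assumption stated just before the lemma, the starting point of the extra thread's \cas is uniform over this expanded \re, and the cost function \shiftf records, as a function of where that point lands, the serialization delay the extra \cas forces onto the reference. The sought marginal cost is then the average $\expansionp{\trl}=\frac{1}{\mem+\calrl+\fcas+\expansion{\trl}}\int \shiftf$.

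Next I would evaluate \shiftf region by region, using the MESIF-based rule that a \cas owns the line exclusively for $\fcas$ time and that only a \cas served ahead of the reference and overlapping its busy interval can push the reference's completion back. There are three cases. If the extra \cas lands inside the reference's stall window (length $\expansion{\trl}$), then in the mean field it is served ahead of the reference and adds its full length $\fcas$, giving a rectangular contribution $\fcas\times\expansion{\trl}$. If it lands in the length-$\fcas$ window immediately preceding the reference's busy interval, the overlap, hence the delay, grows linearly from $0$ to $\fcas$, giving a triangular contribution $\fcas^{2}/2$. If it lands while the reference is executing, the reference holds the line, so the extra \cas only delays itself and contributes nothing. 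Summing and dividing by the \re length yields
\[
\expansionp{\trl}=\frac{\frac{\fcas^{2}}{2}+\fcas\,\expansion{\trl}}{\mem+\calrl+\scas+\expansion{\trl}}
=\fcas\times\frac{\frac{\fcas}{2}+\expansion{\trl}}{\mem+\calrl+\scas+\expansion{\trl}},
\]
which is the claimed equation. The initial condition $\expansion{\trlo}=0$ is immediate: by definition \trlo is the occupancy at which serialization conflicts first appear, so below it the reference \cas meets no overlapping \cas and suffers no expansion.

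The main obstacle is the honest justification of \shiftf. One must argue that, in the mean-field regime, the extra \cas is served ahead of the reference exactly when its start lands in the stall-plus-preceding window, that the delays it induces \emph{add} rather than being absorbed into the expansion $\expansion{\trl}$ already baked into the reference's busy interval, and that the linear ramp over the length-$\fcas$ window is the correct average overlap. Reconciling the per-pair serialization semantics (a \cas expands only operations starting after it) with the aggregate expansion $\expansion{\trl}$ is the delicate step; once \shiftf is pinned down, the integration and the resulting ODE are routine.
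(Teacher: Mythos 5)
Your proposal is correct and takes essentially the same route as the paper's own proof: the paper likewise assumes the extra thread's \cas start is uniform over the expanded \re, evaluates the cost function \shiftf piecewise (zero before the last \fcas of the read-plus-critical-work phase, a linear ramp over that window giving $\fcas^2/2$, the constant \fcas over the stall window giving $\fcas\times\expansion{\trl}$, and zero while the reference \cas executes), and divides by the \re length $\mem+\calrl+\fcas+\expansion{\trl}$. The only cosmetic difference is that the paper reaches the ODE through the finite increment $\expansion{\trl+h}=\expansion{\trl}+h\times\rint{0}{\rlsiz}{\shift{t}/\rlsiz}{t}$ and lets $h\to 0$, whereas you identify the derivative directly with the marginal cost of one extra thread.
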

\begin{proof}
\pp{
To prove the theorem, we compute $\expansion{\trl + h}$, where $h\leq1$, by assuming that there are
already \trl threads in the \rl, and that a new thread attempts
to \cas during the \re, within a probability $h$:
$ \expansion{\trl + h}= \expansion{\trl} + h\times
\rint{0}{\rlsiz}{\frac {\shift{t}}{\rlsiz}}{t}$. 
The complete proof appears in the companion research report~\cite{our-long}.
}

\rr{We compute $\expansion{\trl + h}$, where $h\leq1$, by assuming that there are
already \trl threads in the \rl, and that a new thread attempts
to \cas during the \re, within a probability $h$.

\begin{align*}
\expansion{\trl + h}=
 \expansion{\trl} + h\times&
\rint{0}{\rlsiz}{\frac {\shift{t}}{\rlsiz}}{t} \\
  = \expansion{\trl}
     + h \times &\Big( \rint{0}{\mem+\calrl-\fcas}{\frac{\shift{t}}{\rlsiz}}{t}\\
   & \quad  +
      \rint{\mem+\calrl-\fcas}{\mem+\calrl}{\frac{\shift{t}}{\rlsiz}}{t}\\
    & \quad  + \rint{\mem+\calrl}{\mem+\calrl+\expansion{\trl}}{\frac{\shift{t}}{\rlsiz}}{t}\\
 & \quad + \rint{\mem+\calrl+\expansion{\trl}}{\rlsiz}{\frac{\shift{t}}{\rlsiz}}{t}\Big)\\
  = \expansion{\trl} + h \times & \Big(
    \rint{\mem+\calrl-\fcas}{\mem+\calrl}{\frac{t}{\rlsiz}}{t}\\
    & \quad  + \rint{\mem+\calrl}{\mem+\calrl+\expansion{\trl}}{\frac{\fcas}{\rlsiz}}{t} \Big)\\
\expansion{\trl + h} = \expansion{\trl} + h \times & \frac{ \frac{\fcas^2}{2} + \expansion{\trl}\times\fcas}{\rlsiz}
\end{align*}

This leads to
$\displaystyle \quad\frac{\expansion{\trl + h}- \expansion{\trl}}{ h} = \frac{ \frac{\fcas^2}{2} + \expansion{\trl}\times\fcas}{\rlsiz}$.
When making $h$ tend to $0$, we finally obtain
\[ \expansionp{\trl} = \fcas \times \frac{\frac{\fcas}{2} + \expansion{\trl}}{ \mem +  \calrl + \scas + \expansion{\trl}}. \pp{\qedhere}\]
}
\end{proof}

\subsection{Throughput Estimate}
\label{sec:thr}

\newcommand{\antm}[1]{\ema{h_{#1}(\trl)}}

\newcommand{\antuf}{\ema{h^{\exppl}}}
\newcommand{\antlf}{\ema{h^{\expmi}}}

\newcommand{\antuo}[1]{\ema{h^{\exppl}(#1)}}
\newcommand{\antlo}[1]{\ema{h^{\expmi}(#1)}}

\newcommand{\antu}{\ema{\antuf(\trl)}}
\newcommand{\antl}{\ema{\antlf(\trl)}}

\newcommand{\trlu}{\ema{\trl^{\exppl}}}
\newcommand{\trll}{\ema{\trl^{\expmi}}}

There remains to combine hardware and \logcons
in order to obtain the final
upper and lower bounds on throughput.
We are given as an input an expected number of
threads \trl inside the \rl. We firstly compute the expansion accordingly, by solving
numerically the differential equation of Lemma~\ref{lem.1}.
As explained in the previous subsection, we
have $\psiz = \pw + \expa$, and $\rlsiz = \rc + \cw + \expa + \cc$.
We can then compute $q$ and $r$, that are the inputs (together with the total number of
threads \ct) of the method described in Section~\ref{sec:wt}. Assuming that the
initialization times of the threads are spaced enough, the execution will superimpose
an \cyex{f}{\ct}. Thanks to Lemma~\ref{lem:av-thr}, we can compute the
average number of threads inside the \rl, that we note by \antm{f}.
A posteriori, the solution is consistent if this average number of threads inside the \rl
\antm{f} is equal to the expected number of threads \trl that has been given as an input.

Several \cyexs{f}{\ct} belong to the domain of the possible outcomes, but we are
interested in upper and lower bounds on the number of failures $f$. We can compute them
through \rr{Lemmas~\ref{lem:wt-upp} and~\ref{lem:wt-low}}\pp{Lemma~\ref{lem:wt-bounds}}, along with their corresponding
throughput and average number of threads inside the \rl. We note by \antu and \antl the
average number of threads for the lowest number of failures and highest one,
respectively. Our aim is finally to find \trll and \trlu, such that $\antuo{\trlu} =
\trlu$ and $\antlo{\trll} = \trll$. If several solutions exist, then we want to keep the
smallest, since the \rl stops to expand when a stable state is reached.

Note that we also need to provide the point where the expansion begins. It begins when we
start to have failures, while reducing the \ps. Thus this point is $(2 \ct -1 ) \rlwp$
(resp. $(\ct -1) \rlwp$) for the lower (resp. upper) bound on the throughput.

\begin{theorem}
Let $(x_n)$ be the sequence defined recursively by $x_0=0$ and $x_{n+1} = \antuo{x_n}$. If
$\pw\geq\rc+\cw+\cc$, then
\rr{\[ \trlu = \lim_{n \rightarrow +\infty} x_n. \]}
\pp{$ \trlu = \lim_{n \rightarrow +\infty} x_n$. }
\end{theorem}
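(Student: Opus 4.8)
The plan is to show that $\antuf$ is a non-decreasing, left-continuous self-map of $[0,\ct]$, so that the iteration started at $x_0=0$ climbs monotonically to the least fixed point of $\antuf$, which is exactly $\trlu$. I would first write $\antuf$ as a composition: from $\trl$ one gets the expansion $\expansion{\trl}$ by solving the differential equation of Lemma~\ref{lem.1}; then $\psiz=\pw+\expansion{\trl}$ and $\rlsiz=\rc+\cw+\cc+\expansion{\trl}$, whose ratio $S(\trl)=\psiz/\rlsiz=q+r$ fixes $q$ and $r$; then the minimal failure count $\fup=\max(\ct-q-1,0)$ from Lemma~\ref{lem:wt-upp}; and finally $\antuo{\trl}=\ct\,(\fup+1)/(q+r+\fup+1)$ by Lemma~\ref{lem:av-thr}. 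Because $\fup\geq0$ and $q+r\geq0$ the last fraction is at most $1$, so $\antuf$ maps $[0,\ct]$ into $[0,\ct]$ and all iterates stay bounded.

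Next come the monotonicity ingredients. The right-hand side of the equation in Lemma~\ref{lem.1} is positive, so $\expansion{\trl}$ is continuous and non-decreasing in $\trl$. Differentiating $S$ in $\expa$ gives $(\rc+\cw+\cc-\pw)/(\rc+\cw+\cc+\expa)^2$, which is nonpositive precisely under the hypothesis $\pw\geq\rc+\cw+\cc$; this is the one and only place the hypothesis is used, and it makes $S$ non-increasing in $\trl$. Finally $\antuo{\trl}$ increases as $S$ decreases: writing $\antuo{\trl}=\ct(\ct-q)/(\ct+r)$ for $q\leq\ct-1$ and $\antuo{\trl}=\ct/(S+1)$ for $q\geq\ct-1$ (the two coincide at $q=\ct-1$), within each strip of constant $q$ the value grows as $r$ shrinks, and at every integer crossing $S=k$ it jumps upward, since $\ct(\ct-k+1)/(\ct+1)\geq\ct-k$ reduces to $k\geq0$. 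Composing, $\antuf$ is non-decreasing on $[0,\ct]$.

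For the limit I would record that $\antuf$ is left-continuous in $\trl$: as $\trl$ grows $S$ decreases continuously, so at a crossing $S=k$ the value at $S=k$ (where $r=0$) equals the one-sided limit from $S\to k^{+}$, the jump occurring only on the side $S\to k^{-}$. Now $x_1=\antuo{0}\geq0=x_0$, and by induction $x_{n+1}=\antuo{x_n}\geq\antuo{x_{n-1}}=x_n$, so $(x_n)$ is non-decreasing and bounded by $\ct$, hence converges to $L=\sup_n x_n$. Left-continuity gives $\antuo{x_n}\to\antuo{L}$, while $\antuo{x_n}=x_{n+1}\to L$, so $L=\antuo{L}$ is a fixed point. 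Any fixed point $z$ obeys $z\geq0=x_0$, whence $x_n\leq z$ for all $n$ by monotonicity and $L\leq z$; thus $L$ is the least fixed point of $\antuf$, \ie $L=\trlu$.

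The main obstacle is the third monotonicity ingredient: since the integer part $q$ and the fractional part $r$ of $S$ feed $\antuf$ separately, $\antuf$ is only piecewise continuous, and one must check carefully that the downward jumps of $q$ never let $\antuf$ decrease---that the upward jumps at integer $S$ dominate---and that $\antuf$ is left- rather than right-continuous, for it is this one-sided continuity that certifies that the monotone limit is a genuine fixed point rather than a point of discontinuity.
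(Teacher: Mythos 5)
Your proof is correct, and it reaches the paper's conclusion by a route that is the same in outline---show that \antuf is a monotone self-map of $[0,\ct]$ and identify $\lim_n x_n$ with the smallest fixed point, which is \trlu by the paper's definition---but it differs in the two places where the details actually matter, and in both places your version is the more solid one. The paper's own proof asserts that $q$ and $r$ are non-increasing in \trl and then closes by citing Knaster--Tarski. The first assertion is false as stated for $r$: only $S=q+r=(\pw+\expa)/(\rc+\cw+\cc+\expa)$ is non-increasing, and at each integer crossing $q$ drops by one while $r$ jumps up towards $1$; your strip-by-strip computation, with the pivotal inequality $\ct(\ct-k+1)/(\ct+1)\geq \ct-k$ (indeed equivalent to $k\geq 0$), is exactly what is needed to check that these opposing moves never make \antuf decrease, so you repair rather than merely replicate the monotonicity step. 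Second, Knaster--Tarski guarantees that a monotone self-map of the complete lattice $[0,\ct]$ has a least fixed point, but it does not by itself give convergence of the iterates $x_{n+1}=\antuo{x_n}$ to it: for a monotone but discontinuous map, the supremum $L$ of the iterates satisfies only $\antuo{L}\geq L$ and can sit strictly below the least fixed point. Your observation that \antuf is left-continuous---because it is right-continuous as a function of $S$, while $S$ is continuous and non-increasing in \trl, which is the unique place the hypothesis $\pw\geq\rc+\cw+\cc$ enters---is precisely the Kleene-type continuity that turns the monotone limit into a fixed point, after which minimality follows from $x_n\leq z$ for every fixed point $z$. In short, the paper buys brevity with the Knaster--Tarski citation (which secures existence of \trlu but leaves the convergence claim implicit), whereas your more elementary argument buys a genuine convergence proof at the cost of the piecewise case analysis.
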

\begin{proof}
\pp{In~\cite{our-long}, we prove that \antuf is non-decreasing when $\pw\geq\rc+\cw+\cc$, and obtain
the above theorem by applying the Theorem of Knaster-Tarski.}
\rr{First of all, the average number of threads belongs to $]0,\ct[$, thus for all $x \in
  [0,\ct]$, $0<\antuo{x}<\ct$. In particular, we have $\antuo{0}>0$, and
  $\antuo{\ct}<\ct$, which proves that there exist one fixed point for \antuf.

In addition, we show that \antuf is a non-decreasing function. According to Lemma~\ref{lem:av-thr},
\[ \antu = \ct \times \frac{1+\fup}{q+r+\fup+1}, \]
where all variables except \ct depend actually on \trl. We have
\[ q=\flofun{\frac{\pw + \expa}{\rlwp + \expa}} \text{  and  }
r = \frac{\pw + \expa}{\rlwp + \expa} - q,\]
hence, if $\pw\geq\rlwp$, $q$ and $r$ are non-increasing as \expa is non-decreasing,
which is non-decreasing with \trl. Since \fup is non-decreasing as a function of $q$, we
have shown that if $\pw\geq\rlwp$, \antuf is a non-decreasing function.

Finally, the proof is completed by the theorem of Knaster-Tarski.}
\end{proof}
The same line of reasoning holds for \antlf\rr{ as well}.
\rr{As a remark, w}\pp{We} point out that when $\pw<\rlwp$, we scan the interval of solution, and
have no guarantees about the fact that the solution is the smallest one; still it
corresponds to very extreme cases.

\rr{
\subsection{Several \RLs}
\label{sec:sev-rl}

We consider here a lock-free algorithm that, instead of being a loop
over one \ps and one \rl, is composed of a loop over a sequence of
alternating \pss and \rls.  We show that this algorithm is equivalent
to an algorithm with only one \ps and one \rl, by proving the
intuition that the longest \rl is the only one that fails and hence
expands.

\subsubsection{Problem Formulation}

\newcommand{\maxs}{\ema{S}}

\begin{figure}[b!]
\removelatexerror
\begin{procedure}[H]
\SetKwData{pet}{execution\_time}
\SetKwData{pdo}{done}
\SetKwData{psucc}{success}
\SetKwData{pcur}{current}
\SetKwData{pnew}{new}
\SetKwData{pacp}{AP}
\SetKwData{pttot}{t}
\SetKwFunction{pinit}{Initialization}
\SetKwFunction{ppw}{Parallel\_Work}
\SetKwFunction{pcw}{Critical\_Work}
\SetKwFunction{pread}{Read}
\SetKwFunction{pcas}{CAS}
\SetKwData{pmaxs}{S}

\SetAlgoLined
\pinit{}\;
\While{! \pdo}{
\For{i $\leftarrow$ 1 \KwTo \pmaxs}{
\ppw{i}\;\nllabel{alg:lis-ps}
\While{! \psucc}{\nllabel{alg:lis-bcs}
\pcur $\leftarrow$ \pread{$\pacp[i]$}\;
\pnew $\leftarrow$ \pcw{i,\pcur}\;
\psucc $\leftarrow$ \pcas{\pacp, \pcur, \pnew}\;
}
}\nllabel{alg:lis-ecs}
}
\caption{Combined()\label{alg:gen-name-s}}
\end{procedure}
\caption{Thread procedure with several \rls\label{alg:gen-nb-s}}
\end{figure}

In this subsection, we consider an execution such that each spawned thread runs
Procedure~\ref{alg:gen-name-s} in Figure~\ref{alg:gen-nb-s}. Each thread executes a linear
combination of \maxs independent \rls, \ie operating on separate variables, interleaved
with \pss. We note now as $\rlsiz_i$ and $\psiz_i$ the size of a \re of the \kth{i} \rl and the size of
the \kth{i} \ps, respectively, for each $i \in \inte[1]{\maxs}$. As previously, $q_i$ and
$r_i$ are defined such that $\psiz_i = (q_i + r_i) \times \rlsiz_i$, where $q_i$ is a
non-negative integer and $r_i$ is smaller than $1$.

The Procedure~\ref{alg:gen-name-s} executes the \rls and \pss in a cyclic fashion, so we
can normalize the writing of this procedure by assuming that a \re of the \kst{1} \rl is the
longest one. More precisely, we consider the initial algorithm, and we define $i_0$ as
\[ i_0 = \min \operatorname{argmax}_{i \in \inte[1]{\maxs}}  \rlsiz_i. \]
We then renumber the \rls such that the new ordering is $i_0, \dots, \maxs, 1, \dots,
i_0-1$, and we add in \pinit the first \pss and \rls on access points from $1$ to $i_0$
--- according to the initial ordering.

One success at the system level is defined as one success of the last \cas, and the
throughput is defined accordingly. We note that in steady-state, all \rls have the same
throughput, so the throughput can be computed from the throughput of the \kst{1} \rl
instead.

\subsubsection{Wasted \REs}

\newcommand{\seqt}[1]{\ema{t_{#1}}}
\newcommand{\seqs}[1]{\ema{s_{#1}}}

\begin{lemma}
\label{lem:no-fail}
Unsuccessful \rls can only occur in the \kst{1} \rl.
\end{lemma}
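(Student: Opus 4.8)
The plan is to show that the serialization of successful \res in the \kst{1} \rl --- which, after the renumbering, is the longest one --- propagates through every subsequent \rl and keeps the threads spaced far enough apart that no two of them ever overlap in a \rl with index $i\geq 2$. I work within the periodic steady-state regime established earlier in the section, and I number the threads $\thr{(1)}, \thr{(2)}, \dots$ according to the cyclic order in which they succeed in the \kst{1} \rl; because in steady state each thread succeeds exactly once per period in every \rl, this order contains no lapping, so two threads that are adjacent in the \kst{1} \rl remain adjacent everywhere.

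First I would pin down the base spacing. Two successful \res on the same access point cannot overlap, and a successful \re of the \kst{1} \rl occupies $\rlsiz_1$ units of time, so the completion times of consecutive successes in the \kst{1} \rl are at least $\rlsiz_1$ apart. By the choice $i_0 = \min\operatorname{argmax}_{i} \rlsiz_i$ and the renumbering we have $\rlsiz_1 = \max_{i \in \inte[1]{\maxs}} \rlsiz_i$, so this separation is at least $\rlsiz_i$ for every $i$.

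The heart of the proof is an induction on the \rl index $i$ running from $2$ to \maxs, with the invariant: the threads succeed in the \kth{i} \rl in the same fixed cyclic order, consecutive successes at least $\rlsiz_1$ apart, and each succeeds on its \emph{first} attempt there. The base case $i=1$ is exactly the spacing just established. For the inductive step I observe that the work between a success in the \kth{(i-1)} \rl and the arrival at the \kth{i} \rl is a single \ps of constant duration $\psiz_i$, traversed identically by every thread; hence the order and the spacing $\geq \rlsiz_1$ are transported unchanged to the arrivals at the \kth{i} \rl. Since consecutive arrivals are then separated by at least $\rlsiz_1 \geq \rlsiz_i$, each thread reaches the \kth{i} \rl only after its predecessor has completed its successful \re there, so it reads an uncontended access point and succeeds immediately --- no \re is wasted in the \kth{i} \rl --- and the successes inherit the same spacing $\geq \rlsiz_1$, feeding the invariant for the \kth{(i+1)} \rl. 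Collecting the conclusion over all $i\geq 2$ yields the Lemma.

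The delicate point, and the step I expect to require the most care, is the self-consistency of the \re sizes under expansion: the inequality $\rlsiz_1 \geq \rlsiz_i$ must hold for the sizes that actually arise in the execution. The invariant itself settles this favourably, since it guarantees that at most one thread is ever present in any \rl with $i\geq 2$; those \rls therefore suffer no \hardcon and no expansion at all, which only reinforces $\rlsiz_i \leq \rlsiz_1$ rather than threatening it. The argument is thus consistent rather than circular, but verifying both this expansion bookkeeping and the genuine constancy of the inter-loop travel time once intermediate failures have been ruled out by the induction hypothesis is where the proof must be written most carefully.
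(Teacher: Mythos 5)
Your core argument is exactly the paper's: successful \res in the \kst{1} \rl cannot overlap, so the exit times of consecutive loop-$1$ successes are spaced by more than $\rlsiz_1=\max_i \rlsiz_i$; between two given loops every thread traverses the same constant amount of work (the intermediate \pss and, by induction, failure-free \res), so the order and the spacing are transported to the arrivals at each \kth{i} \rl with $i\geq 2$; hence at most one thread is ever present in such a loop and every attempt there succeeds immediately. The paper organizes this as a property \pro{i,n} (``the thread achieving the \kth{n} loop-$1$ success fails nowhere in loops $2,\dots,i$'') and runs a double induction over the loop index $i$ \emph{and} the success index $n$, with the same arithmetic you sketch. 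One small repair: state the spacing strictly (exits are more than $\rlsiz_1$ apart, since the successor's \rf must follow the predecessor's \cas), because the renumbering only takes the \emph{least} maximizer $i_0$, so $\rlsiz_i=\rlsiz_1$ can occur for $i\geq 2$ and spacing ``at least $\rlsiz_1$'' would leave a tie.

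The genuine flaw is your opening frame. Appealing to ``the periodic steady-state regime established earlier in the section'' is circular: the periodicity theory of Section~\ref{sec:wt} is proved for the single-\rl procedure, and it is transferred to the multi-\rl procedure only through Theorem~\ref{th:mult-eq}, which itself rests on this lemma. It is also weaker than the statement: the lemma excludes failures outside the \kst{1} \rl over the \emph{whole} execution, transient included, and inside a bi-infinite periodic schedule your induction has no anchor --- ``each thread reaches the \kth{i} \rl only after its predecessor has completed its successful \re there'' presupposes that the predecessor succeeded on its first attempt, which presupposes the same for \emph{its} predecessor, with no first case to start the chain. The paper's base case \pro{i,1} supplies exactly this anchor: the very first thread to succeed in loop $1$ meets no competition in any later loop. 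The fix is to drop the periodicity assumption entirely --- none of your spacing or constant-travel computations uses it --- and induct over $(i,n)$ from the first success onward. Finally, your closing paragraph on expansion, while sound, is not part of this lemma: the paper proves it in the conflict-free model with constant \re sizes and treats the ``longest loop stays longest under expansion'' feedback separately, in the expansion discussion of Section~\ref{sec:sev-exp}.
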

\begin{proof}

We note $(\seqt{n})_{n \in [1,+\infty[}$ the sequence of the thread numbers that succeeds
    in the \kst{1} \rl, and $(\seqs{n})_{n \in [1,+\infty[}$ the sequence of the
        corresponding time where they exit the \rl. We notice that by construction, for
        all $n \in [1,+\infty[$, $\seqs{n}<\seqs{n+1}$. Let, for $i \in \inte[2]{\maxs}$
            and $n \in [1,+\infty[$, \pro{i,n} be the following property: for all $i' \in
                \inte[2]{i}$, and for all $n' \in \inte[1]{n}$, the thread \thr{\seqt{n'}}
                succeeds in the \kth{i} \rl at its first attempt.

We assume that for a given $(i,n)$, \pro{i+1,n} and \pro{i,n+1} is true, and show that
\pro{i+1,n+1} is true. As the threads \thr{\seqt{n}} and \thr{\seqt{n+1}} do not have any
failure in the first $i$ \rls, their entrance time in the \kth{i+1} \rl is given by
\[ \seqs{n} + \sum_{i'=1}^{i} ( \rlsiz_{i'} + \psiz_{i'} ) + \psiz_{i+1} = X_1 \text{  and  }
\seqs{n+1} + \sum_{i'=1}^{i} ( \rlsiz_{i'} + \psiz_{i'} ) + \psiz_{i+1} = X_2, \]
respectively. Thread \thr{\seqt{n}} does not fail in the \kth{i+1} \rl, hence exits at
\[X_1 + \rlsiz_{i+1} < X_1 + \rlsiz_1 = \seqs{n} + X_2 - \seqs{n+1} < X_2. \]
As the previous threads $\thr{n-1}, \dots, \thr{1}$ exits the \kth{i} \rl before \thr{n},
and next threads \thr{n'}, where $n'>n+1$, enters this \rl after \thr{n+1}, this implies
that the thread \thr{\seqt{n+1}} succeeds in the \kth{i+1} \rl at its first attempt, and
\pro{i+1,n+1} is true.

Regarding the first thread that succeeds in the first \rl, we know that he successes in
any \rl since there is no other thread to compete with. Therefore, for all $i \in
\inte[2]{\maxs}$, \pro{i,1} is true. Then we show by induction that all \pro{2,n} is true,
then all \pro{3,n}, \etc, until all \pro{\maxs,n}, which concludes the proof.
\end{proof}

\begin{theorem}
\label{th:mult-eq}
The multi-\rl Procedure~\ref{alg:gen-name-s} is equivalent to the
Procedure~\ref{alg:gen-name}, where
\[ \psiz = \psiz_1 + \sum_{i=2}^\maxs \left( \psiz_i + \rlsiz_i \right) \quad \text{and}\quad
\rlsiz = \rlsiz_1.\]
\end{theorem}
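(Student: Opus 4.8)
The plan is to reduce everything to Lemma~\ref{lem:no-fail}, which already establishes the crucial fact that unsuccessful \res occur only in the \kst{1} \rl. Consequently, every thread traverses each of the \rls numbered $2,\dots,\maxs$ in a single attempt, paying the fixed, deterministic cost $\rlsiz_i$ with no failure. I would first observe that this failure-free block (the \pss and \rls numbered $2$ through \maxs) carries no \cacas and no \hardcon either: the proof of Lemma~\ref{lem:no-fail} actually orders the accesses to each access point $i \geq 2$ in time --- a thread exits the \kth{i} \rl before the next one enters it --- so the corresponding \cas{}es never overlap and therefore never expand. Following the device of Section~\ref{sec:exp}, where a \rf or \cas that does not expand is folded into the \ps, this whole block can be treated as ordinary work not subject to \hardcons.

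Next I would compute, for an arbitrary thread and after the transient phase, the work performed between the exit of one successful attempt of the \kst{1} \rl and the entrance of its next attempt. Traversing the cyclic body of Procedure~\ref{alg:gen-name-s} and using that the loops $2,\dots,\maxs$ succeed immediately, this work is the concatenation of the \pss and \rls numbered $2,\dots,\maxs$ followed by the \ps numbered $1$, whose total size is
\[ \psiz_1 + \sum_{i=2}^{\maxs} \left( \psiz_i + \rlsiz_i \right), \]
i.e. exactly the announced value of \psiz. The normalization of the problem formulation --- the renumbering making a \re of the \kst{1} \rl the longest, and the priming of the initialization with the first partial segments --- guarantees that, in steady state, each cycle begins with this \ps followed by the \kst{1} \rl, so only this aggregate size enters the timing analysis of Section~\ref{sec:wt}.

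With this identification I would argue that the execution restricted to the \kst{1} \rl is indistinguishable from an execution of Procedure~\ref{alg:gen-name} run with the stated \psiz and with $\rlsiz = \rlsiz_1$: in both procedures the successive entrances of a thread into the \kst{1} \rl are separated by the same deterministic amount of contention-free work, so the \logcons arising in the \kst{1} \rl, hence the number of \cacas and the length of the period, coincide. The expansion of the \kst{1} \rl is computed exactly as in the single-\rl case and is already carried by $\rlsiz_1$, which is why $\rlsiz = \rlsiz_1$ and the remaining loops contribute nothing to it. Since in steady state all \rls share the same throughput, \thru may be read off the \kst{1} \rl, and the equivalence follows.

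The step I expect to be the main obstacle is the justification in the first paragraph: making rigorous that folding the failure-free loops into the \ps preserves not only the aggregate execution time but also the \emph{absence} of hardware interference those loops could otherwise create, and that doing so does not perturb the expansion computed on the \kst{1} \rl. This rests on turning the logical non-overlap of Lemma~\ref{lem:no-fail} into genuine temporal non-overlap on the access points $i \geq 2$, and on the modelling assumption that only overlapping \cas{}es expand; the residual difficulty is purely the bookkeeping needed to absorb the transient behaviour and align the phase through the renumbering and the initialization priming.
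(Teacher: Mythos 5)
Your proof takes essentially the same route as the paper's: the paper's entire argument is to invoke Lemma~\ref{lem:no-fail}, conclude that every \rl other than the first runs failure-free and hence has constant duration $\rlsiz_i$, and fold these loops together with the \pss into a single \ps of the stated size, with $\rlsiz = \rlsiz_1$. The extra care you take about \hardcons and expansion on the access points $i \geq 2$ (including the temporal non-overlap extracted from the proof of Lemma~\ref{lem:no-fail}) is not part of the paper's proof of the theorem itself, but it matches --- and slightly sharpens --- the informal positive-feedback discussion the paper defers to Section~\ref{sec:sev-exp}.
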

\begin{proof}
According to Lemma~\ref{lem:no-fail} there is no failure in other \rl than the first one;
therefore, all \rls have a constant duration, and can thus be considered as \pss.
\end{proof}

\subsubsection{Expansion}
\label{sec:sev-exp}

The expansion in the \rl starts as threads fail inside this \rl. When threads are launched,
there is no expansion, and Lemma~\ref{lem:no-fail} implies that if threads fail, it
should be inside the first \rl, because it is the longest one. As a result, there will be
some stall time in the memory accesses of this first \rl, \ie expansion, and it will get
even longer. Failures will thus still occur in the first \rl: there is a positive feedback
on the expansion of the first \rl that keeps this first \rl as the longest one among all
\rls. Therefore, in accordance to Theorem~\ref{th:mult-eq}, we can compute the expansion
by considering the equivalent single-\rl procedure described in the theorem.

}

\pp{
}\section{Experimental Evaluation}
\label{sec:xp}

We validate our model and analysis framework through successive steps, from
synthetic tests, capturing a wide range of possible
abstract algorithmic designs,
to several reference implementations of
extensively studied lock-free \ds
designs that include cases with non-constant \ps and \rl.
\pp{The complete results can be found in~\cite{our-long} and the numerical simulation code in~\cite{our-disc15-code}.}

\pp{
}\subsection{Setting}

We have conducted experiments on an Intel ccNUMA workstation system. The system
is composed of two sockets, that is equipped with Intel Xeon E5-2687W v2 
CPUs\pp{.}\rr{ with frequency band \ghz{1.2\text{-}3.4.} The physical cores
have private L1, L2 caches and they share an L3 cache, which is \megb{25}.} 
In a socket, the ring interconnect provides L3 cache accesses and core-to-core 
communication. \rr{Due to the bi-directionality of the ring
interconnect, uncontended latencies for intra-socket communication between cores
do not show significant variability.}\pp{Threads are pinned to a single socket to minimize non-uniformity in \rf and \cas latencies.
  Due to the bi-directionality of the ring that interconnects L3
 caches, uncontended latencies for intra-socket communication between cores do not
 show significant variability.}
\pp{The methodology in~\cite{david-emp-atom} is used to
  measure the \cas and \rf latencies, while the work inside the \ps is
  implemented by a for-loop of {\it Pause} instructions.}

\rr{Our model assumes uniformity in the \cas and \rf latencies on the shared
cache line. Thus, threads
are pinned to a single socket to minimize non-uniformity in \rf and \cas latencies.
In the experiments, we vary the number of threads between 4 and 8
since the maximum number of threads that can be used in the experiments are
bounded by the number of physical cores that reside in one socket.}

In all figures, y-axis provides the throughput,\rr{ which is the
  number of successful operations completed per millisecond. Parallel
  work}\pp{ while the parallel work} is represented in x-axis in
cycles. The graphs contain the
high and low estimates (see Section~\ref{sec:wt}), corresponding to the lower and upper bound on the \cacas,
respectively,
and an additional curve that shows the average of them.

\rr{
As mentioned before, the latencies of \cas and \rf are parameters of our model. We used
the methodology described in~\cite{david-emp-atom} to measure
latencies of these operations in a benchmark program by using two threads that
are pinned to the same socket. The aim is to bring the cache line into the
state used in our model. Our assumption is that the \rf is
conducted on an invalid line. For \cas, the state of the cache line could be
exclusive, forward, shared or invalid. Regardless of the state of the cache line,
\cas requests it for ownership, that compels invalidation
in other cores, which in turn incurs a two-way communication and a memory fence afterwards
to assure atomicity. Thus, the latency of \cas does not show negligible variability
with respect to the state of the cache line, as also revealed in our latency benchmarks.

As for the computation cost, the work inside the \ps is implemented by
a dummy for-loop of {\it Pause} instructions.

}

\newcommand\fuckspaa{}

\pp{
}\subsection{Synthetic Tests}
\label{sec:synt}

\rr{
\subsubsection{Single \rl}
\label{sec:synt-rls}
}
\label{sec:synt-rl}

\rr{\begin{figure}[h!]
\begin{center}
\includegraphics[width=\textwidth]{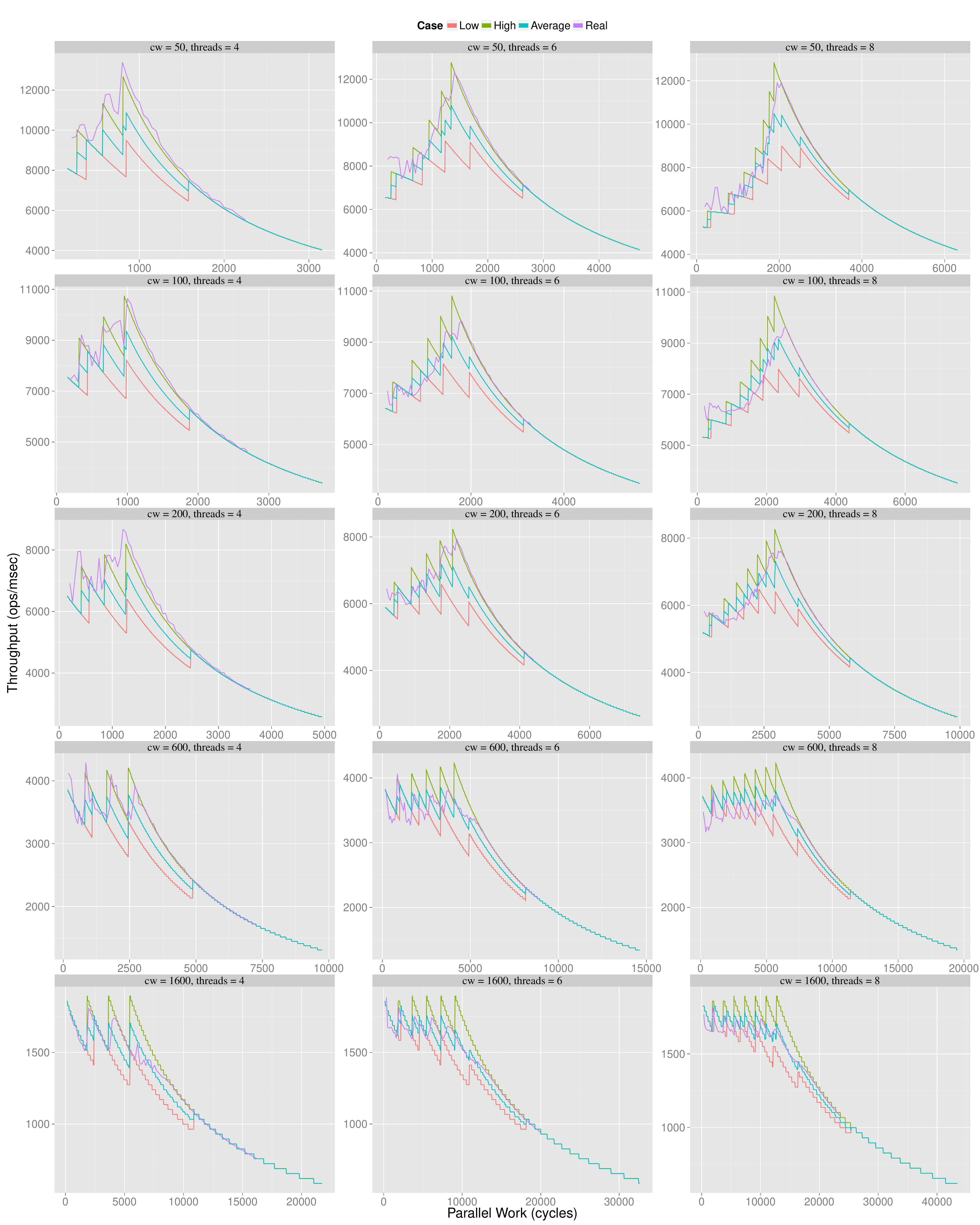}
\end{center}
\caption{Synthetic program\label{fig:synt_smart}}
\end{figure}}

\pp{
\setlength{\belowcaptionskip}{0pt}
\setlength{\textfloatsep}{0.4\baselineskip plus 0.2\baselineskip minus 0.2\baselineskip}
\begin{figure}[b!]
\begin{center}
\begin{minipage}[t]{.68\textwidth}
\includegraphics[width=\textwidth]{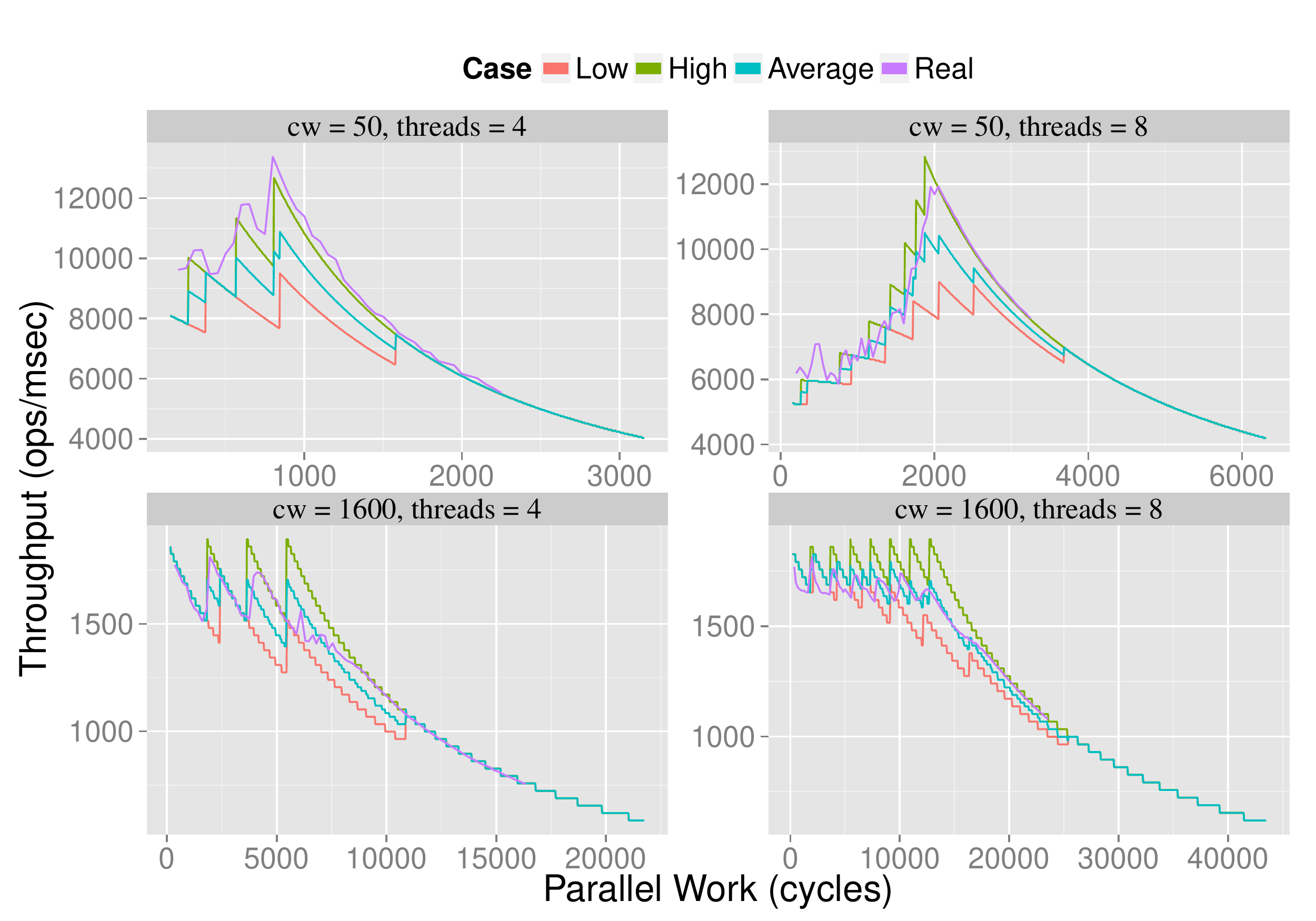}
\caption{Synthetic program\label{fig:synt_smart}}
\end{minipage}\hfill\begin{minipage}[t]{.3\textwidth}
\fuckspaa\includegraphics[width=\textwidth]{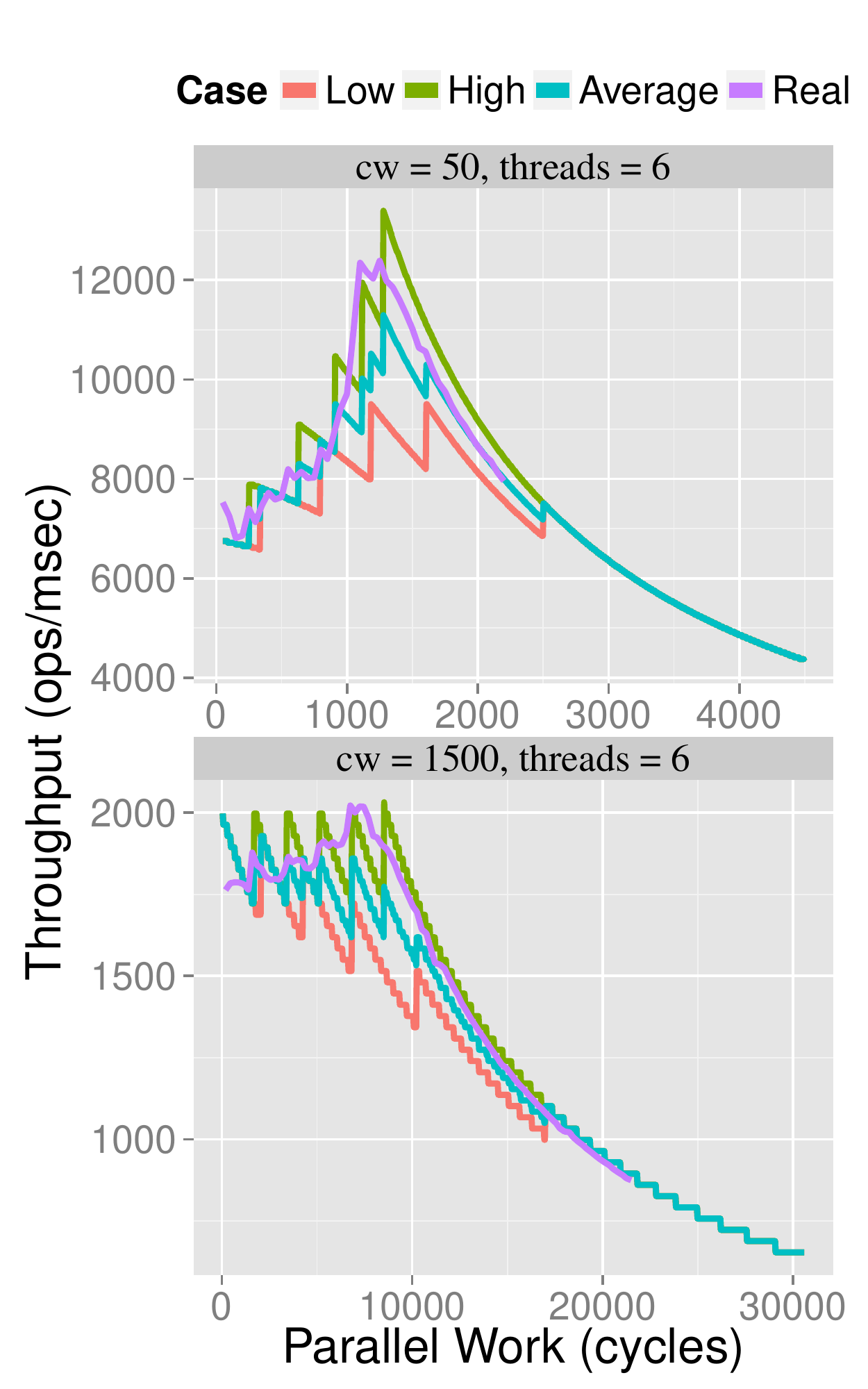}
\caption{\popop on stack\label{fig:treiber}}
\end{minipage}\end{center}
\end{figure}}

For the evaluation of our model, we first create synthetic tests that emulate different
design patterns of lock-free data structures (value of \cw) and different application
contexts (value of \pw).
\rr{As described in the previous subsection, in the
  Procedure~\ref{alg:gen-name}, the amount of work in both the \ps and
  the \rl are implemented as dummy loops, whose costs are adjusted
  through the number of iterations in the loop.}

Generally speaking, in Figure~\ref{fig:synt_smart}, we observe two main behaviors: when \pw is high,
the \ds is not contended, and threads can operate without
failure. When \pw is low, the \ds is contended, and depending on the
size of \cw (that drives the expansion)  a steep decrease
in throughput or just a roughly constant bound on the performance is observed.

The position of the experimental curve between the high and low
estimates, depends on \calrl. It can be observed that the experimental
curve mostly tends upwards as \calrl gets smaller, possibly because
the serialization of the \cas{}s helps the synchronization of the
threads.

\posrem{For the cases with considerable expansion, it is expected to have
unfairness among threads.
This fact loosens the validity of our deterministic model that assumes
uniformity and presumably leads to underestimation of throughput.}{}

Another interesting fact is the waves appearing on the experimental
curve, especially when the number of threads is low or the critical work
big. This behavior is originating because of the variation of $r$
with the change of parallel work, a fact that is captured by our analysis.

\rr{
\subsubsection{Several \rls}
\label{sec:synt-rls}

\begin{figure}[h!]
\begin{center}
\includegraphics[width=.95\textwidth]{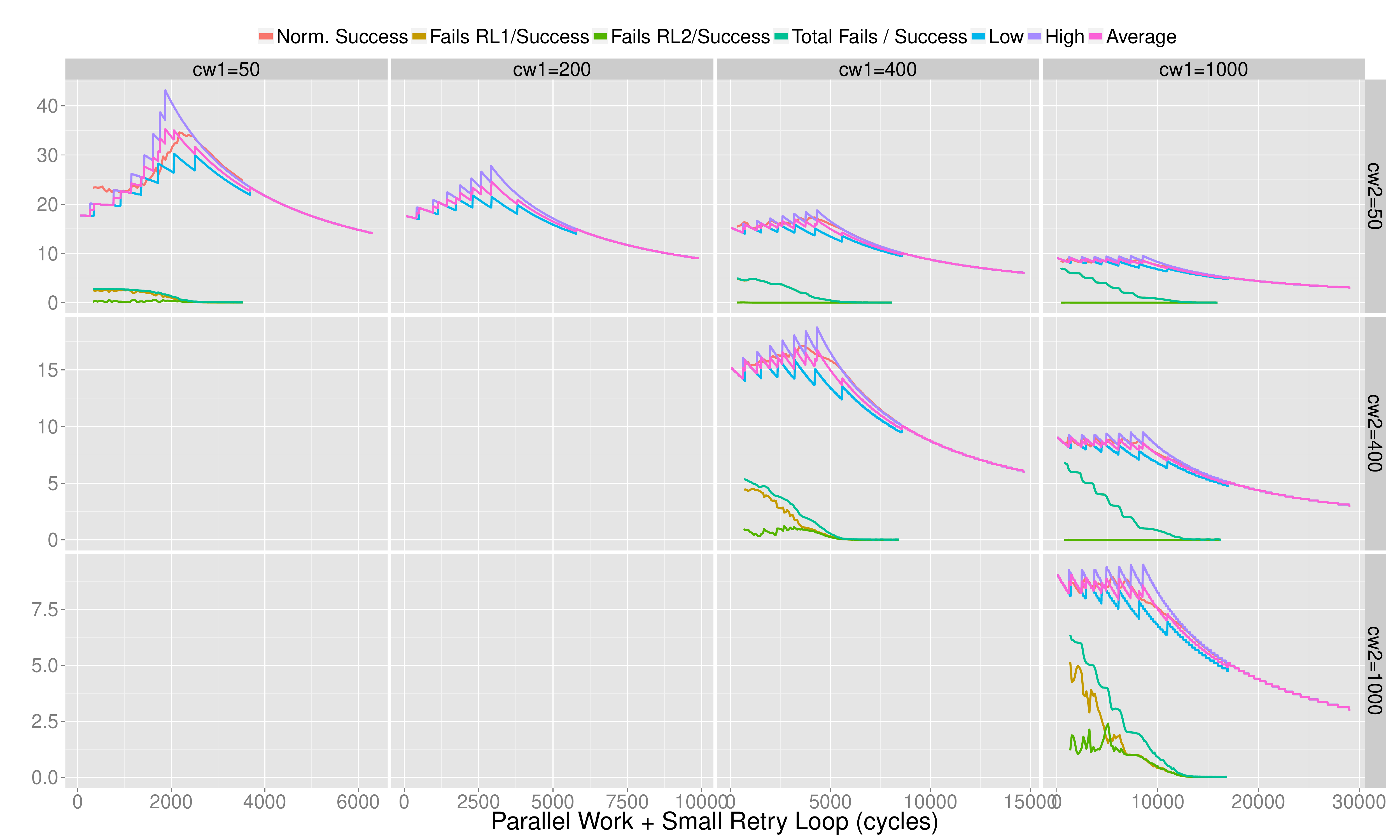}
\end{center}
\caption{Multiple \rls with 8 threads}
\label{fig:multiple_retry}
\end{figure}

We have created experiments by combining several \rls, each operating on an independent variable
which is aligned to a cache line. In Figure~\ref{fig:multiple_retry}, results are compared
with the model for single \rl case where the single \rl is equal to the longest \rl, while
the other \rls are part of the \ps.
The distribution of fails in the \rls are illustrated and all throughput curves are normalized
with a factor of 175 (to be easily seen in the same graph). Fails per success values are
not normalized and a success is obtained after completing all \rls.
}

\subsection{Treiber's Stack}
\label{sec:trei}

\rr{\begin{figure}[h!]
\begin{center}
\includegraphics[width=.95\textwidth]{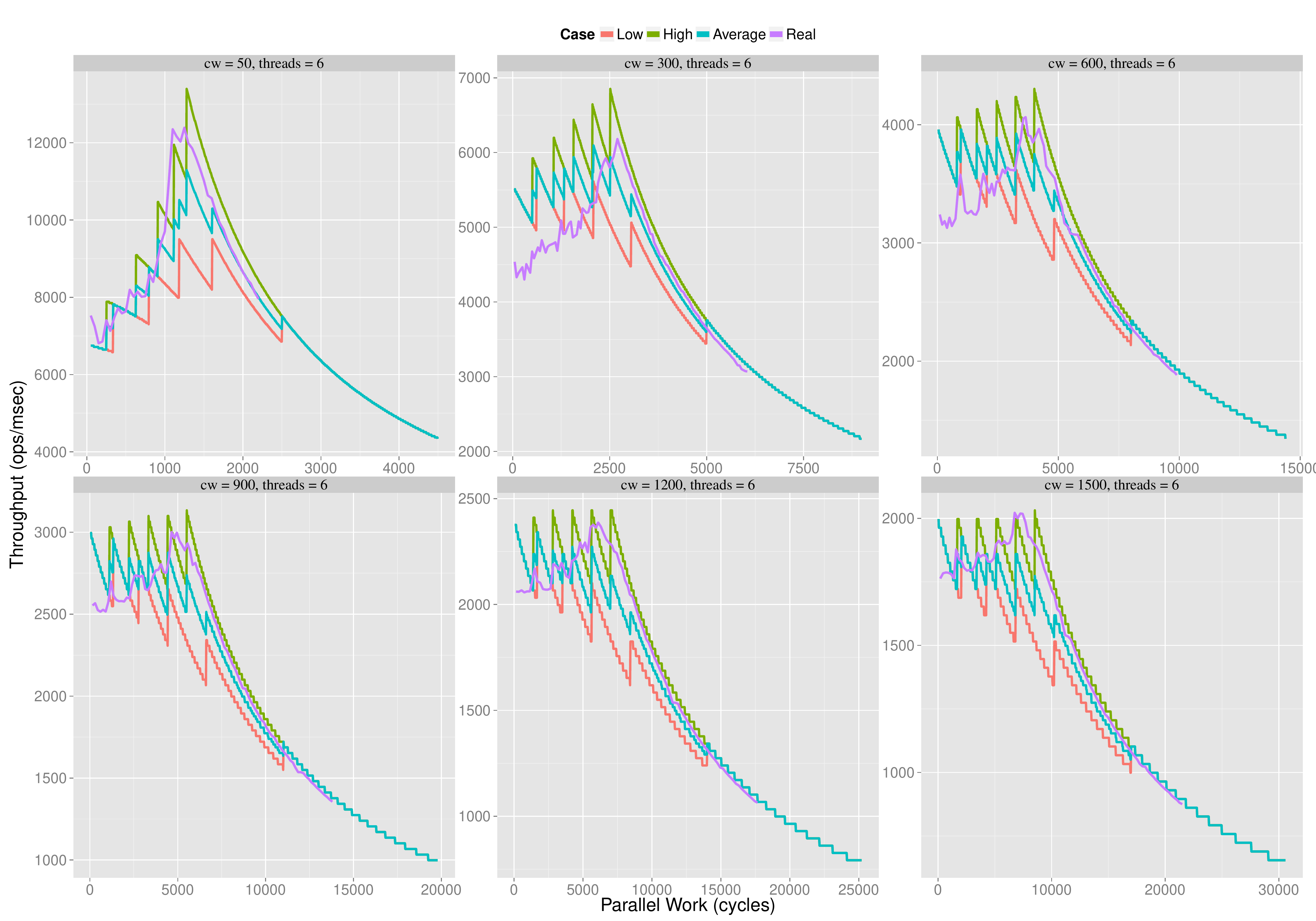}
\end{center}
\caption{\popop on Treiber's stack}
\label{fig:treiber}
\end{figure}}

\pp{ The lock-free stack by Treiber~\cite{lf-stack} is typically the
  first example that is used to validate a freshly-built model on
  lock-free \dss. A \popop contains a \rl that first reads the top
  pointer and gets the next pointer of the element to obtain the
  address of the second element in the stack, before attempting to
  \cas with the address of the second element. The access to the next
  pointer of the first element occurs between the \rf and the
  \cas. Thus, it represents the work in \calrl. By varying the
  number of elements that are popped at the same time, and the
  cache misses implied by the reads,   we vary the  work in \calrl and obtain the results depicted in
  Figure~\ref{fig:treiber}.}

\rr{The lock-free stack by Treiber~\cite{lf-stack} is one of the most
  studied efficient \dss.
\popop and \pushop both contain a \rl, such that each \re starts with
a \rf and ends with \cas on the shared top pointer. In order to
validate our model, we start by using \popop{}s. From a stack
which is initiated with 50 million elements, threads continuously pop
elements for a given amount of time. We count the total number of pop
operations per millisecond. Each \popop first reads the top
pointer and gets the next pointer of the element to obtain the address
of the second element in the stack, before attempting to \cas with the
address of the second element. The access to the next pointer of the
first element occurs in between the \rf and the \cas. Thus, it
represents the work in \calrl. This memory access can possibly
introduce a costly cache miss depending on the locality of the popped
element.

To validate our model with different \calrl values, we make use of
this costly cache miss possibility. We allocate a contiguous chunk of
memory and align each element to a cache line. Then, we initialize the
stack by pushing elements from contiguous memory either with a single
or large stride to disable the prefetcher.  When we measure the
latency of \calrl in \popop for single and large stride cases, we
obtain the values that are approximately 50 and 300 cycles,
respectively.  As a remark, 300 cycles is the cost of an L3 miss in
our system when it is serviced from the local main memory module. To
create more test cases with larger \calrl, we extended the stack
implementation to pop multiple elements with a single operation.
Thus, each access to the next element could introduce an additional L3
cache miss while popping multiple elements. By doing so, we created
cases in which each thread pops 2, 3, \etc elements, and \calrl goes to
600, 900, \etc cycles, respectively. In Figure~\ref{fig:treiber},
comparison of the experimental results from Treiber's stack and our
model is provided.

As a remark, we did not implemented memory reclamation for our
experiments but one can implement a stack that allows pop and push of
multiple elements with small modifications using hazard
pointers~\cite{hazard}.  Pushing can be implemented in the same way as
single element case.  A \popop requires some modifications for memory
reclamation. It can be implemented by making use of hazard pointers
just by adding the address of the next element to the hazard list
before jumping to it. Also, the validity of top pointer should be
checked after adding the pointer to the hazard list to make sure that
other threads are aware of the newly added hazard pointer. By
repeating this process, a thread can jump through multiple elements
and pop all of them with a \cas at the end.

\begin{algorithm}[t!]
\SetAlgoLined
Pop (multiple)

\While{true} {

    t = Read(top)\;
    \For { multiple } {
      \If {t = NULL} {return EMPTY\;}
      hp* = t\;
      \If {top != t} {break\;}
      hp++\;
      next = t.next\;
    }
    \If {CAS(\&top, t, next)} {break\;}
}
RetireNodes (t, multiple)\;
\caption{Multiple Pop\label{fig:stack}}
\end{algorithm}
}

\rr{
\subsection{Shared Counter}

\def\ratio{.47}

\begin{figure}[h!t]
\begin{center}
\begin{minipage}{\ratio\textwidth}\begin{center}
\includegraphics[width=\textwidth]{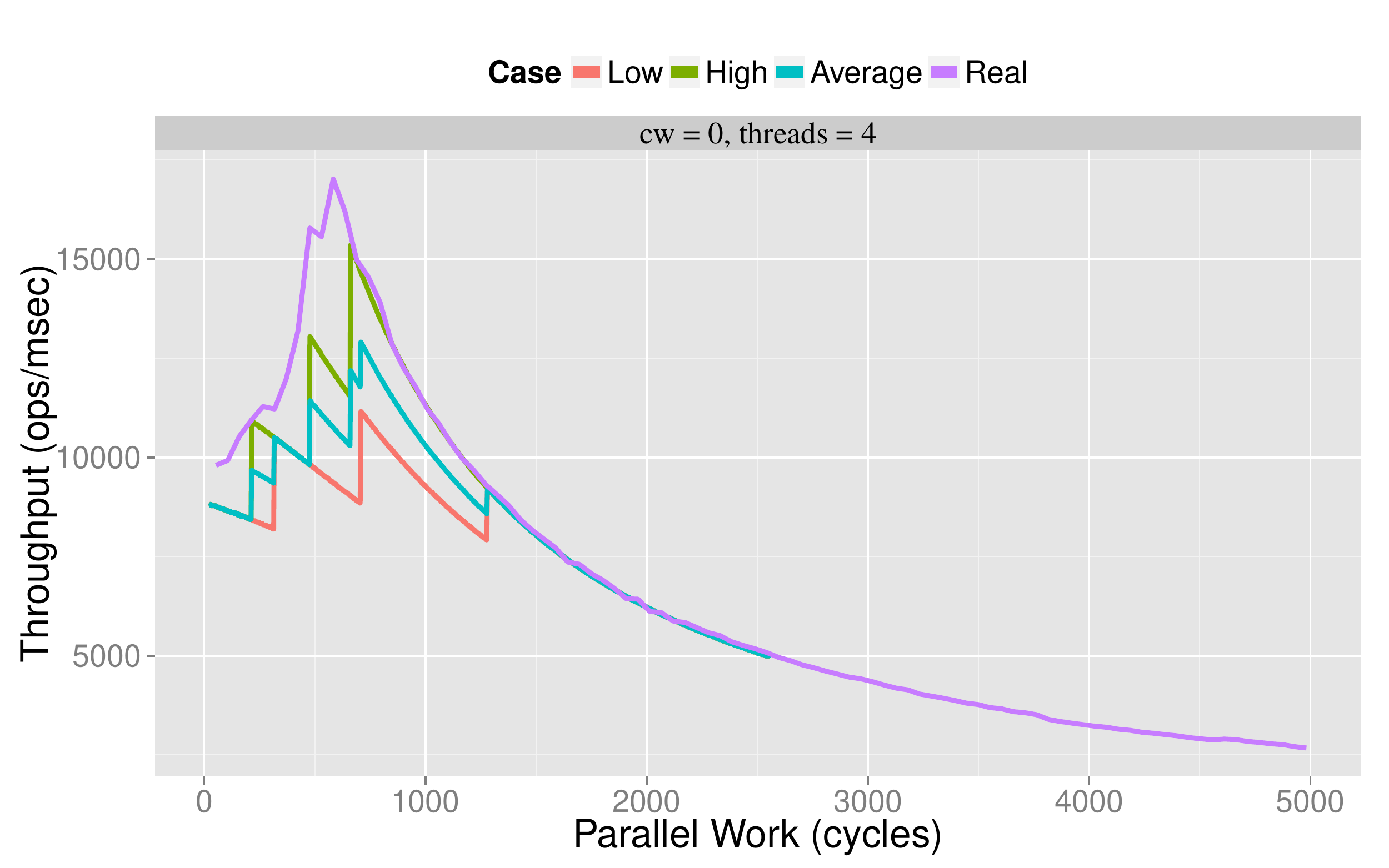}

(a) $4$ threads
\end{center}\end{minipage}\hfill\begin{minipage}{\ratio\textwidth}\begin{center}
\includegraphics[width=\textwidth]{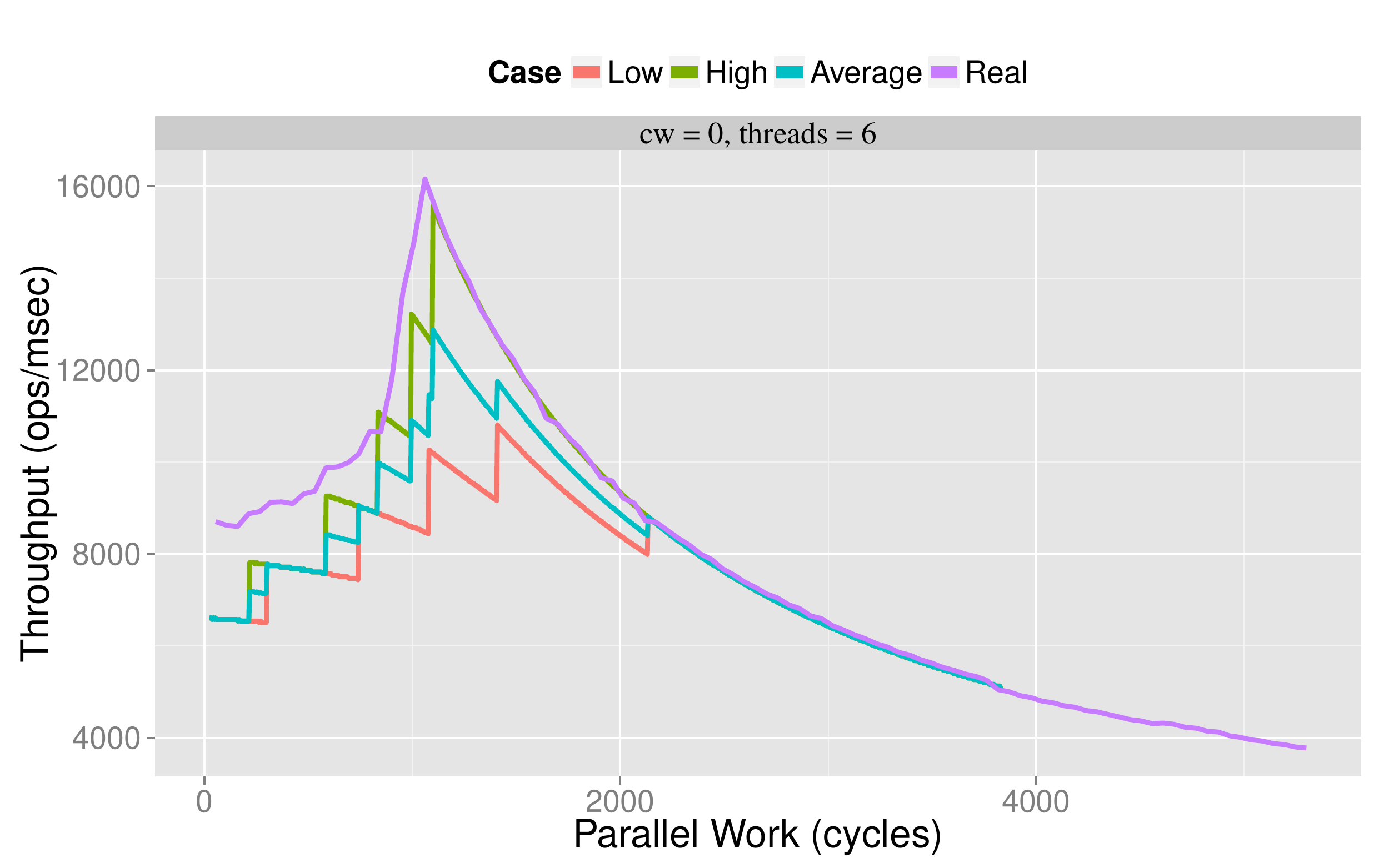}

(b) $6$ threads
\end{center}\end{minipage}

\includegraphics[width=\ratio\textwidth]{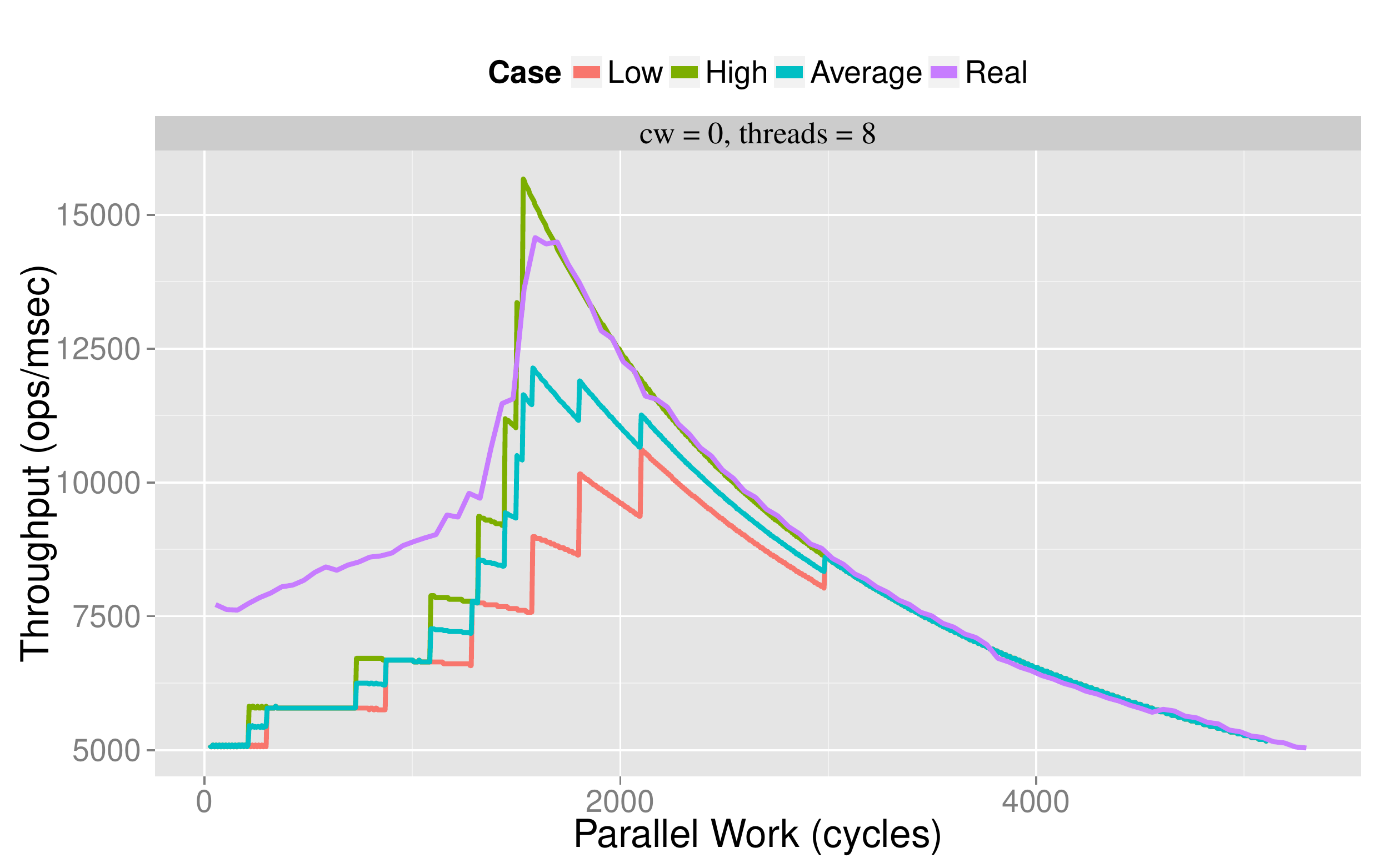}

(c) $8$ threads
\end{center}
\caption{\incop on a shared counter\label{fig:sc}}
\end{figure}

In~\cite{count-moir}, the authors have implemented a ``scalable
statistics counters'' relying on the following idea: when contention
is low, the implementation is a regular concurrent counter with a
\cas; when the counter starts to be contended, it switches to a
statistical implementation, where the counter is actually incremented
less frequently, but by a higher value. One key point of this
algorithm is the switch point, which is decided thanks to the number
of failed increments; our model can be used by providing the peak
point of performance of the regular counter implementation as the
switch point.
We then have implemented a shared counter which is basically a \faa
using a \cas, and compared it with our analysis. The result is
illustrated in Figure~\ref{fig:sc}, and shows that the \ps size corresponding to the peak point is correctly
estimated using our analysis.

\subsection{\delmin in Priority List}

\begin{figure}[h!t]
\begin{center}
\begin{minipage}{\ratio\textwidth}\begin{center}
\includegraphics[width=\textwidth]{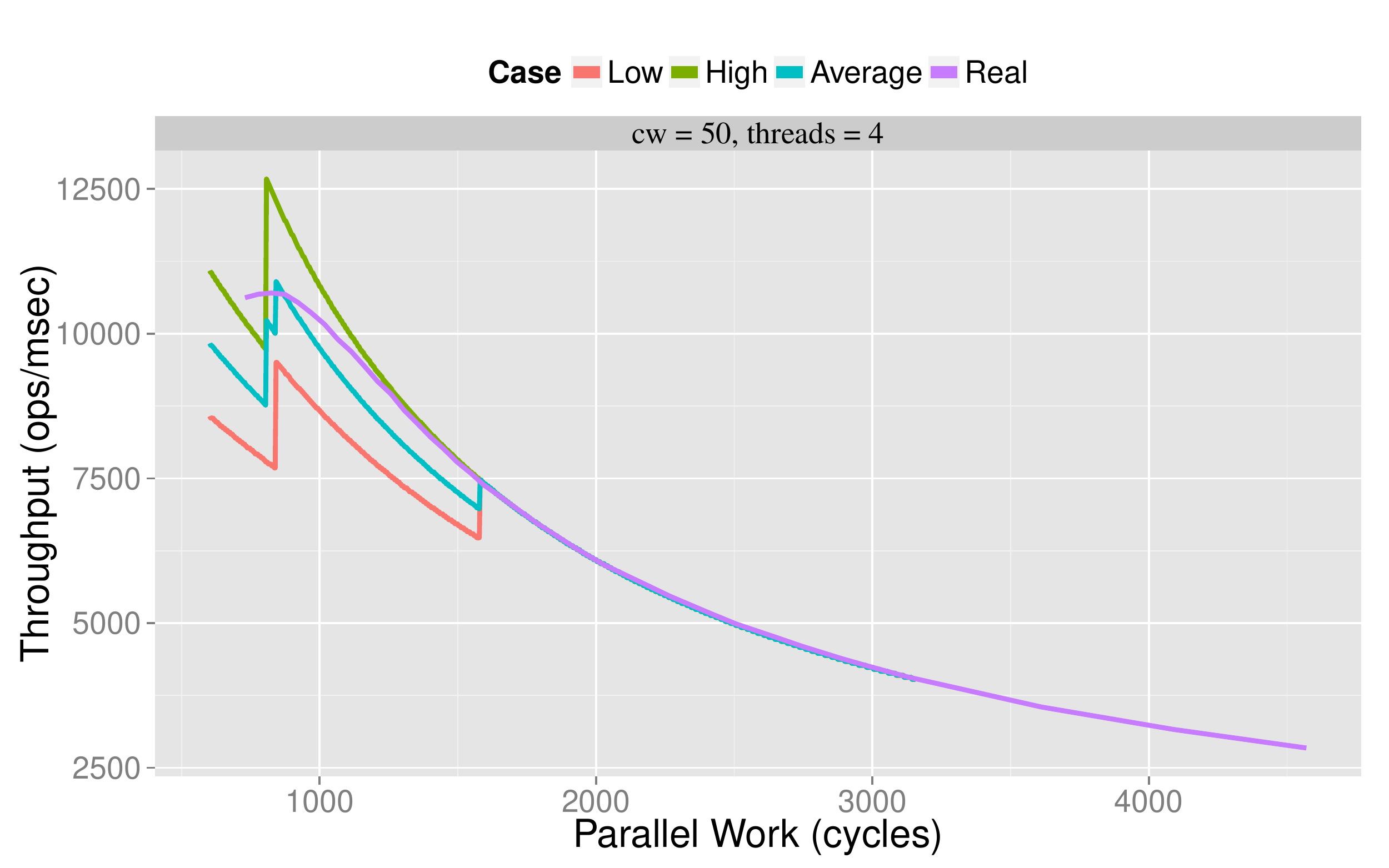}

(a) $4$ threads
\end{center}\end{minipage}\hfill\begin{minipage}{\ratio\textwidth}\begin{center}
\includegraphics[width=\textwidth]{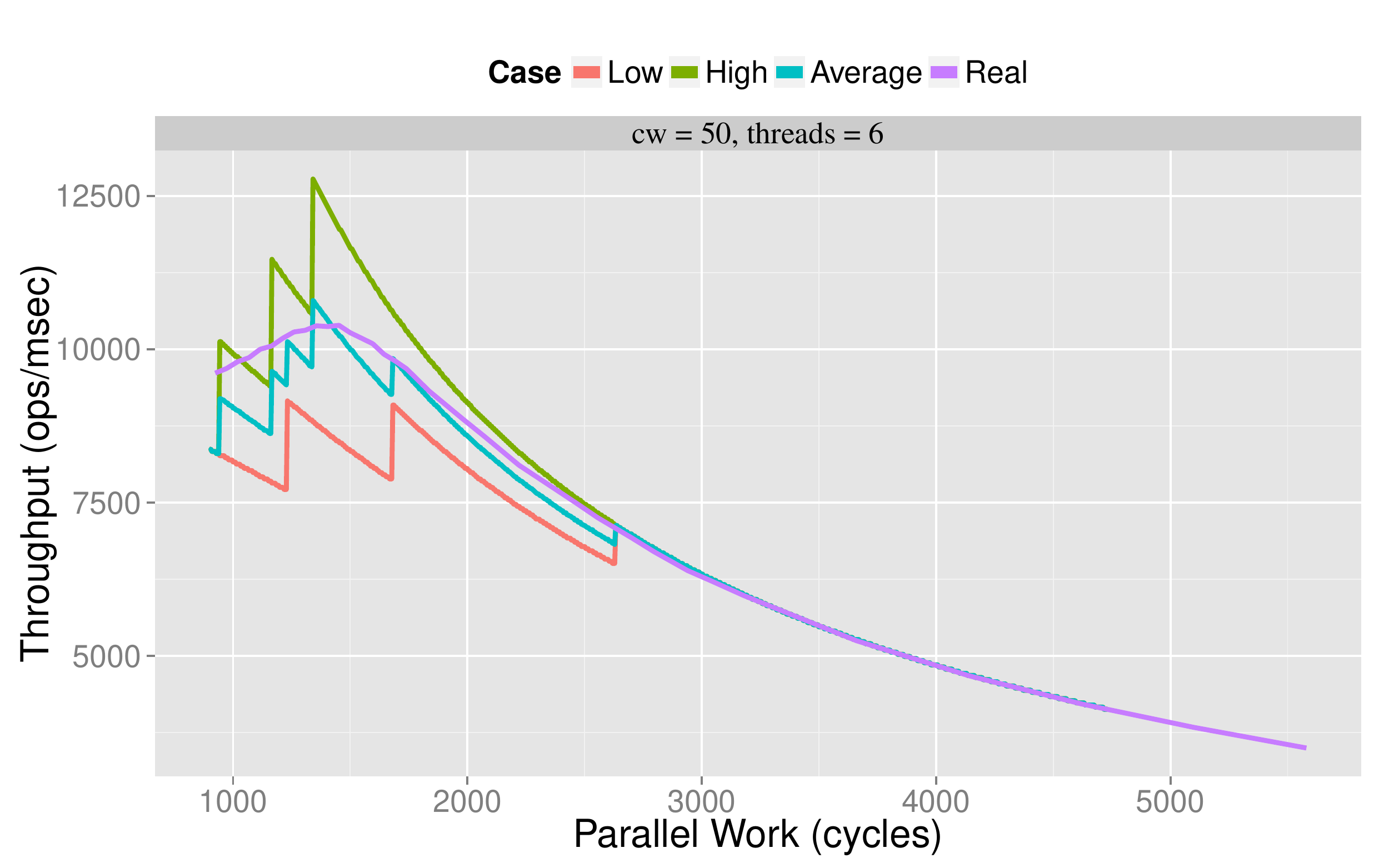}

(b) $6$ threads
\end{center}\end{minipage}

\includegraphics[width=\ratio\textwidth]{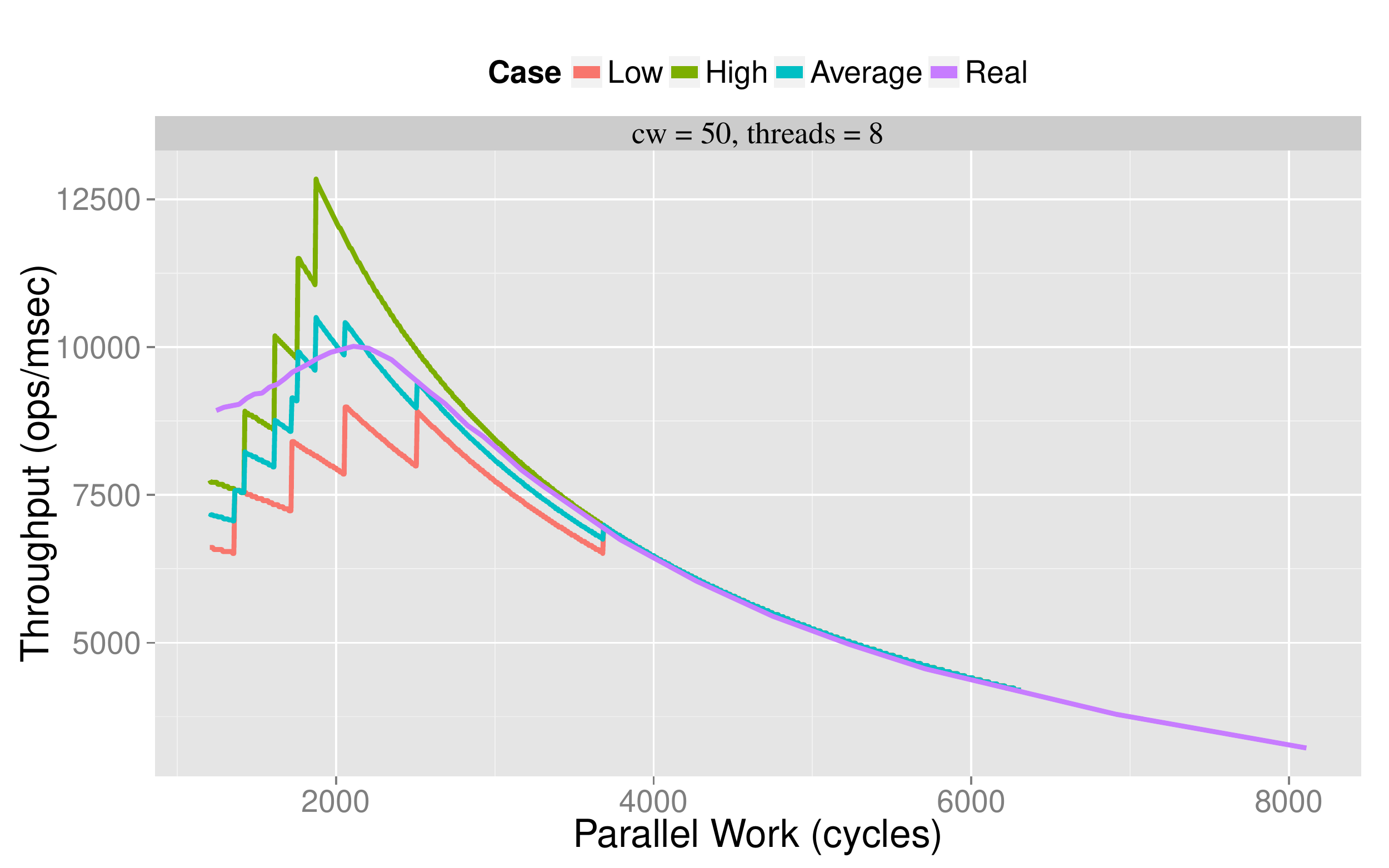}

(c) $8$ threads
\end{center}
\caption{\delmin on a priority list \label{fig:pq}}
\end{figure}

We have applied our model to \delmin of the skiplist based priority queue
designed in~\cite{upp-prio-que}. \delmin traverses the list from the beginning
of the lowest level, finds the first node that is not logically deleted, and
tries to delete it by marking. If the operation does not succeed, it continues
with the next node. Physical removal is done in batches when reaching a
threshold on the number of deleted prefixes, and is followed by a restructuring
of the list by updating the higher level pointers, which is conducted by the
thread that is successful in redirecting the head to the node deleted by itself.

We consider the last link traversal before the logical deletion as critical
work, as it continues with the next node in case of failure. The rest of
the traversal is attributed to the \ps as the threads can proceed concurrently without
interference. We measured the average cost of a traversal under low contention
  for each number of threads, since traversal becomes expensive with more
  threads. In addition, average cost of restructuring is also included in the
\ps since it is executed infrequently by a single thread.

We initialize the priority queue with a large set of elements. As illustrated in
Figure~\ref{fig:pq}, the smallest \pw value is not zero as the average cost of traversal
and restructuring is intrinsically included. The peak point is in the estimated
place but the curve does not go down sharply under high contention. This
presumably occurs as the traversal might require more than one steps (link
access) after a failed attempt, which creates a back-off effect.

}

\rr{
\subsection{\enqop-\deqop on a Queue}

\pp{\begin{wrapfigure}{l}{.5\textwidth}
\begin{center}
\fuckspaa\includegraphics[width=.4\textwidth]{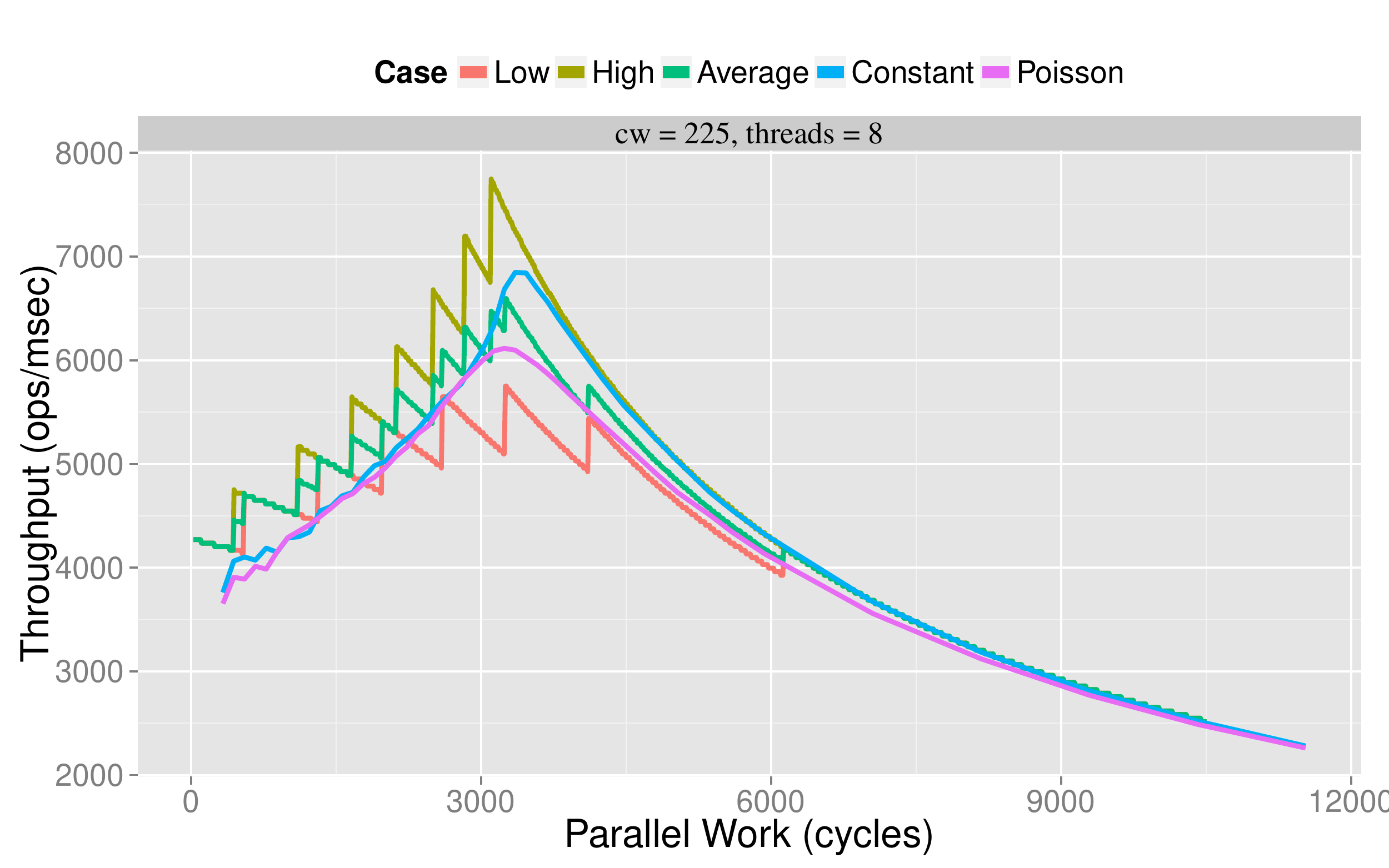}\end{center}
\caption{\enqop-\deqop on Michael and Scott queue\label{fig:ms}}
\end{wrapfigure}}

\rr{\begin{figure}[h!]
\begin{center}
\begin{minipage}{.47\textwidth}\begin{center}
\includegraphics[width=\textwidth]{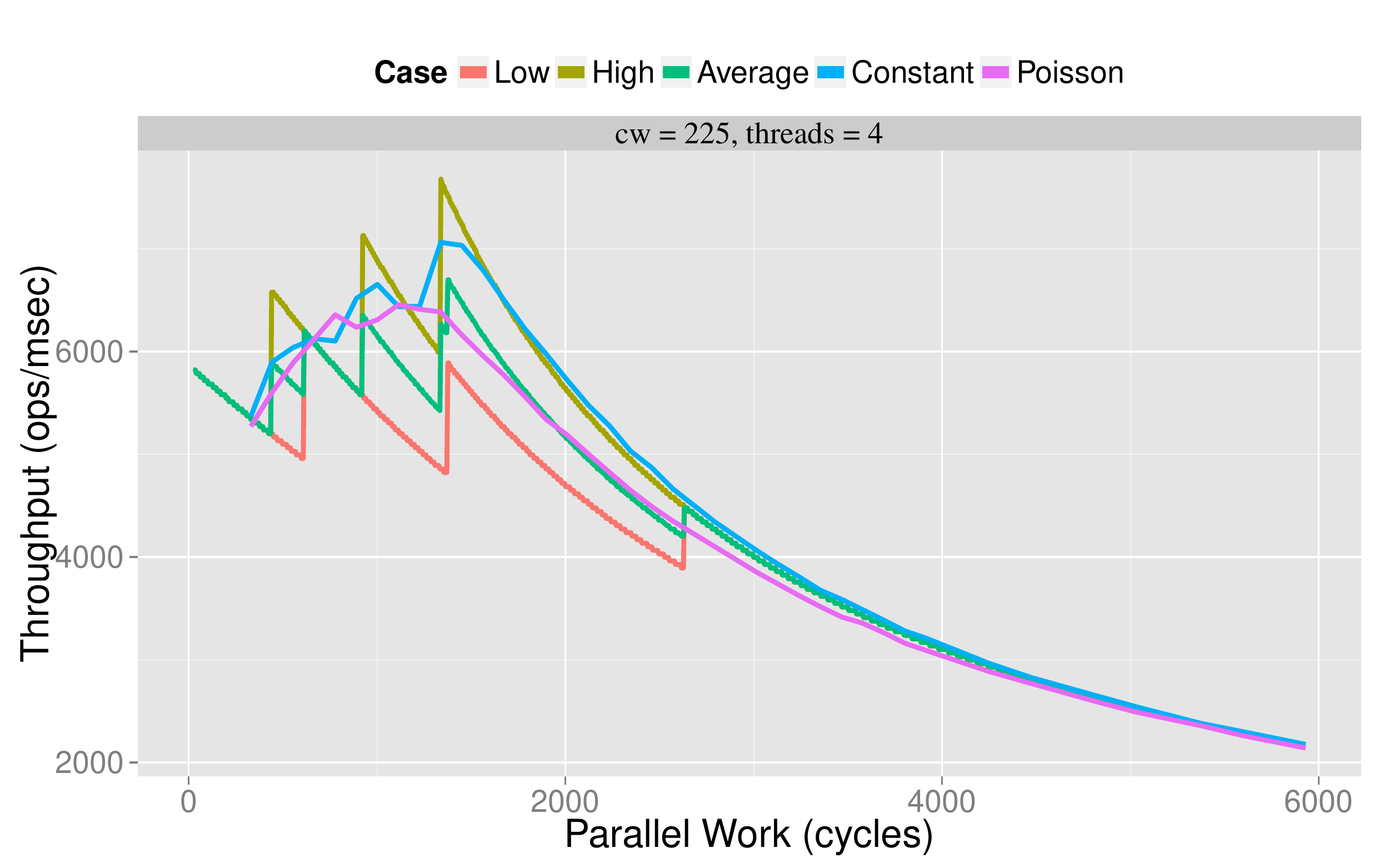}

(a) $4$ threads
\end{center}\end{minipage}\hfill\begin{minipage}{.47\textwidth}\begin{center}
\includegraphics[width=\textwidth]{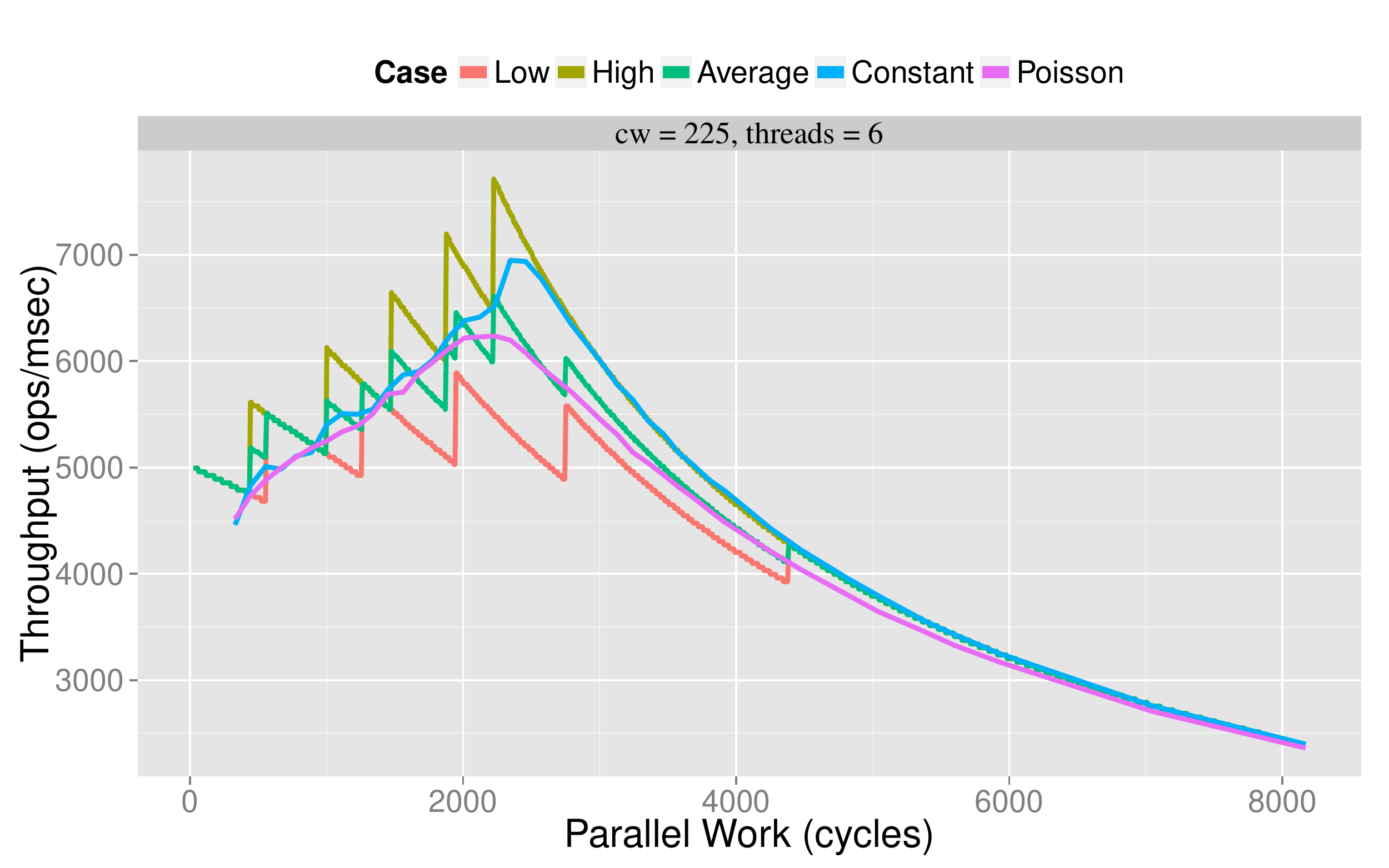}

(b) $6$ threads
\end{center}\end{minipage}

\includegraphics[width=.47\textwidth]{MS_8}

(c) $8$ threads
\end{center}
\caption{\enqop-\deqop on Michael and Scott queues\label{fig:ms}}
\end{figure}}

In order to demonstrate the validity of the model with several
\rls\pp{ (see~\cite{our-long})}, and that the results covers a wider
spectrum of application and designs from the ones we focused in our
model, we studied the following setting: the threads share a queue,
and each thread enqueues an element, executes the parallel section,
dequeues an element, and reiterates.
We consider the queue implementation by Michael and
Scott~\cite{lf-queue-michael}, that is usually viewed as the reference
queue while looking at lock-free queue implementations.

\deqop operations fit immediately into our model but \enqop operations need an adjustment due to
the helping mechanism. Note that without this helping mechanism, a simple queue
implementation would fit directly, but we also want to show that the model is
malleable, \ie the fundamental behavior remains unchanged even if we divert
slightly from the initial assumptions. We consider an equivalent execution that
catches up with the model, and use it to approximate the performance of the
actual execution of \enqop.\pp{ Due to lack of space, the full process is explained
in~\cite{our-long}.}

\rr{
\bigskip
\enqop is composed of two steps. Firstly, the new node is attached to the last
node of the queue via a \cas, that we denote by \casop{A}, leading to a
transient state. Secondly, the tail is redirected to point to the new node via
another \cas, that we denote by \casop{B}, which brings back the queue into a
steady state.

A new \enqop can not proceed before the two steps of previous success are
completed. The first step is the linearization point of operation and the second
step could be conducted by a different thread through the helping mechanism. In
order to start a new \enqop, concurrent \enqop{}s help the completion of the
second step of the last success if they find the queue in the transient
state. Alternatively, they try to attach their node to the queue if the queue is
in the steady state at the instant of check. This process continues until they
manage to attach their node to the queue via a retry loop in which state is checked
and corresponding \casop{} is executed.

The flow of an \enqop is determined by this state checks. Thus, an \enqop could
execute multiple \casop{B} (successful or failing) and multiple \casop{A}
(failing) in an interleaved manner, before succeeding in
\casop{A} at the end of the last \re. If we assume that both states are
equally probable for a check instant which will then end up with a retry, the
number of \casop{}s that ends up with a retry are expected to be
distributed equally among \casop{A} and \casop{B} for each thread. In addition,
each thread has a successful \casop{A} (which linearizes the \enqop)
and a \casop{B} at the end of the operation which could either be successful or
failed by a concurrent helper thread.

We imitate such an execution with an equivalent execution in which threads keep
the same relative ordering of the invocation, return from \enqop
together with same result. In equivalent execution, threads alternate between
\casop{A} and \casop{B} in their \res, and both steps of successful operation is
conducted by the same thread. The equivalent execution can be obtained by
thread-wise reordering of \casop{}s that leads to a \re and exchanging successful
\casop{B}s with the failed counterparts at the end of an \enqop, as the latter ones
indeed fail because of this success of helper threads. The model can be applied to
this equivalent execution by attributing each \casop{A}-\casop{B} couple to a single
iteration and represent it as a larger retry loop since the successful couple
can not overlap with another successful one and all overlapping ones fail. With
a straightforward extension of the expansion formula, we accomodate the \casop{A} in
the critical work which can also expand, and use \casop{B} as the \cas of our model.
\bigskip}

In addition, we take one step further outside the analysis by including a new
case, where the \ps follows a Poisson distribution, instead of being
constant. \rr{\pw is chosen as the mean to generate Poisson distribution instead of
taking it constant. }The results are illustrated in Figure~\ref{fig:ms}. Our
model provides good estimates for the constant \pw and also reasonable results
for the Poisson distribution case, although this case deviates from (/extends) our
model assumptions. The advantage of regularity, which brings synchronization to
threads, can be observed when the constant and Poisson distributions are compared. In the Poisson
distribution, the threads start to fail with larger \pw, which smoothes the curve
around the peak of the throughput curve.

}

\subsection{Discussion}
\label{sec:disc}

In this subsection we discuss the adequacy of our model, specifically
the cyclic argument, to capture the behavior that we observe in practice.
Figure~\ref{fig:failFreq} illustrates the frequency of occurrence of a given number of consecutive 
fails, together with average fails per success values and the throughput 
values, normalized by a constant factor so that they can be seen on the graph. In the background, the
frequency of occurrence of a given number of consecutive fails before success is presented. As a remark, the frequency of 6+ fails is gathered with
6. We expect to see a frequency distribution concentrated around the
average fails per success value, within the bounds computed by our
model.

\rr{\begin{figure}
\begin{center}
\fuckspaa\includegraphics[width=\textwidth]{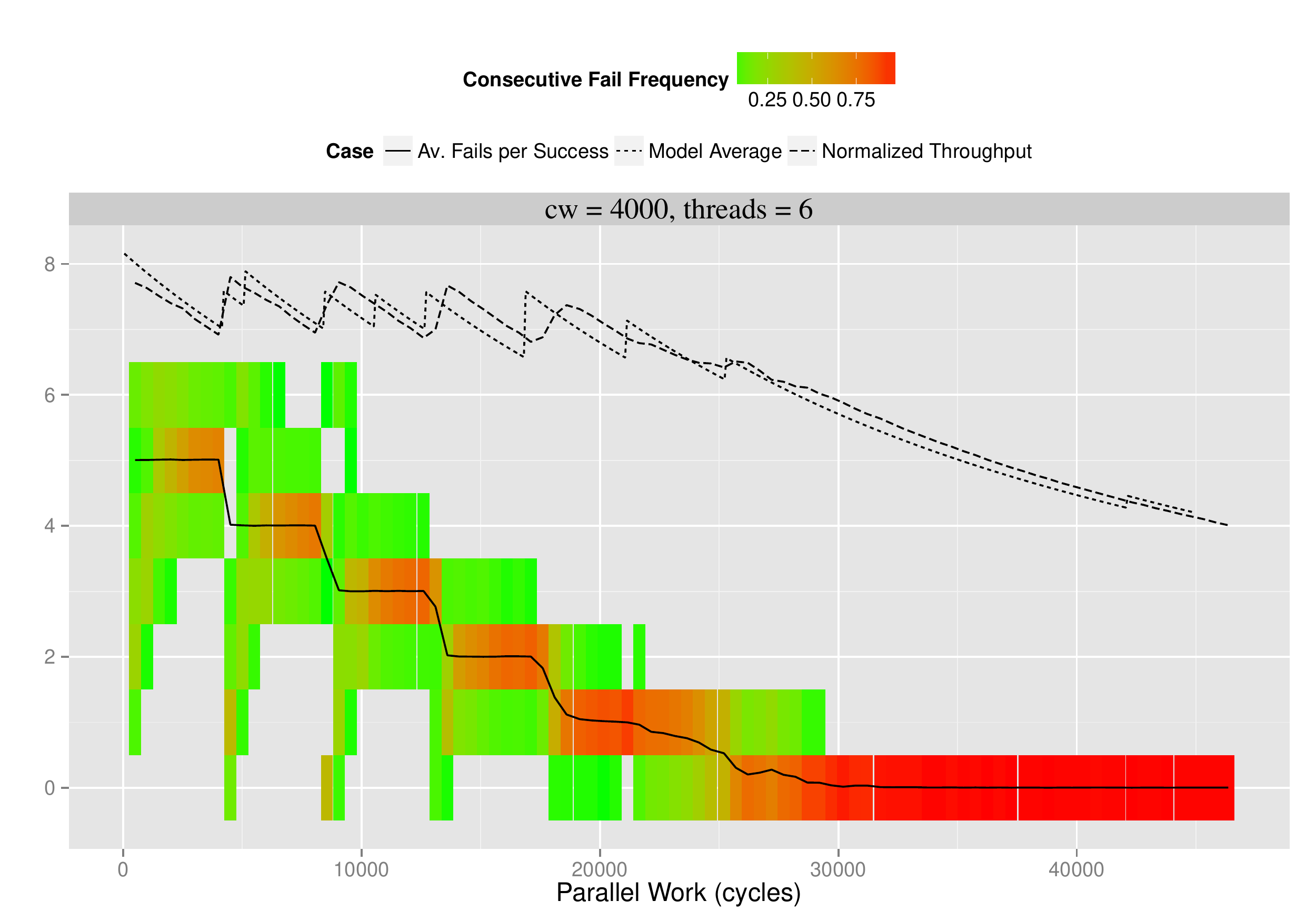}
\end{center}
\caption{Consecutive Fails Frequency\label{fig:failFreq}}
\end{figure}}

\pp{
\setlength{\intextsep}{0.1\baselineskip plus 0.5\baselineskip minus 0.5\baselineskip}
\begin{figure}[b!]
\begin{center}
\begin{minipage}[t]{.48\textwidth}
\includegraphics[width=\textwidth]{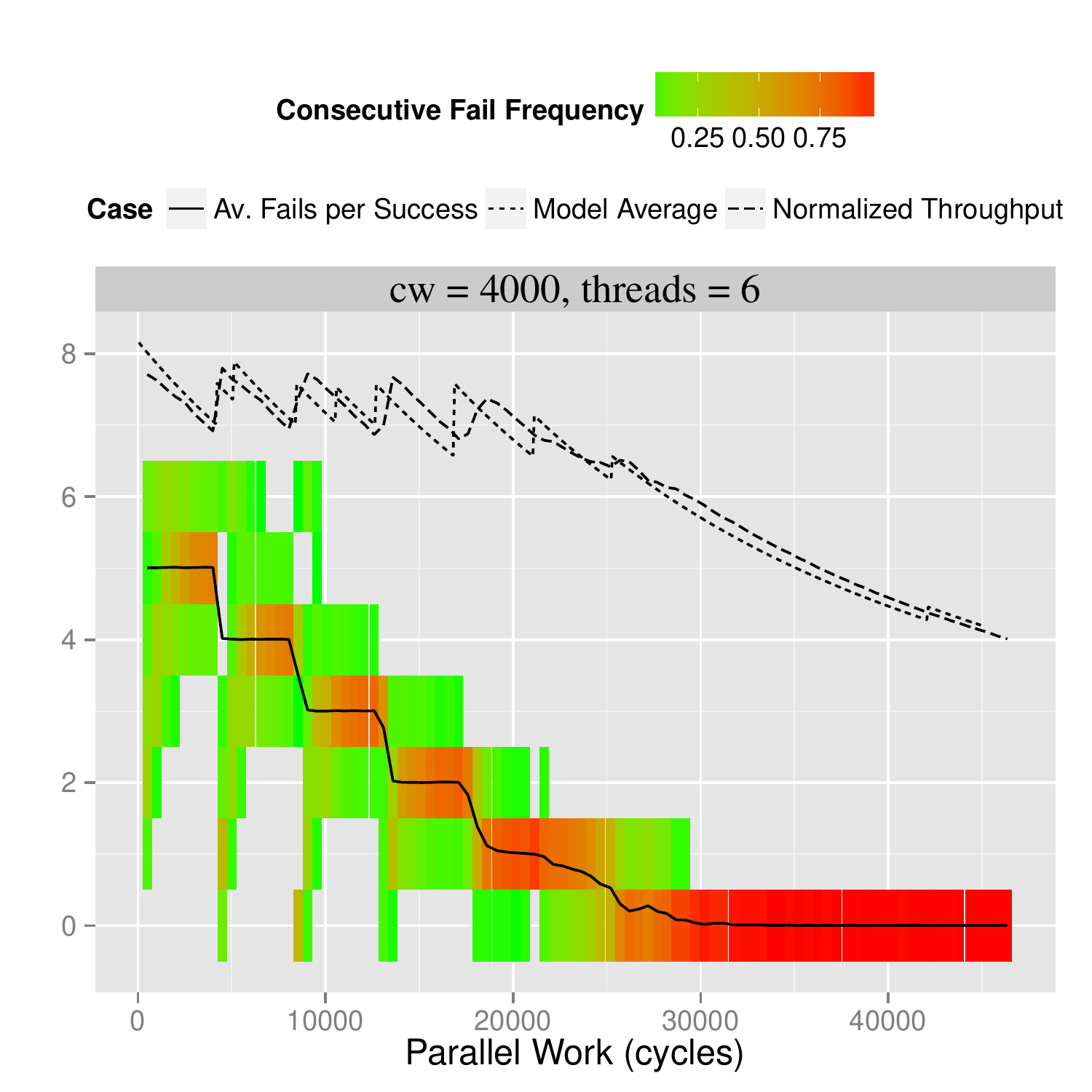}
\caption{Consecutive Fails Frequency\label{fig:failFreq}}
\end{minipage}\hfill\begin{minipage}[t]{.48\textwidth}
\includegraphics[width=\textwidth]{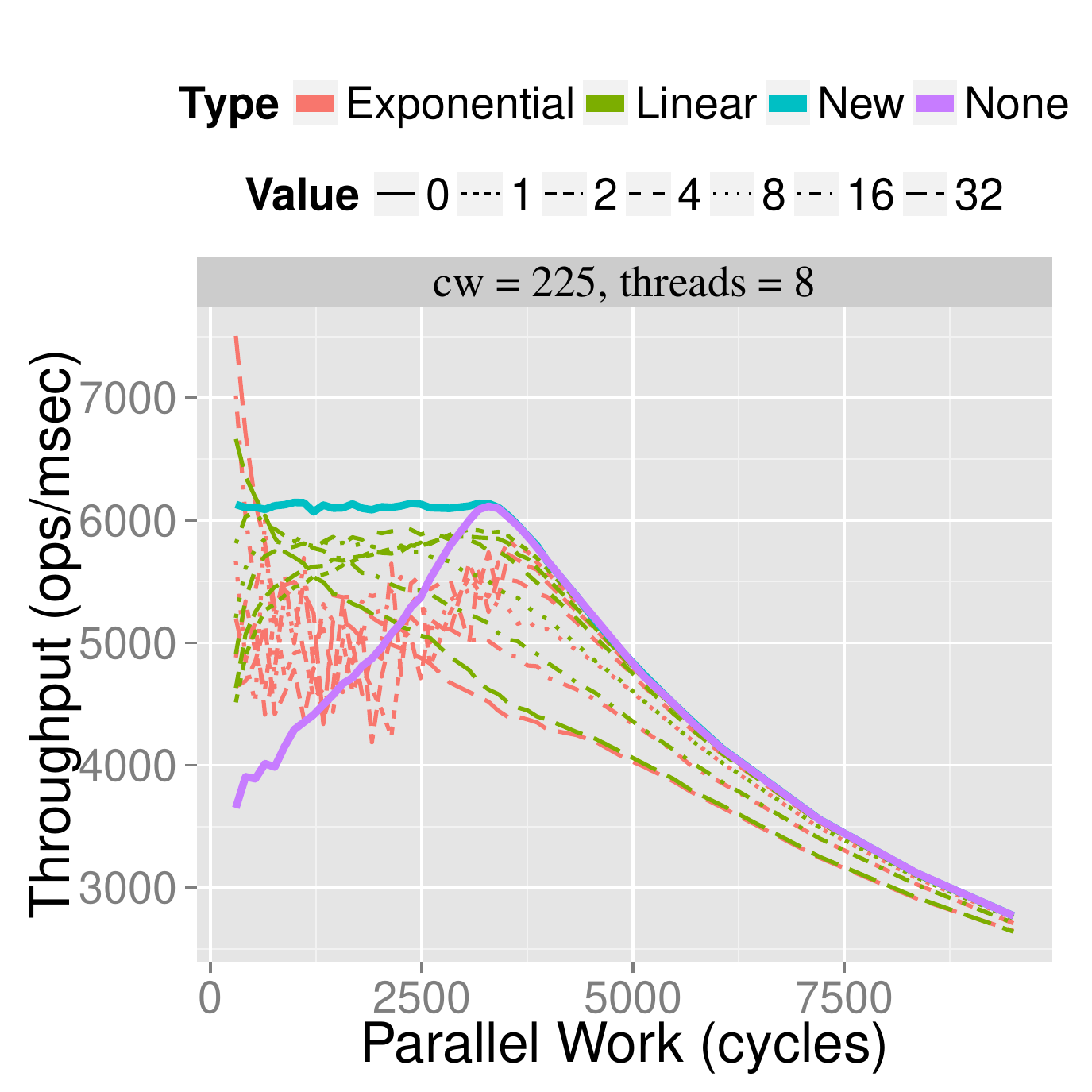}
\caption{Comparison of back-off schemes}
\label{fig:bo}
\end{minipage}\end{center}
\end{figure}}

While comparing the distribution of failures with the throughput, we
could conjecture that the bumps come from the fact that the failures
spread out. However, our model captures correctly the throughput
variations and thus strips down the right impacting factor.
The spread of the distribution of failures indicates the violation of
a stable cyclic execution (that takes place in our model), but in
these regions, $r$ actually gets close to $0$, as well as the minimum
of all gaps. The scattering in failures shows that, during the
execution, a thread is overtaken by another one. Still, as gaps are
close to $0$, the imaginary execution, in which we switch the two
thread IDs, would create almost the same performance effect. This
reasoning is strengthened by the fact that the actual average number
of failures follows the step behavior, predicted by our model.
This shows that even when the real execution is not cyclic and the
distribution of failures is not concentrated, our model that results
in a cyclic execution remains a close approximation of the actual
execution.

\subsection{Back-Off Tuning}
\label{sec:bo}

\rr{
\begin{figure}[h!]
\begin{center}
\begin{minipage}{.47\textwidth}\begin{center}
\includegraphics[width=\textwidth]{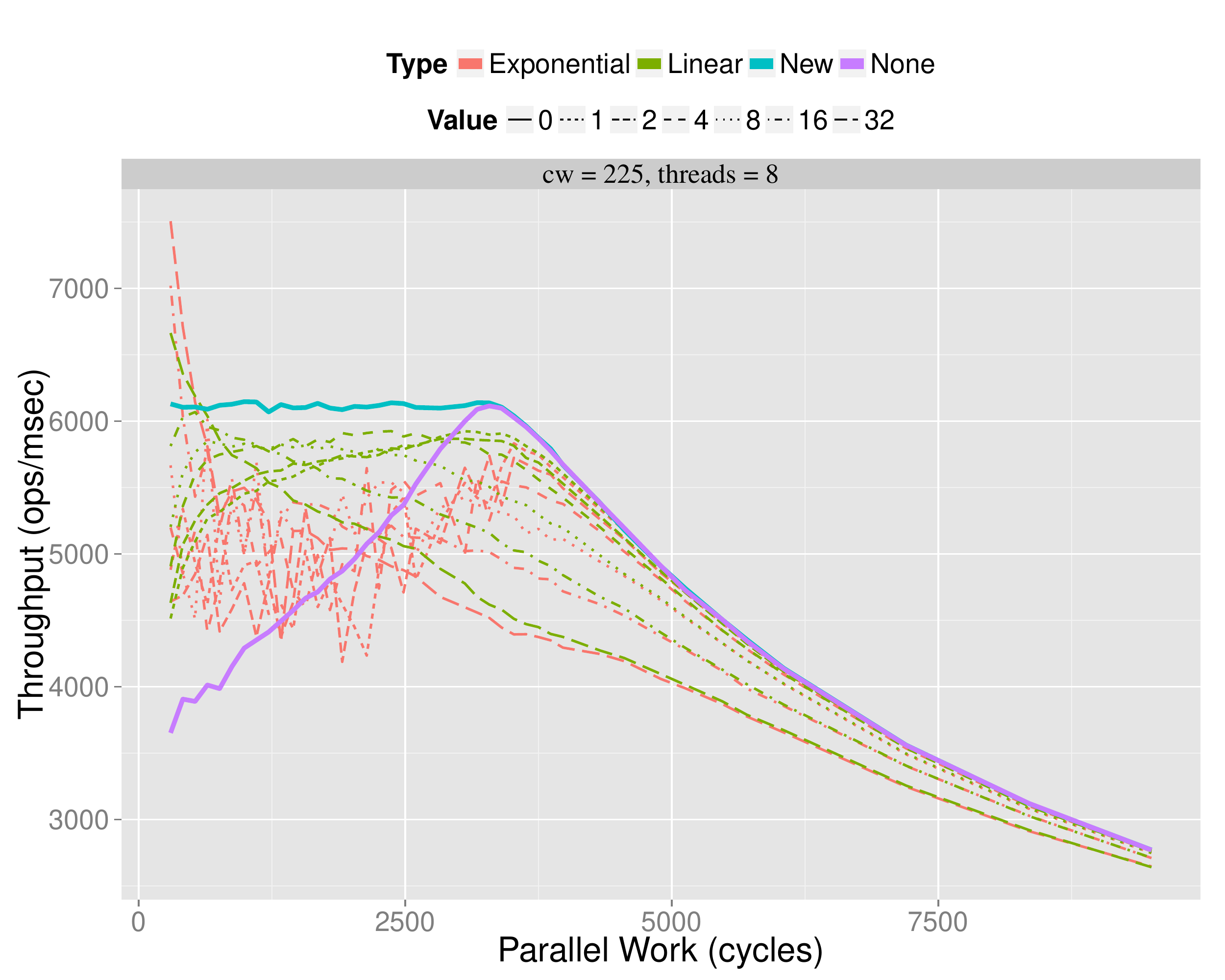}

(a) $8$ threads
\end{center}\end{minipage}\hfill\begin{minipage}{.47\textwidth}\begin{center}
\includegraphics[width=\textwidth]{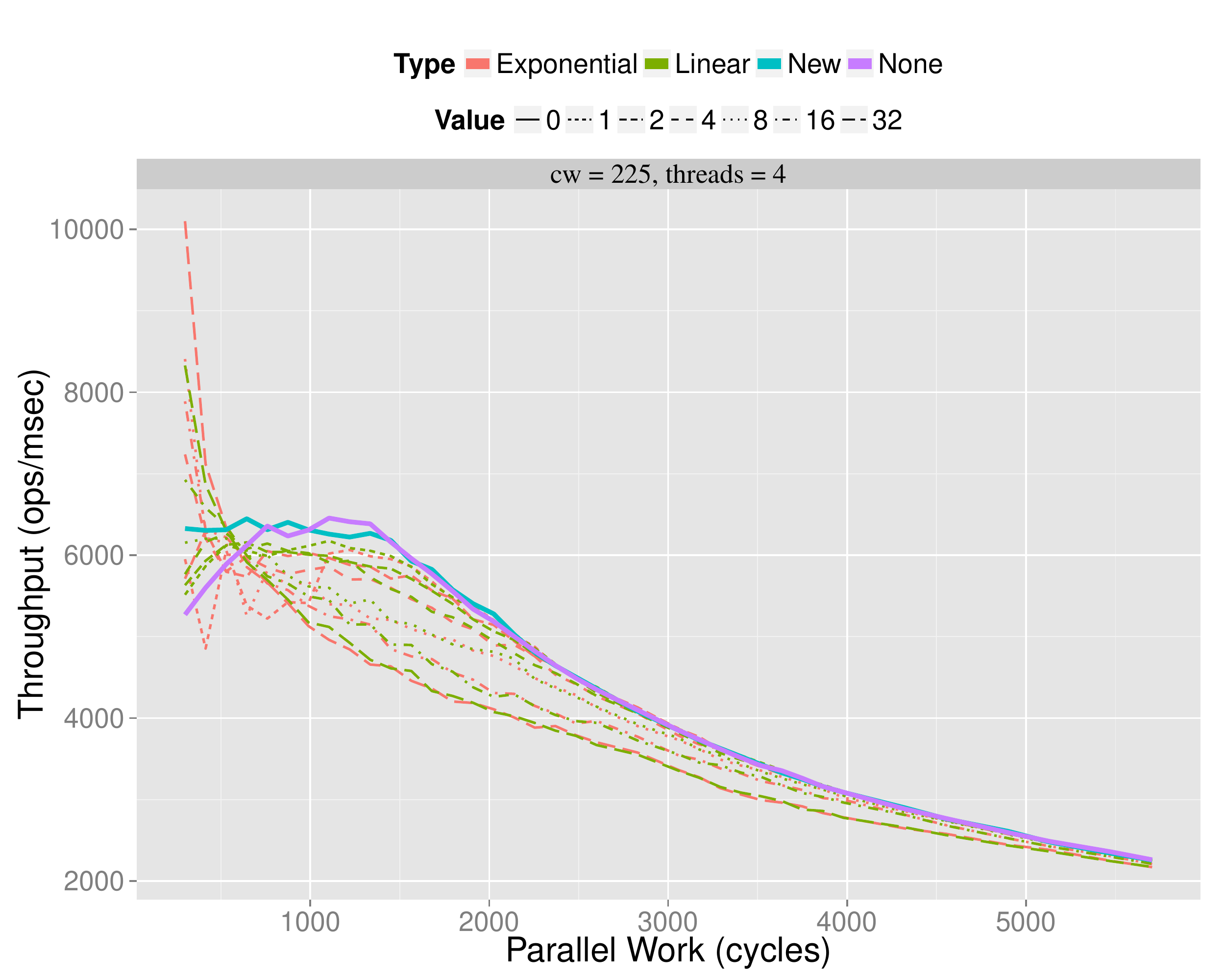}

(b) $4$ threads
\end{center}\end{minipage}
\end{center}
\caption{Comparison of back-off schemes for Poisson Distribution\label{fig:bo-poi}}
\end{figure}
}

Together with the analysis comes a natural back-off strategy: we estimate the
\pw corresponding to the peak point of the average curve, and when the \ps is smaller than the
corresponding \pw, we add a back-off in the \ps, so that the new \ps is at the
peak point.

We have applied exponential, linear and our back-off strategy to the
\enqop/\deqop experiment specified \pp{in~\cite{our-long} (sequence of 
\enqop and \deqop interleaved with \pss)}\rr{above}. Our back-off estimate
provides good results for both types of distribution. In
Figure\pp{~\ref{fig:bo}}\rr{~\ref{fig:bo-poi}} (where the values of
back-off are steps of $115$ cycles), the comparison is plotted for the
Poisson distribution, which is likely to be the worst for our
back-off. Our back-off strategy is better than the other, except for
very small \pss, but other back-off strategies should be tuned for
each value of \pw.

\rr{
We obtained the same shapes while removing the distribution law and considering constant values.
The results are illustrated in Figure~\ref{fig:bo-con}.

\begin{figure}[h!]
\begin{center}
\begin{minipage}{.47\textwidth}\begin{center}
\includegraphics[width=\textwidth]{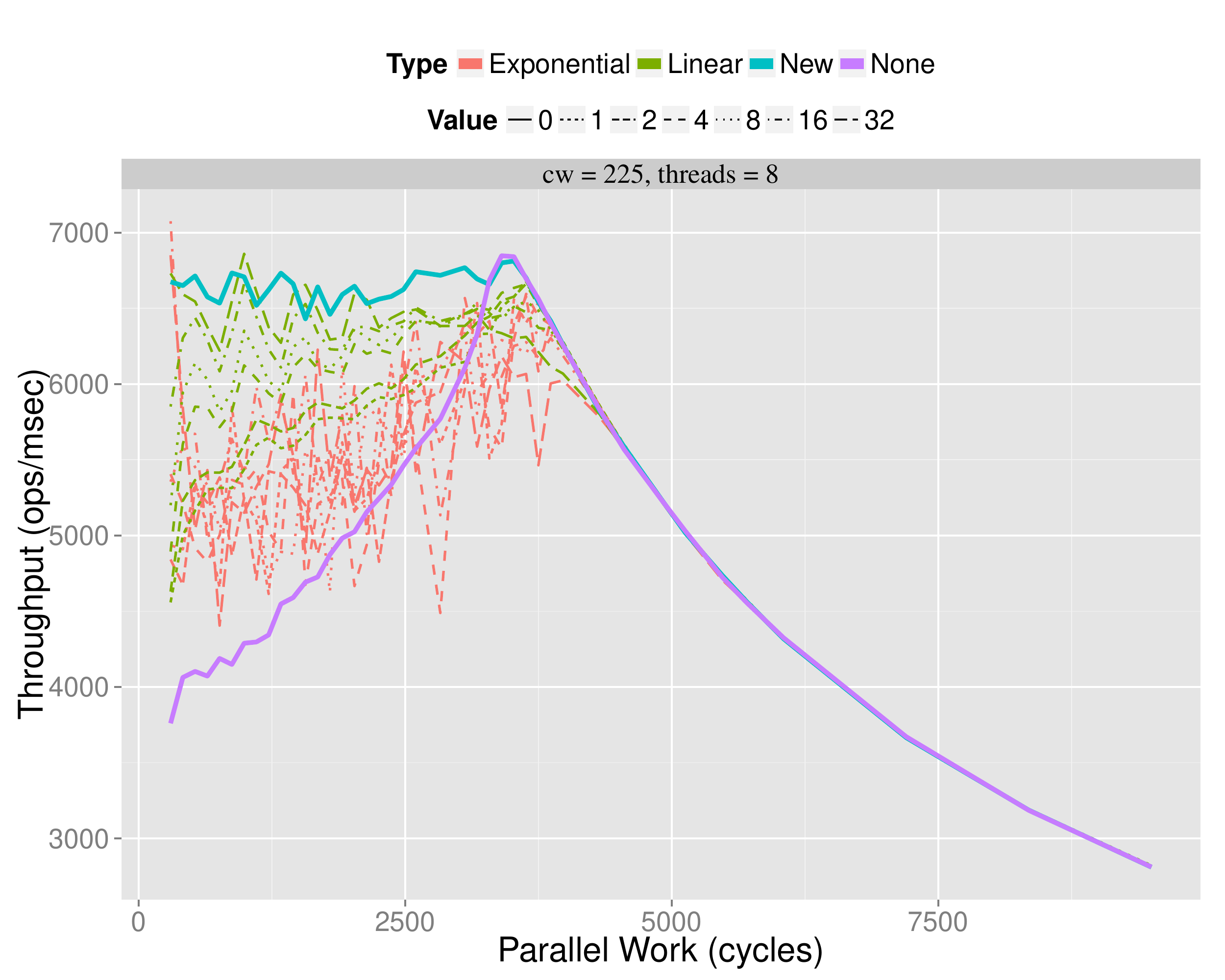}

(a) $8$ threads
\end{center}\end{minipage}\hfill\begin{minipage}{.47\textwidth}\begin{center}
\includegraphics[width=\textwidth]{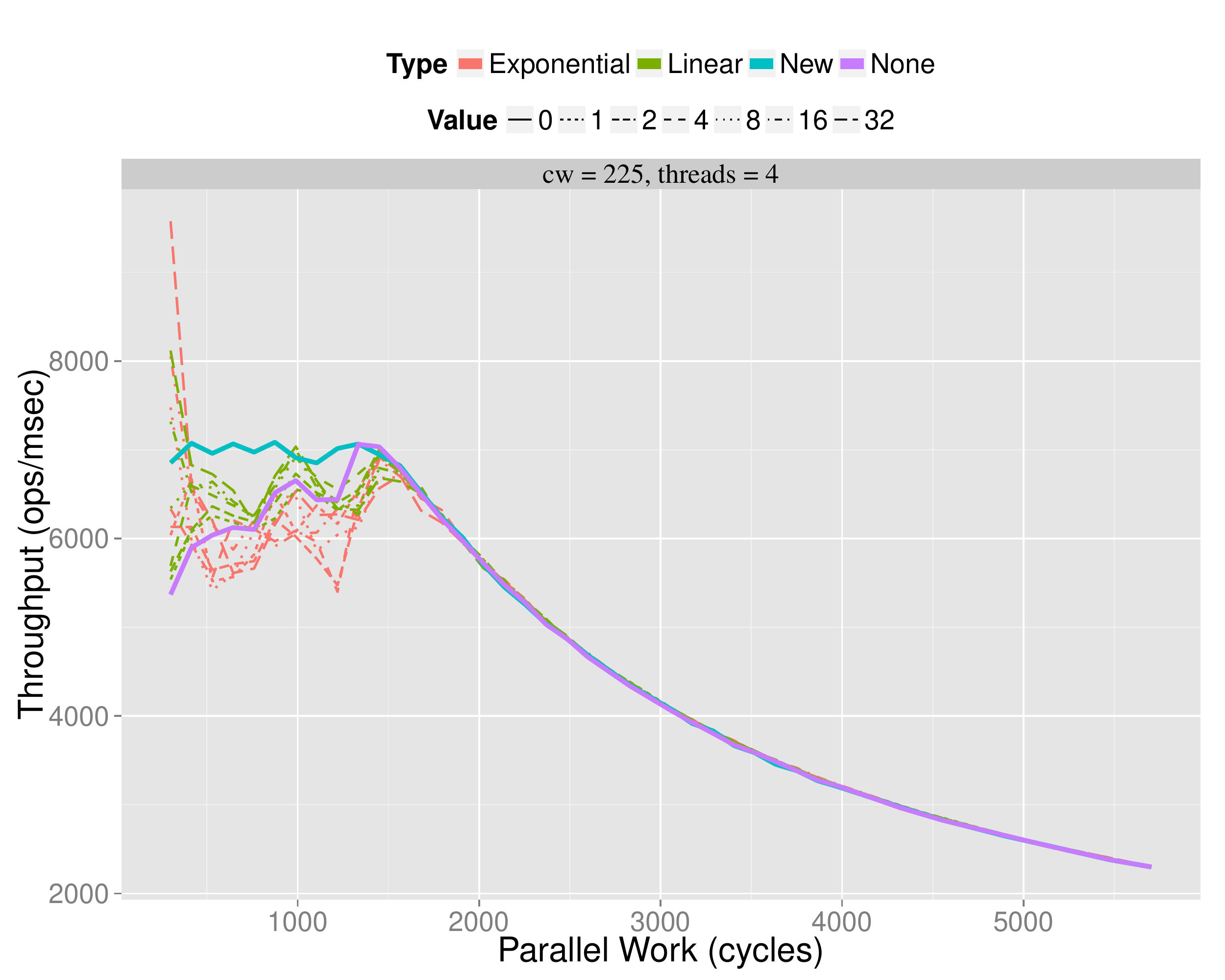}

(b) $4$ threads
\end{center}\end{minipage}
\end{center}
\caption{Comparison of back-off schemes for constant \pw \label{fig:bo-con}}
\end{figure}

}

\section{Conclusion}

In this paper, we have modeled and analyzed the performance of a general class of lock-free
algorithms, and have so been able to predict the throughput
of such algorithms, on actual executions. The analysis rely on the estimation of
two impacting factors that lower the throughput: on the one hand, the expansion,
due to the serialization of the atomic primitives that take place in the \rls; on
the other hand, the \cacas, due to a non-optimal synchronization between the
running threads. We have derived methods to calculate those parameters, along with
the final throughput estimate, that is calculated from a combination of these two
previous parameters. As a side result of our work,
this accurate prediction enables the design of a back-off
technique that performs better than other well-known techniques, namely linear
and exponential back-offs.

As a future work, we envision to enlarge the domain of validity of the
model, in order to cope with \dss whose operations do not have
constant \rl, as well as the framework, so that it includes more various
access patterns.
The fact that our results extend outside the model allows us to be optimistic
on the identification of the right impacting factors. Finally, we also foresee studying back-off techniques
that would combine a back-off in the \ps (for lower contention) and in the \rls (for
higher robustness).

\pp{\bibliographystyle{splncs03}}
\rr{\bibliographystyle{alpha}}
\rr{\bibliography{bibliography/longhead,bibliography/biblio}}
\pp{\bibliography{bibliography/longhead,bibliography/biblio}}

\end{document}